\documentclass{article}

% Language setting
% Replace `english' with e.g. `spanish' to change the document language
\usepackage[english]{babel}

% Set page size and margins
% Replace `letterpaper' with`a4paper' for UK/EU standard size
\usepackage[letterpaper,top=2cm,bottom=2cm,left=3cm,right=3cm,marginparwidth=1.75cm]{geometry}

% Useful packages
\usepackage{algorithm}
\usepackage{algpseudocode}
\usepackage{amsmath}
\usepackage{amssymb}
\usepackage{amsthm}
\usepackage{xy}
\xyoption{matrix}
\xyoption{frame}
\xyoption{arrow}
\xyoption{arc}

\newtheorem{property}{Property}

\newtheorem{lemma}{Lemma}
\newcommand{\Endproof}{\hfill$\Box$\\}
\usepackage{mathtools}
\usepackage{amsmath}

\usepackage{xcolor} %% это необходимо для выделения цветом тескта
\definecolor{ForestGreen}{rgb}{	0.0,	0.478,	0.125} %% цвет Наили. Выделила то, что добавила, и то, что нужно будет добавить
\definecolor{Red}{rgb}{	0.9,	0.0, 0.0}
\usepackage{graphicx}
\usepackage[colorlinks=true, allcolors=blue]{hyperref}
\newcommand{\Beginproof}{{\em Proof.}  }
\DeclareMathOperator{\nand}{\texttt{ NAND }}
\newcommand{\bigo}[1]{{O\left({#1}\right)}}

\newcommand{\bra}[1]{{\left\langle{#1}\right\vert}}
\newcommand{\ket}[1]{{\left\vert{#1}\right\rangle}}
    % Defines Dirac notation. %7/5/07 added extra braces so that the commands will work in subscripts.
\newcommand{\qw}[1][-1]{\ar @{-} [0,#1]}
    % Defines a wire that connects horizontally.  By default it connects to the object on the left of the current object.
    % WARNING: Wire commands must appear after the gate in any given entry.
\newcommand{\qwx}[1][-1]{\ar @{-} [#1,0]}
    % Defines a wire that connects vertically.  By default it connects to the object above the current object.
    % WARNING: Wire commands must appear after the gate in any given entry.

    % Defines a classical wire that connects horizontally.  By default it connects to the object on the left of the current object.
    % WARNING: Wire commands must appear after the gate in any given entry.

    % Defines a classical wire that connects vertically.  By default it connects to the object above the current object.
    % WARNING: Wire commands must appear after the gate in any given entry.
\newcommand{\gate}[1]{*+<.6em>{#1} \POS ="i","i"+UR;"i"+UL **\dir{-};"i"+DL **\dir{-};"i"+DR **\dir{-};"i"+UR **\dir{-},"i" \qw}
    % Boxes the argument, making a gate.

    % Inserts a measurement meter.
    % In case you're wondering, the constants .778em and .322em specify
    % one quarter of a circle with radius 1.1em.
    % The points added at + and - <2.2em,2.2em> are there to strech the
    % canvas, ensuring that the size is unaffected by erratic spacing issues
    % with the arc.

    % Inserts a measurement bubble with user defined text.

    % Inserts a measurement tab with user defined text.

    % Inserts a D-shaped measurement gate with user defined text.

    % Draws a multiple qubit measurement bubble starting at the current position and spanning #1 additional gates below.
    % #2 gives the label for the gate.
    % You must use an argument of the same width as #2 in \ghost for the wires to connect properly on the lower lines.

    % Draws a multiple qubit D-shaped measurement gate starting at the current position and spanning #1 additional gates below.
    % #2 gives the label for the gate.
    % You must use an argument of the same width as #2 in \ghost for the wires to connect properly on the lower lines.
\newcommand{\control}{*!<0em,.025em>-=-<.2em>{\bullet}}
    % Inserts an unconnected control.

    % Inserts a unconnected control-on-0.
\newcommand{\ctrl}[1]{\control \qwx[#1] \qw}
\newcommand{\lstick}[1]{*!R!<.5em,0em>=<0em>{#1}}
    % Centers the right side of #1 in the cell.  Intended for lining up wire labels.  Note that non-gates have default size zero.

    % Centers the bottom of #1 in the cell.  Intended for lining up wire labels.  Note that non-gates have default size zero.

    % Centers the top of #1 in the cell.  Intended for lining up wire labels.  Note that non-gates have default size zero.
\newcommand{\Qcircuit}{\xymatrix @*=<0em>}
    % Defines \Qcircuit as an \xymatrix with entries of default size 0em.

    % Draws a wire or connecting line to the element #1 rows down and #2 columns forward.

    % Same as \ghost except it omits the wire leading to the left. 

\newcommand{\braket}[2]{{\langle {#1}\!\mid\!{#2} \rangle}}
\newcommand{\Hilbert}{{\mathcal H}}

\title{Lecture Notes on Quantum Algorithms}
\author{Kamil Khadiev}
\date{Kazan Federal University, Kazan, Russia}
\begin{document}
\maketitle

\begin{abstract}
The lecture notes contain three parts. The first part is Grover's Search Algorithm with modifications, generalizations, and applications. The second part is a discussion on the quantum fingerprinting technique. The third part is Quantum Walks (discrete time) algorithm with applications.
\end{abstract}

\tableofcontents

\section{Introduction}

Preliminaries of this lecture notes are based on \cite{aazksw2019part1,nc2010}.
%%%%%%%%%%%%%%%%%%%%%%%%%%%%%%%%%%%%%%%%%%%%%%%

%%%%%%%%%%%%%%%%%%%%%%%%%%%%%%%%%%%%%%%%%%%%%%%%
\section{Notations}
\begin{itemize}
    
    \item $\mathbb{Z}$ is the set of integers.
     \item $\mathbb{Z}_+$ is the set of positive integers.
     \item $\mathbb{C}$ is the set of complex numbers.
      \item $\mathbb{R}$ is the set of real numbers.
      \item $[K]=\{0,\dots,K-1\}$, for some $K\in \mathbb{Z}_+$.
      \item $\mathcal{H}^k$ is the $k$-dimensional Hilbert space. 
\item $|K|$ is a power of the set $[K]$.
\end{itemize}
\begin{itemize}
\item $\log a=\log_2 a$.
\item $e=exp(1)$ that is a  base of a natural logarithm or an  Euler's number.
\item A norm of $a=(a_1, \dots, a_d)$ is $||a||=\sqrt{\sum_{i=1}^d |a_i|^2}$.
\item $\delta_{a,b}$ is a function such that $\delta_{a,b}=1$ if $a=b$, $\delta_{a,b}=0$ if $a\neq b$.
\item  $\ket{a}$ is a column vector $a$.
\item $\bra{b}=(\ket{b}^T)^*$ is a row vector $b^*$.
\item $\braket{a}{b}$ is an inner product of $\ket{a}$ and $\ket{b}$.
\item $\ket{a} \otimes \ket{b}$ is a tensor product of vectors $\ket{a}, \ket{b}$.
For $\ket{a}=(a_1, \dots, a_d)^T$ and $\ket{b}=(b_1,\dots, b_l)^T$, we have
$\ket{a} \otimes \ket{b} =(a_1b_1, a_1b_2, \dots ,a_1b_l, a_2b_1,
\dots, a_db_l)^T$.
\item $\ket{a}\ket{b} = \ket{a b} = \ket{a}\otimes\ket{b}$
\item $\ket{a}\bra{b} = \ket{a}\otimes\bra{b}$,  
$(\ket{a} \otimes \bra{b})[i][j] = a_ib_j.$ That is, $\ket{a}\bra{b}$ is $d\times l$ matrix $C$ with entries $c_{ij}=a_ib_j$. 
\item $A \otimes B$ is a tensor product of matrices $A, B$
\item $A^{\otimes d}=\underbrace{A\otimes A\otimes\cdots\otimes A}\limits_d$.
\item $(A\otimes B)(\ket{\phi}\otimes\ket{\psi}) = A\ket{\phi}\otimes
B\ket{\psi}$.
\item $f(n)=O(g(n))$ if there are constants $c>0$ and $n_0>0$ such that $0 \leq f(n)\leq c\cdot g(n)$ for all $n \geq n_0$.
\item $f(n)=\Omega(g(n))$ if there are constants $c>0$ and $n_0>0$ such that  $0 \leq c\cdot g(n) \leq f(n))$ for all $n \geq n_0$.
\item $f(n)=\Theta(g(n))$ if $f(n)=O(g(n))$ and $f(n)=\Omega(g(n))$.
\end{itemize}

\section{Basics of Quantum Computing}
\subsection{Qubit.}
The notion of quantum bit  (qubit) is the basis of quantum computations. Qubit is the quantum version of the classical binary bit physically realized with a two-state device. There are two possible outcomes for the measurement of a qubit usually taken to have the value "0" and "1", like a bit or binary digit. However, whereas the state of a bit can only be either 0 or 1, the general state of a qubit according to quantum mechanics can be a coherent superposition of both. This allows us to compute $0$ and $1$ simultaneously. Such a phenomenon is known as quantum parallelism. 

Formally the qubit's state is the column vector $|\psi\rangle$ from two dimensional Hilbert space $\Hilbert^2$, i.e.
\begin{equation}\label{qubit}\ket{\psi} = \alpha\ket{0} + \beta\ket{1},\end{equation}
Here the pair of vectors $\ket{0}$ and $\ket{1}$ is an orthonormal basis of 
$\Hilbert^2$, 
where $\alpha,\beta\in \mathbb{C}$ such that 
$|\alpha|^2 + |\beta|^2 = 1$.  The numbers $\alpha,\beta$ are called amplitudes.
%
%So, the state  $\ket{\psi}$ (\ref{qubit}) of qubit encodes a superposition of $0$ and $1$.  
\begin{figure}[h]
\begin{center}
\includegraphics[scale=0.2]{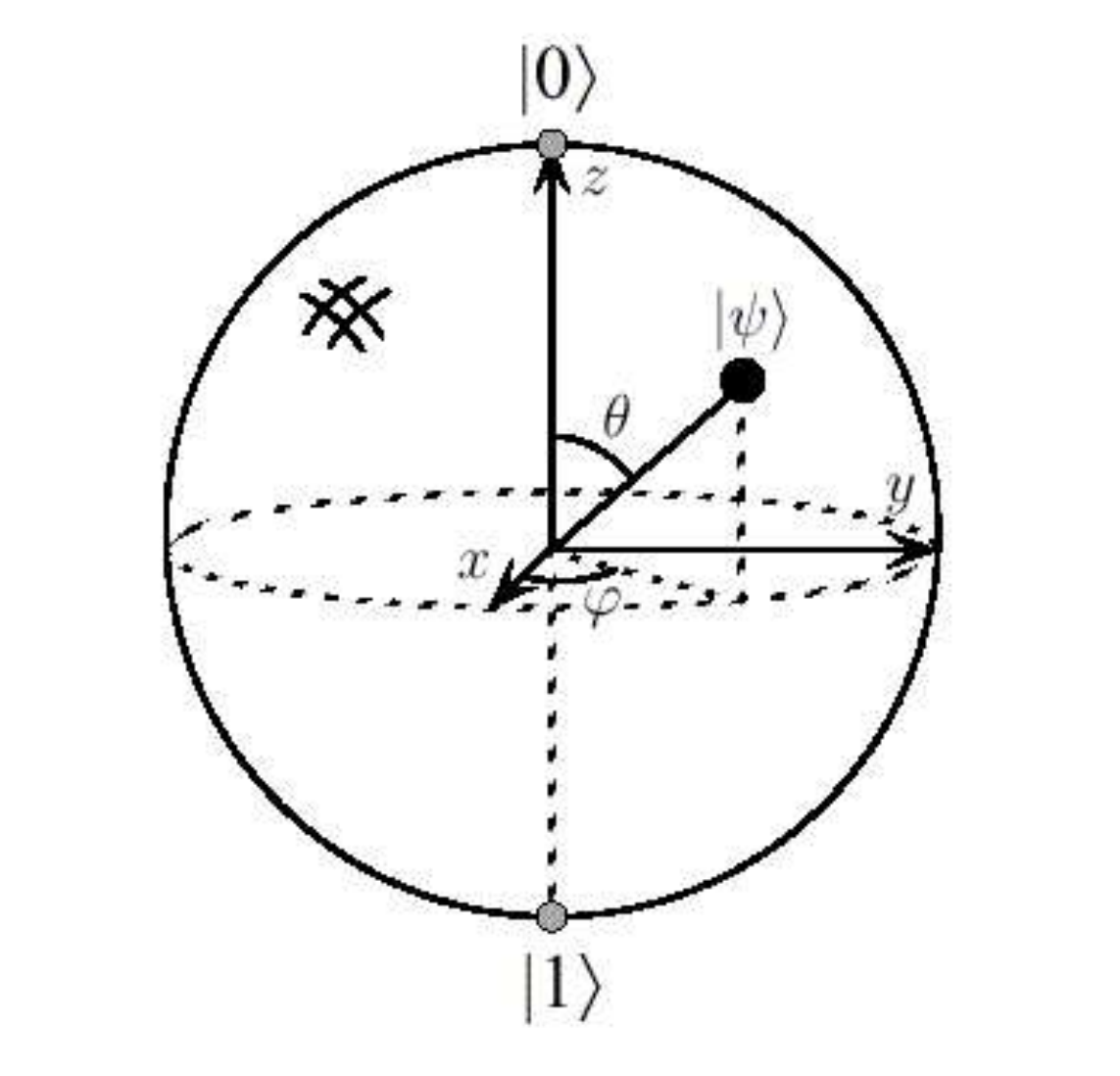}
\caption{The Bloch sphere.}\label{fig:BlochSphere}
\end{center}
\end{figure}
The Bloch sphere is a representation of a qubit's state as a point on a three-dimensional unit sphere (Fig. \ref{fig:BlochSphere}).
Let us consider a qubit in a state (\ref{qubit}) such that  $|\alpha|^2 + |\beta|^2 = 1$. Then we can represent amplitudes as
%
%\begin{equation}\alpha=e^{i\gamma}\cos{\theta'}, \beta=e^{i\phi'}\sin{\theta'},\end{equation} 
%where $0\leq\gamma<2\pi$, $0\leq\theta'\leq\pi/2$ and $0\leq\phi'<2\pi$ are real numbers.
%
%Let $\phi=\phi'-\gamma$. Because of $|e^{i\gamma}|^2=1$ we have 
%\begin{equation}\ket{\psi} = \cos{\theta'}\ket{0} + e^{i\phi}\sin{\theta'}\ket{1}.\end{equation}
%
%Let $e^{i\phi}\sin{\theta'}= x + iy$ and $z = \cos{\theta'}$. Then $x^2 + y^2 + z^2 = 1$ and the state of the qubit can be represented as a point on the unit sphere. Here $\theta'$ and $\phi$ are polar angels of the point on the sphere.
% $(\sin\theta'\cos\phi,
%\sin\theta'\sin\phi, \cos\theta')$
%
%Note that opposite point with angles $\pi-\theta'$ and $\pi+\phi$ such that
%\begin{equation}\begin{array}{rcl}
%\ket{\psi'} & = & \cos(\pi-\theta')\ket{0} + e^{i(\pi+\phi)}\sin(\pi-\theta')\ket{1}\\ & = & -\cos{\theta'}\ket{0} - e^{i\phi}\sin{\theta'}\ket{1}\\
%& = & -\ket{\psi}.\end{array}\end{equation}
%
%That is why it is enough to consider the bottom half sphere ($0\leq\theta'\leq\pi/2$). Let $\theta=2\theta'$ so the state of a quantum qubit is
%
\begin{equation}\ket{\psi} = \cos\frac{\theta}{2}\ket{0} + e^{i\phi}\sin\frac{\theta}{2}\ket{1},\mbox{ where $0\leq\phi<2\pi$, $0\leq\theta\leq\pi$}\end{equation}
\begin{figure}[h]
\begin{center}
\includegraphics[scale=0.5]{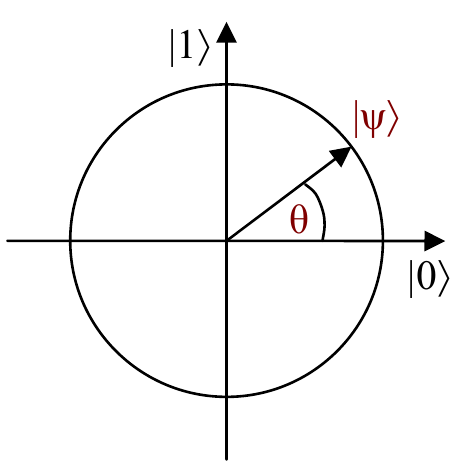}
\caption{The representation of a qubit's state with real-value amplitudes.}\label{fig2}
\end{center}
\end{figure}
If we consider only real values for $\alpha$ and $\beta$ then the state of a qubit is a point on a unit circle (Fig. \ref{fig2}). In a case of real-value amplitudes, the state of the qubit is
\begin{equation}
\ket{\psi}=\cos\theta\ket{0}+\sin\theta\ket{1}\mbox{, where $\theta\in[0,2\pi)$}.
\end{equation}

\subsection{States of Quantum Register}
 A quantum register is an isolated quantum mechanical system composed of several $n$ qubits (quantum $n$-qubits register), where  $n\ge 1$. In the case of a quantum $n$-qubits register, we have a superposition of $2^n$ states $B=\{(0\dots0), \dots, (1\dots1)\}$. This allows us to compute the set of $B$ states simultaneously. This phenomenon of quantum parallelism is a potential advantage of quantum computational models.
%end_color

Formally, a quantum state $\ket{\psi}$ of a quantum  $n$-qubits register is described as follows.  Let $\sigma=\sigma_1\dots\sigma_n$ be a binary sequence. Then,   we denote tensor product $\ket{\sigma_1}\otimes\ket{\sigma_2}\otimes\dots\otimes\ket{\sigma_n}$ by $\ket{\sigma}$.  Let  ${\cal B}=\{\ket{00\ldots0}$, $\ket{00\ldots1}$, \ldots,
$\ket{bin(i-1)}$, \ldots, $\ket{11\ldots1}\}$ be a set of orthonormal vectors, where $bin(i)$ is a binary representation of $i$. A set ${\cal B}$ forms a basis for $2^n$ dimensional Hilbert space $\Hilbert^{2^n}$. We also denote basis vectors of ${\cal B}$ for short as   $\ket{0}, \dots, \ket{2^n-1}$. Usually, the set ${\cal B}$ is called computational basis. 

A quantum state $\ket{\psi}$ of a quantum  $n$-qubits register   is a  complex valued unit vector in $2^n$-dimensional Hilbert space $\Hilbert^{2^n}$ that is described as a linear combination of basis vectors $\ket{i}$, $i\in \{0,\dots,2^n-1\}$:
\begin{equation}\label{amplitudes}
\ket{\psi}=\sum\limits_{i=0}^{2^n-1}\alpha_i\ket{i}, \quad \mbox{ with } \quad 
\sum\limits_{i=0}^{2^n-1}|\alpha_i|^2=1.\end{equation}
When we measure an $n$-qubits register of the state $\ket{i}$ with respect to ${\cal B}$, we can say that $|\alpha_i|^2$ expresses a probability to find the register in the state $\ket{i}$.
We say that the state $\ket{\psi}$ is a superposition of basis vectors  $\ket{i}$ with amplitudes $\alpha_i$. We also use notation $(\Hilbert^2)^{\otimes n}$ for Hilbert space $\Hilbert^{2^n}$ to outline the fact that its vectors represent the states of a quantum $n$-qubits register.

If a state $\ket{\psi}\in (\Hilbert^2)^{\otimes n}$ can be decomposed to a tensor product of single  qubits
%\begin{equation}
\[\ket{\psi} = \ket{\psi_1}\otimes\ket{\psi_2}\otimes\cdots\otimes\ket{\psi_n},\]
%\end{equation}
then we say that $\ket{\psi}$ is  {\em not entangled}. Otherwise, 
 the state is called  \emph{entangled}. EPR-pairs are an example of entangled states.

%\paragraph{Comment.} It is important once again to outline the difference between classical and quantum $n$-registers.  In the classical world, a state $\sigma$ of $n$-register is a binary $n$ length sequence $\sigma\in B$. While in the quantum world a state $\ket{\psi}$ (\ref{amplitudes}) of $n$-register is a  quantum superposition of all its basis states.  There are several interpretations of the formula (\ref{amplitudes}).  One interpretation of quantum mechanics describes quantum superposition as the phenomenon that the quantum state is simultaneously in all its basis states, which are manifested in accordance with their amplitudes when measuring the state.

\subsection{Transformations of a Quantum States.}
Quantum mechanic postulates that
transformations  of quantum states $\ket{\psi}\in (\Hilbert^2)^{\otimes n }$ (of quantum $n$-qubits register) are mathematically  determined by unitary operators:
\begin{equation}\label{operator}
\ket{\psi'} = U\ket{\psi},
\end{equation}
where  $U$ is a $2^n\times 2^n$ unitary matrix for transforming a vector that represents a state of a quantum $n$-qubits register. 
%This transformation  will be also presented  in the form  
% \[ U:\ket{\psi} \to \ket{\psi'}. \]
 
%The problem of realizing an effective quantum procedure $QP$ (based on the invented quantum algorithm $QA$) for solving the problem can be formulated as follows. First. A model of computation is selected. Second. A suitable quantum basis for the data representation of the problem should be chosen. Thirdly. A sequence of operators of the form realizing the transformation of the quantum system based on the quantum parallelism effect is developed

A unitary matrix $U$ can be written in the exponential form:
$$U = e^{iW},$$

where $W$ is a Hermitian matrix. Learn more from \cite{aHermitian2017}.

\subsection{Basic Transformations of  Quantum States}
Basic transformations of a quantum register use the following notations in the quantum mechanic. 
Qubit transformations are rotations on an angle $\theta$ around $\hat{x}$, $\hat{y}$
and $\hat{z}$ axis of the Bloch sphere:
\begin{equation}\begin{array}{l}
R_{\hat{x}}(\theta)=\left(
\begin{array}{cc}
\cos\frac{\theta}{2} & -i\sin\frac{\theta}{2}\\
-i\sin\frac{\theta}{2} & \cos\frac{\theta}{2}
\end{array}\right);\\[0.5cm]
R_{\hat{y}}(\theta)=\left(
\begin{array}{cc}
\cos\frac{\theta}{2} & -\sin\frac{\theta}{2}\\
\sin\frac{\theta}{2} & \cos\frac{\theta}{2}
\end{array}\right);\\[0.5cm]
R_{\hat{z}}(\theta)=\left(
\begin{array}{cc}
e^{-\frac{i\theta}{2}} & 0\\
0 & e^{\frac{i\theta}{2}}
\end{array}\right).
\end{array}\end{equation}

Several such transformations  (unitary matrices) have specific notations.
\begin{itemize}
    \item 
$I$ is an identity operator. That is, $I= R_{\hat{x}}(0)= R_{\hat{y}}(0)=R_{\hat{z}}(0) $
\[ I=\left(\begin{array}{cc}1 & 0\\0 & 1\end{array}\right).\]

\item $X$ is a NOT operator. NOT flips the state of a qubit. 
It is a special case of the $R_{\hat{x}}(\theta)$ that is a rotation around the $X$-axis of the Bloch sphere by $\pi$.

\begin{equation}\begin{array}{l}
X=\left(\begin{array}{cc}0 & 1\\1 & 0\end{array}\right).
%S=\left(\begin{array}{cc}1 & 0\\0 & i\end{array}\right);\\
\end{array}
\end{equation}

\item $S$ and $T$ are phase transformation operators
\begin{equation}\begin{array}{l}
S=\left(\begin{array}{cc}1 & 0\\0 & i\end{array}\right);\quad 
T=
e^{i\pi/8}\left(\begin{array}{cc}e^{-i\pi/8} & 0\\0 &
e^{i\pi/8}\end{array}\right)=\left(\begin{array}{cc}1 & 0\\0 &
e^{i\pi/4}\end{array}\right).
\end{array}
\end{equation}

They are special cases of rotation around the $Z$-axis of the Bloch sphere. The transformation $S$ is for rotation on $\frac{\pi}{2}$ and $T$ is for rotation on $\frac{\pi}{4}$.

\item $\sigma_z$ is a Pauli-Z gate. It is a special case of the rotation $R_{\hat{z}}(\theta)$  around the $Z$-axis on an angle $\pi$.
\begin{equation}\begin{array}{l}
\sigma_z =\left(\begin{array}{cc}1 & 0\\0 & -1\end{array}\right).
\end{array}
\end{equation}

\item $H$ is a Hadamard operator. It is a combination of a rotation around the $Z$-axis on $\pi$ and a rotation around the $Y$-axis on $\frac{\pi}{2}$.
$$H = R_{\hat{y}}\left(\frac{\pi}{2}\right)R_{\hat{z}}(\pi)$$
\begin{equation}\begin{array}{l}
H=\frac{1}{\sqrt{2}}\left(\begin{array}{cc}1 & 1\\1 & -1\end{array}\right).
\end{array}
\end{equation}

%Известно, что любое однокубитное унитарное преобразование можно с произвольной точностью $\epsilon$ представить в виде произведения $O(\log^c{1/\epsilon})$ операторов $H$ и $\pi/8$ (константа $c$ приблизительно равна 2), а произвольное унитарное преобразование, действующее на $q$ кубитах может быть представлено в виде произведения $O(q^2 4^q)$ однокубитных и $\cnot$ операторов. Фазовый оператор $S$ включается в стандартный набор для реализации контролируемых операторов и для организации помехоустойчивых вычислений.
\end{itemize}

%\color{blue}

\paragraph{Comments.}
The Hadamard operator has several useful properties.  

Firstly, let us note that  $H=H^{*}$. Therefore, for any state of a qubit $|\psi\rangle$ we have $H(H |\psi\rangle)=|\psi\rangle$.
Secondly, the Hadamard operator creates a superposition of the basis states of the qubit with equal probabilities.

    Hadamard operator maps the $\ket{0}$ to ${\frac {\ket{0} + \ket{1}}{\sqrt {2}}}$. A measurement of this qubit has equal probabilities to obtain basis states $\ket{0}$  or $\ket{1}$.     Repeated application of the Hadamard gate leads to the initial state of the qubit.

Thirdly, if a one-qubit  state is  in the form
    $\ket{\psi_0} = \frac{e^{i\phi}}{\sqrt{2}}\ket{0} + \frac{e^{-i\phi}}{\sqrt{2}}\ket{1},$
    then applying the Hadamard operator to this qubit allows transferring phase information into amplitudes.
    After applying the Hadamard gate to $\ket{\psi_0}$ we obtain:
    $\ket{\psi_1}  = H\ket{\psi_0}  = \cos{\phi}\ket{0} + \sin{\phi}\ket{1}$
    
%That is, applying Hadomar transfor for the state  $\ket{\psi_0}$ in the phase form transforms it into state $\ket{\psi_1}$   in the amplitude form.  
    
    %Imagine we have the following two-qubit quantum state:
    %$$\ket{\psi_0} = \frac{1}{\sqrt{2}}(e^{i\phi}\ket{0} + e^{-i\phi}\ket{1}) \otimes \frac{1}{\sqrt{2}}(e^{i\omega}\ket{0} + e^{-i\omega}\ket{1})$$
    
    %After applying the Hadamard gate to $\ket{\psi_0}$ we obtain:
    %$$\ket{\psi_1} = \cos{\phi}\cos{\omega}\ket{00} + \cos{\phi}\sin{\omega}\ket{01} + \sin{\phi}\cos{\omega}\ket{10} +
    %\sin{\phi}\sin{\omega}\ket{11}$$
    
%This rule can be generalized for a quantum state of an arbitrary number of qubits.

\subsection{Quantum Circuits}

Circuits are one of the ways of visually representing a register's state's transformations sequence.  

A classical Boolean circuit is a finite directed acyclic graph with AND, OR, and NOT gates. It has $n$ input nodes, which contain the $n$ input bits (a state of $n$-register). Internal nodes are AND, OR, and NOT gates, and there are one or more designated output nodes. The initial input bits are fed into
AND, OR, and NOT gates according to the circuit, and eventually the output
nodes assume some value. Circuit computes a Boolean function
$f : \{0, 1\}^n \to \{0, 1\}^m$
if the output nodes get the  value $f(\sigma)$ for every input
$\sigma \in \{0, 1\}^n$.

A quantum circuit (also called a quantum network or quantum gate array) acts on a quantum register. It is a representation of transformations of a quantum state (a state of a quantum register). Quantum circuit generalizes the idea of classical circuit families, replacing the $AND, OR$, and $NOT$ gates by elementary operators (quantum gates). A quantum gate is a unitary operator on a small (usually 1, 2, or 3) number of qubits.  Mathematically, if gates are applied in a parallel way to different parts of the quantum register, then they can be composed using a tensor product. If gates are applied sequentially, then they can be composed using the ordinary product.

In the picture, we draw qubits as lines (See Figure \ref{fig:qubitcur}) and operators or gates as rectangles or circles.  
\begin{figure}[h]
\begin{center}
\includegraphics[height=0.75cm]{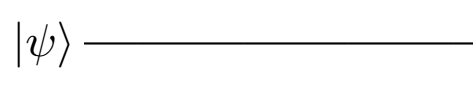}
\caption{Qubit}\label{fig:qubitcur}
\end{center}
\end{figure}

Figure \ref{fig:hgate} is the Hadamard operator (or gate) and Figure \ref{fig:notgate} is the $NOT$ or $X$ gate. If a rectangle of a gate crosses a line of a qubit, then the gate is applied to this qubit. If the rectangle crosses several lines of qubits, then it is applied to all of these qubits. 
\begin{figure}[h]
\begin{center}
\includegraphics[height=0.75cm]{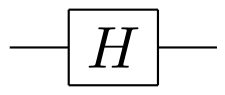}
\caption{Hadamard gate}\label{fig:hgate}
\end{center}
\end{figure}
\begin{figure}[h]
\begin{center}
\includegraphics[height=0.75cm]{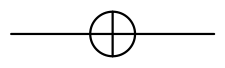}
\caption{$NOT$ or $X$ gate}\label{fig:notgate}
\end{center}
\end{figure}

%As a computational basis gate one can choose $H$ and $T$.
%
%gates because of the next property.
%
%\begin{property}
%Any transformation of one qubit  ($2\times 2$ unitary operator) for any \color{red} accuracy $\epsilon$  can be represented as a product of $O(\log^c{1/\epsilon})$ operators $H$ and $T$ for $c$ around 2.
%\end{property}
%Another 
%There are different options to choose a set of basic quantum gates that forms a computational basis.  As an option for computational basis gates we can choose $R_{\hat{x}}, R_{\hat{y}},R_{\hat{z}}$ because of the following property.
%\begin{property}
%Any  transformation of one qubit  can be represented as
%\begin{equation}U=e^{i\alpha}R_{\hat{x}}(\beta)R_{\hat{y}}(\gamma)R_{\hat{z}}(\delta)\end{equation}
%for some real $\alpha$, $\beta$, $\gamma$ and $\delta$. 
%\end{property}

%In this paper, we use $H, CNOT, T$ and $X$ gates because ????. The $CNOT$ gates are defined in the next section.

Figure \ref{circuitCnot} represents the CNOT gate that implements the following transformation on two qubits:
\[\ket{a}\ket{b}\to\ket{a}\ket{a XOR b}\]

Here XOR is excluding or operation that is $1$ iff $a\neq b$. The gate is also called the ``control-not'' or ``control-X'' gate. The $\ket{a}$ qubit is called the control qubit and $\ket{b}$ is called the target qubit. We can say that we apply $NOT$ or $X$ operator to $\ket{b}$ iff $\ket{a}$ is $1$. 
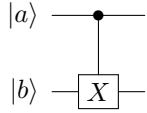
\begin{figure}[h]
$$
\begin{array}{l}\quad\quad \Qcircuit  @C=1.0em @R=1.0em {
\lstick{\ket{a}} & \ctrl{2} & \qw\\
\\
\lstick{\ket{b}} & \gate{X} & \qw\\
}\end{array}
$$
\caption{CNOT gate}\label{circuitCnot}
\end{figure}

\subsection{Information Extracting (Measurement).}

There is only one way to extract information from a state of a quantum $n$-register to a ``macro world''. It is the measurement of the state of the quantum register. Measurement can be classified as the second kind of quantum operators, while unitary transformations of a quantum system as the first kind of quantum operators.

Different measurements are considered in quantum computation theory.  In our review, we use only measurement with respect to ``computational basis''. Such a type of measurement is described as follows. If we measure the quantum state  $\sum_i\alpha_i\ket{i}$, then we get one of the basis state $\ket{i}$  with probability $|\alpha_i|^2$.

We also can use a partial measurement of the state of the quantum register. 
Consider the case of a quantum $2$-register. Let $\ket{\psi}$ be a state of such a quantum $2$-register:
\[\ket{\psi} = \alpha_{00}\ket{00} + \alpha_{01}\ket{01} + \alpha_{10}\ket{10} + \alpha_{11}\ket{11}.
\]
Imagine that we measure the first qubit of the quantum $2$-register. The probability of obtaining $\ket{0}$ is
$$Pr(\ket{0}) = Pr(\ket{00}) + Pr(\ket{01}) = |\alpha_{00}|^2 + |\alpha_{01}|^2$$
The   state of the second qubit after the measurement  is
$$\ket{\psi_0} = \frac{\alpha_{00}\ket{0} + \alpha_{01}\ket{1}}{\sqrt{|\alpha_{00}|^2 + |\alpha_{01}|^2}}.$$
Similarly, we define the probability of obtaining 1-result %is equal to the sum over all of the possible probabilities for a two-qubit measurement where we get 0 on the first qubit:
$$Pr(\ket{1}) = Pr(\ket{10}) + Pr(\ket{11}) = |\alpha_{10}|^2 + |\alpha_{11}|^2$$
The  state  of the second qubit after the measurement  is
$$\ket{\psi_1} = \frac{\alpha_{10}\ket{0} + \alpha_{11}\ket{1}}{\sqrt{|\alpha_{10}|^2 + |\alpha_{11}|^2}}.$$
%

%%%%%%%%%%%%%%%%%%%%%%%%%%%%%%%%%%%%%%%%%%%%%%%%%%%%%%%
%%% Computational model%%%%%%%%%%%%%%%%%%%%%%%
%%%%%%%%%%%%%%%%%%%%%%%%%%%%%%%%%%%%%%%%%%%%%%%%%%%%%%%%

\section{Computational model}
To present a notion of computational complexity, we need a formalization of the computational models we use. We define here the main computational models that are used for quantum search problems. 

% more information on the subject can be found in the following books and papers \cite{a2017,Wolf:2001:PhD,Weg00,av2009}.  
 
In this section, we  define computational models oriented for Boolean functions 
\[f:\{0,1\}^N\to\{0.1\}\]
realization. Denote $X=\{x_1\dots,x_N\}$ a set of variables of function $f$. 

Note, that the computational model we define here can be naturally generalized for the case of computing more general functions. 
 
 %We consider computing a function $f(x_1, \dots , x_N )$ of variables $x_i\in[d]$ for some integer $d$. Let us consider the case of $d=2$, but the description of the models is right for other $d$.

%\paragraph{Quantum Computing Model.}

\subsection{Decision Trees}
We consider a deterministic version of the Decision Tree (DT) model of computation.  We present here a specific version of DTs that is oriented for quantum generalization.

The model can be presented in two forms. 
\paragraph{Graph representation.} DT   $A$ is a directed leveled binary tree with a selected starting node (root node). All nodes $V$ of $A$ are  partitioned into levels $V_1, \dots, V_\ell$. The level $V_1$ contains the starting node. Nodes from the level $V_i$ are connected only to nodes from the level $V_{i+1}$. 
The computation of a DT starts from the start node. At each node of the graph, a Boolean variable $x\in X$ is tested.  Depending on the outcome of the query, the
algorithm proceeds to the $x = 0$ child or the $x = 1$ child of the node. Leaf nodes are marked by ``0'' or ``1''. When on input $\sigma=\sigma_1,\dots , \sigma_N$ the computation reaches a leaf of the tree, then it outputs the value listed at this leaf.

\paragraph{Linear representation.} We define a  linear presentation of the graph-based model of DT as follows. 

Let $dim(A)=\max\limits_{1\le i\le \ell} |V_i|$ and  $d=dim(A)$.  
Denote $S=\{s^1,\dots, s^d\}$ a set of $d$-dimensional  column-vectors, where $s^{(i)}$ is a vector with all components ``0'' except  one ``1'' in the $i$-th position.  We call a vector  $s\in S$ a state of $A$.  A state $s$ in each level $i$, $1\le i\le \ell$ represents a node (by ``1'' component) where $A$ can be found at that level. 

Next, let $X_i\subseteq X $ be a set of variables tested on the level $i$.
A transformation of a state $s$ on a level $i$, $1\le i\le \ell-1$, is described by a set ${\cal Q}^i(x_i)$ of matrices which depends on values of variables $x_i$ tested on that level.  The state $s^{(0)}\in S$ is an initial state. 

%A sequence of transformations ${\cal Q}^1,\dots, {\cal Q}^{\ell-1}$  and a sequence of states $s_1, \dots, s_\ell$ are determined by the input $\sigma$ and the structure of the underlying decision tree. 

Now DT $A$ can be formalized as 
\[
A= \langle S, {\cal Q}^1(x_1), \dots, {\cal Q}^{\ell-1}(x_{\ell-1}), s^{(0)} \rangle.  
\]
Computation on an input $\sigma$ by $A$ is presented as a sequence 
\[ s^{(0)} \to s^{(1)} \to \cdots  \to s^{(\ell)} 
\]
of states transformations determined by $A$  structure and input $\sigma$ as follows.  Let  $s^{(i)}$ be the current state on the level $i$, then the next state  will be $s^{(i+1)}={Q}^i(x_i,\sigma) s^{(i)}$, where matrix ${Q}^i(x_i,\sigma)\in {\cal Q}^i(x_i) $ is determined by  $\sigma$ for  $x_i$ variables.  

\paragraph{Complexity Measures.}

Two main complexity measures are used for computational models. These measures are Time and Space (Memory). For the Decision models of computation, the analogs of these complexity measures are query complexity and size complexity. The query complexity is the maximum number of queries that the algorithm can do during computation. This number is the depth $D(A)$ that is the length of the longest path from the root to a leaf of the decision tree $A$. The size complexity is the width of $A$. We denote the width $dim(A)$.  In addition, we 
use a number of bits, which are enough to encode a state on a level of $A$ that is $size(A)=\lceil \log_2 dim(A) \rceil$.  

For function $f$, complexity measures $D(f)$, $dim(f)$, and $size(f)$ denote (as usual) the minimum Time and Space needed for computing $f$.

%We can discuss two complexity measures: time and memory. The analog of time complexity is the query complexity of $A$. The query complexity is the maximum number of queries that the algorithm can do or it is the height of the tree. Let us denote it $height(A)=\ell$. Deterministic query complexity $D(f)$ is the smallest $height(A)$ of a deterministic $A$ which outputs $f(x_1, \dots , x_N )$ if the queries are answered according to $(x_1, \dots , x_N )$, whenever $f(x_1, \dots , x_N )$ is defined.

%The memory complexity or space complexity of $A$ is the width of the tree. Let us denote it $dim(A)$. Formally,  $dim(A)=\max\limits_{i\in\{1,\dots height(A)\}} |V_i|$. 

%We can say that $\lceil \log_2 dim(A) \rceil$ bits are enough to encode an index of the current state on a level. We define this number as $size(A)=\lceil \log_2 dim(A) \rceil$.

\subsection{Quantum Query Algorithm} 

Quantum Query Algorithm (QQA) is the following generalization of the DT model for a quantum case. 
A QQA  $A$ for computing a Boolean function $f(X)$ is based on a quantum  $size(A)$-register (on a quantum system composed from $size(A)$ qubits). $|\psi_{start}\rangle$ is an initial state. The computation procedure  is determined by the sequence 
\[ U_0, Q, U_1, \dots , Q, U_{\ell},\]
of operators that are $dim(A)\times dim(A)$ unitary matrices.

%defined by an initial state $|\psi_{start}\rangle$ and operators $U_0, Q, U_1, \dots , Q, U_{\ell}$.  The algorithm uses a quantum system of $size(A)$ qubits or $dim(A)$ states. All operators are represented as $dim(A)\times dim(A)$ unitary matrices.

\begin{figure}[h]
    \begin{center}
    \includegraphics[height=2cm]{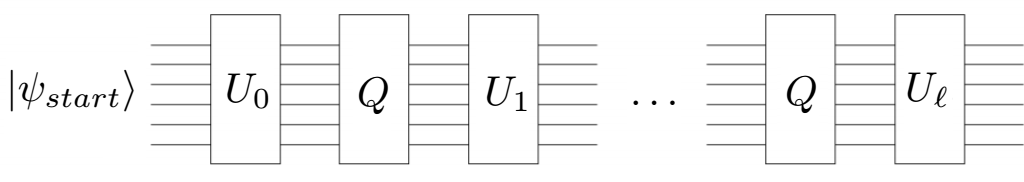}
    \caption{Quantum query algorithm as a quantum circuit.}
    \label{fig:queryalgo}
    \end{center}
\end{figure}

Algorithm  $A$ is composed of two types of operators. Operators  $U_i$ are independent of a tested input $X$. $Q$ is the query operator of a fixed form that depends on the tested input $X$. The algorithm consists of performing
$U_0, Q, U_1, \dots , Q, U_{\ell}$ on $|\psi_{start}\rangle$ and measuring the result.

The algorithm computes $f(X)$ if during the computation on an input instance $\sigma$ the initial state $|\psi_{start}\rangle$ is transformed to a final quantum state $\ket{\psi}$   that allows us to extract a value  $f(\sigma)$ when this state is measured. 
%Clearly, the  input $\sigma$ determines the final state. So, $\ket{\psi}$ can be presented as $\ket{\psi_\sigma}$. More precisely the final state is determined by $f(\sigma)$. So, $\ket{\psi}$ can be presented as $\ket{\psi(f(\sigma))}$ or $\ket{\psi_{f(\sigma)}}$. We will use all these notations in the paper.

%In the quantum case, the decision trees are interpreted as the Query model \cite{}. The Quantum query model is considered as one of the simplest and practically oriented models of quantum computation. Different authors intensively investigated it during the last decades.     
%In the paper, we assume that we have a quantum circuit that implements query operator $Q$. The operator is such that $a\ket{i}$ is converted to $(-1)^{f(i)}a\ket{i}$ for some Boolean value function $f$. We call such a circuit an Oracle. Typically researchers consider oracles that can implements functions with time complexity $O(\log d)$, where $d$ is the number of quantum states.

\subsection{Quantum Branching Program or Quantum Data Stream Processing Algorithms} \label{sec:qbp}

Quantum Branching Program (QBP) is a known model of computations -- a generalization of the classical Branching Program (BP) model of computations. We refer to papers \cite{agk01, agkmp2005,av2009, aakv2018,agky16,agky14} for more information  on QBPs.   Quantum Branching Programs and Quantum Query Algorithms are closely related. They can be considered as a specific variant of each other depending on the point of view. For example,  in the content of this work, QBP can be considered as a special variant of the Quantum Query Model, which can test only one input variable on a  level of computation.  Here we define QBP following to \cite{aakv2018}. Another point of view to the model is data stream processing algorithms.

%%%%%%%%%%%%%%%%%%%%%%%%%%%%%%%%%
A QBP  $A$ over the Hilbert space $\Hilbert^d$ is defined as
\begin{equation}
A=( Trans, \ket{\psi_0} %\textup{Accept}
),
\end{equation}
where $Trans$ is a sequence of $l$ instructions: $Trans_j=\left(x_{i_j}, U_j(0),U_j(1)\right)$ is
determined by the variable $x_{i_j}$ tested on the step $j$, and $U_j(0)$, $U_j(1)$ are unitary
transformations in $\Hilbert^d$, $d=dim(A)$. Here $l$ is some positive integer.

Vectors $\ket{\psi}\in \Hilbert^d$ are called states (state vectors) of $A$, $\ket{\psi_0}\in
\Hilbert^d$ is the initial state of $A$. 
%and $\textup{Accept}\subseteq\{1,2,\ldots,d\}$ is the set of indices of accepting basis states.

We define a computation of $A$ on an input instance $\sigma = (\sigma_1
\ldots \sigma_n) \in \{0,1\}^n$  as follows:
\begin{enumerate}
\item A computation of $A$ starts from the initial state $\ket{\psi_0}$.
\item The $j$-th instruction of $A$ reads the input symbol $\sigma_{i_j}$ (the
value of $x_{i_j}$) and applies the transition matrix $U_j = U_j(\sigma_{i_j})$ to the current
      state $\ket{\psi}$ for obtaining the state $\ket{\psi'}=U_j(\sigma_{i_j})\ket{\psi}$.
\item The final state is
\begin{equation}
 \ket{\psi_\sigma}= \left(\prod_{j=l}^1 U_j(\sigma_{i_j})\right)
\ket{\psi_0}\enspace.
\end{equation}
%\item After the $l$-th (last) step of quantum transformation the configuration $\ket{\psi_\sigma}=(\alpha_1,\ldots, \alpha_d)^T$ of $Q$ is measured, and the input $\sigma$ is accepted if and only if the result is in $\textup{Accept}$, which happens with probability
%\begin{equation}
%\textup{Pr}_{\textup{accept}}(\sigma)=\sum\limits_{i\in %\textup{Accept}}|\alpha_i|^2\enspace.
%\end{equation}

\end{enumerate}
%%%%%%%%%%%%%%%%%%%%%%%%%%%%%%%%%%%

%The definition of the algorithm $A$ is similar. It is an initial state $|\psi_{start}\rangle$ and transformations $G_1, \dots, G_{\ell}$, but we fix $\ell=N$. We associate an order of variables $\theta=(j_1,\dots,j_N)$ with the quantum branching program.  The $\theta$ is a permutation on $(1,\dots,N)$. We assume that all nodes of $i$-th level test only $x_{j_i}$. An initial state $|\psi_{start}\rangle$ does not depend on the $x_1\dots,x_n$ variables. The transformation $U_i$ is one of two unitary $w\times w$-matrices $U_i^0$ if $x_j=0$ and $U_i^1$  if $x_j=1$, where $w=dim(A)$. Typically, we assume that the algorithm knows the order $\theta$ for free.

%We can interpret a quantum branching program as a quantum circuit that has access to the classical input variables.

\begin{figure}[h]
    \centering
    \includegraphics[height=5cm]{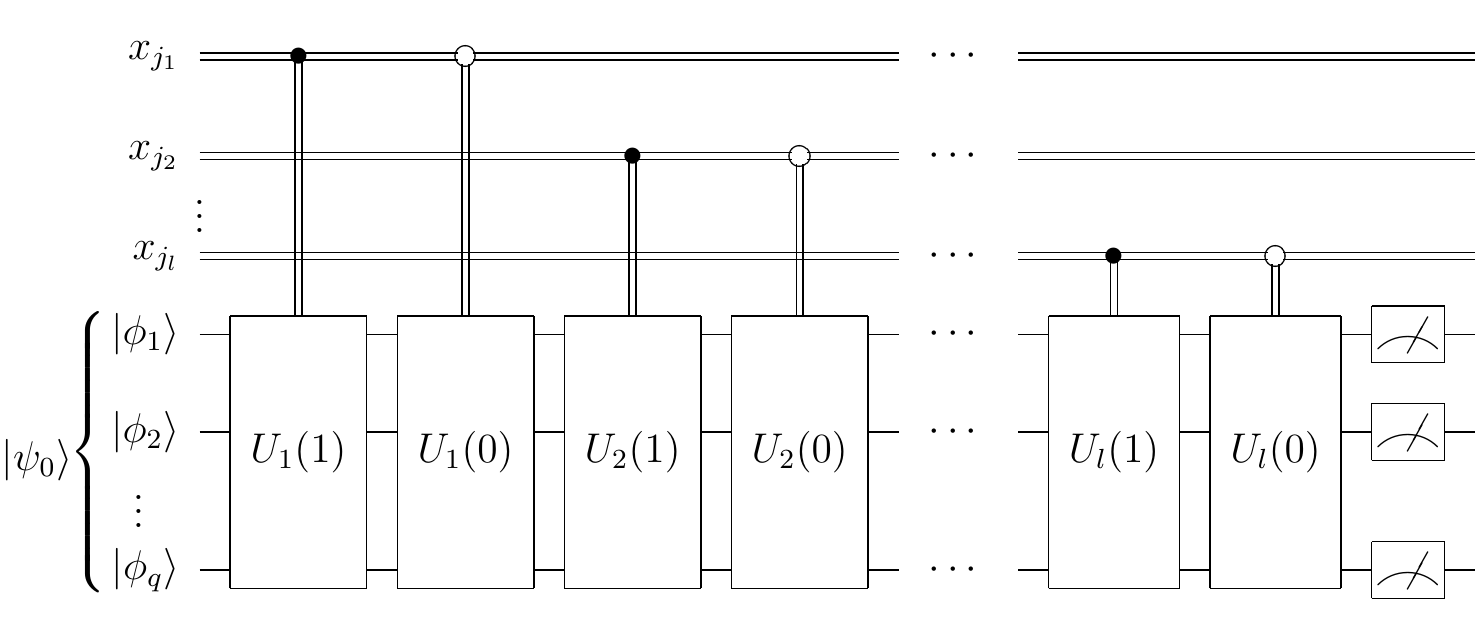}%{qobdd}
    \caption{Quantum branching program as a quantum circuit.}
    \label{fig:hgate}
\end{figure}
BPs and QBPs are convenient computational models in complexity theory. It is easy and natural to define various restricted models for this computational model. One we use here is a read-once model, which has the following restriction: ``each input variable is tested exactly once''. In this case, we have $\ell=N$, where $N$ is the number of input variables. In other words, the number of computational steps equals the number of input variables.

Let us discuss the programming-oriented definition of read-once QBP.
 We modify read-once QBP $P$  to the following  QBP $A$ by modifying its register.  
%
%Another interpretation is a query-model-like interpretation. Let us define the algorithm in a little bit different way. 
%
We equip basis states $S=\{\ket{1},\dots,\ket{d}\}$ by ancillary qubits as follows. A state $\ket{a}\in S$ is modified by adding state $\ket{k}$ and qubit $\ket{\phi}$, where $k$ is the index of variable $x\in X$ tested in the state $\ket{a}$  and $\ket{\phi}$ presents a  Boolean value of the input $x$ tested. The new basis states $S'$ are  
\[ S'=\{\ket{k}\ket{a}\ket{\phi}: k\in\{1,\dots, N\}, a\in\{1,\dots, d\}, \phi\in\{0,1\} \}.
\]

The initial state is  $\ket{j_1}\ket{\psi_{0}}\ket{0}$, where $\ket{\psi_{0}}$ is a starting state of $Q$. 
The transformations of $A$  are the following sequence of matrices  
\[ 
 Q, U_1, Q, T ,Q, U_2, Q, T ,Q,\dots , Q,  U_{N}
 \]
% with  $\ell=N$.
 
% We use a quantum register $\ket{\lambda}\ket{\psi}\ket{\varphi}$. $\ket{\varphi}$ is a single qubit, we enumerate all states of $\ket{\lambda}\ket{\psi}$ as $\ket{k}\ket{u}$, where $k\in\{1,\dots,N\}$,  $u\in\{1,\dots,w\}$, $w=dim(A)$. Here $k$ is an index of a testing variable, $u$ is an index of a node in the level, $\varphi$ is a value of a testing variable. 

 The matrix $T$ defines  a transition that changes the testing variable's index on the current level to the variable that is tested on the next level
 \[
 T:\ket{j_z}\ket{a}\ket{\phi}\to\ket{j_{z+1}}\ket{a}\ket{\phi}.
\] 
% for $z\in\{1,\dots,N-1\}$.

 The matrix $Q$ is a query such that  $Q:\ket{k}\ket{a}\ket{\phi}\to\ket{k}\ket{a}\ket{\phi\oplus x_{k}}$. 

 The $U_i$ is an operator from QBP $P$ that is oriented for acting on basis states of a quantum state. $U_i$
 applies $U_{i}(0)$ to $\ket{a}$ if %$\ket{\lambda}\ket{\varphi}=\ket{j_i}\ket{0}$
 $\ket{\phi}=\ket{0}$
 and applies $U_{i}(1)$ to $\ket{a}$ if %$\ket{\lambda}\ket{\phi}=\ket{j_i}\ket{1}$.
 $\ket{\phi}=\ket{1}$.
 We have to apply the query $Q$ twice for one step of the algorithm to obtain the state  $\ket{\phi}=\ket{0}$ before testing the next input variable. 

More precisely: before applying $Q$, the qubit $\ket{\phi}$ is in the state $\ket{0}$. The first applying of $Q$ converts $\ket{k}\ket{a}\ket{0}$ to state $\ket{k}\ket{a}\ket{x_{k}}$.  The second applying of $Q$ converts $\ket{k}\ket{a'}\ket{x_{k}}$ to $\ket{k}\ket{a'}\ket{0}$. 

\section{Grover's Search Algorithm and Amplitude Amplification}
In this section, we present an algorithm in the Query model for the Search problem.
Let us present the Search problem. 

    {\bf Search Problem} Given a set of numbers $(x_1,\dots,x_n)\in \{0,1\}^n$ one  want to find the $k\in[n]$ such that $x_k=1$.
    
Let $ONES=\{i:x_i=1\}$. Then, we can consider three types of the problem
\begin{itemize}
    \item {\bf single-solution search problem} if $|ONES|=1$ and it is known apriori;
    \item {\bf $t$-solutions search problem} if $|ONES|=t$ and it is known apriori;
    \item {\bf unknown-number-of-solutions search problem} if $|ONES|=t$ and $t$ is unknown.
\end{itemize}

Let us consider the solution of the first of these three problems.

\subsection{Single-solution Search Problem}
So, it is known that there is one and only one $k\in [n]$ such that $x_k=1$ and $x_i=0$ for any $i\in [n]\backslash \{k\}$.

\paragraph{Deterministic solution.}
We can check all elements $i\in[n]$, i.e. from $0$ to $n-1$. If we found $i$ such that $x_i=1$, then it is the solution. The idea is presented in Algorithm \ref{alg:dsearchsingle}.

\begin{algorithm}
\caption{Deterministic solution of Single-solution Search Problem}\label{alg:dsearchsingle}
\begin{algorithmic}
\State $k\gets NULL$
\For{$i\in\{0,\dots,n-1\}$}
\If{$x_i=1$}
\State $k\gets i$
\EndIf
\EndFor
\State \Return $k$
\end{algorithmic}
\end{algorithm}

The query complexity of the such solution is $O(n)$.

\paragraph{Randomized (probabilistic) solution.}
We pick any element $i$ uniformly. If $x_i=1$, then we win, otherwise we fail. The query complexity of the such algorithm is $1$. At the same time, it has a very small success probability that is $Pr_{success}=\frac{1}{n}$.
We call this algorithm {\bf random sampling}.

We can fix the issue with the small success probability using the ``Boosting success probability'' technique. That is the repetition of the algorithm several times. We repeat the algorithm $r$ times. If at least one invocation wins, then we find the solution. If each invocation fails, then we fail.  The idea is presented in Algorithm \ref{alg:rsearchsingle}.

\begin{algorithm}
\caption{Randomized solution of Single-solution Search Problem}\label{alg:rsearchsingle}
\begin{algorithmic}
\State $k\gets NULL$
\State $step\gets 0$
\While{$step<r$ and $k=NULL$}
\State $i\in_R [n]$
\If{$x_i=1$}
\State $k\gets i$
\EndIf
\State $step\gets step+1$
\EndWhile
\State \Return $k$
\end{algorithmic}
\end{algorithm}

Let us compute error probability after $r$ repetitions.
\begin{lemma}\label{lm:boost1}
The error probability of Random Sampling algorithm's $r$ repetitions is $\left(1-\frac{1}{n}\right)^r$.
\end{lemma}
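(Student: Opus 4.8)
The plan is to compute the error probability directly from the structure of the algorithm. The key observation is that the $r$ invocations of random sampling in Algorithm \ref{alg:rsearchsingle} are independent: each iteration picks an index $i\in_R[n]$ uniformly and independently of the previous picks, and the algorithm fails overall precisely when every single one of the $r$ invocations fails.

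First I would isolate the behavior of a single invocation. Since there is exactly one $k\in[n]$ with $x_k=1$, a single uniform pick $i\in_R[n]$ succeeds (i.e.\ hits $x_i=1$) with probability $\frac{1}{n}$ and fails with probability $1-\frac{1}{n}$. Next I would invoke independence of the $r$ trials: the event ``all $r$ invocations fail'' is the intersection of $r$ independent events each of probability $1-\frac{1}{n}$, so its probability is the product $\left(1-\frac{1}{n}\right)^r$. Finally I would argue that this intersection event is exactly the event that the algorithm returns $NULL$ (equivalently, fails to find the solution): if at least one invocation had succeeded, the variable $k$ would have been set to a correct index and never overwritten to $NULL$; conversely if all fail, $k$ remains $NULL$. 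Hence the overall error probability equals $\left(1-\frac{1}{n}\right)^r$.

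There is no serious obstacle here; the only point requiring a word of care is the claim that the repeated picks can be treated as independent $-$ this is an assumption built into the random-sampling model (each $i\in_R[n]$ is a fresh uniform sample), so it should be stated as such rather than proved. A minor formal nicety, if one wants to be pedantic, is handling the early-exit condition $k=NULL$ in the \textbf{while} loop: stopping early only happens once a solution is found, so it does not change the failure event, and one can equivalently imagine running all $r$ trials unconditionally. With these remarks the proof is a one-line product computation.
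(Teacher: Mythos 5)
Your proof is correct and follows the same route as the paper: a single invocation fails with probability $1-\frac{1}{n}$, the $r$ invocations are independent, and the overall failure event is exactly the intersection of all individual failures, giving $\left(1-\frac{1}{n}\right)^r$. Your additional remarks about the early-exit condition and the independence assumption only make explicit what the paper's proof takes for granted.
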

\Beginproof
Each event of an invocation's error is independent of another. The error probability for one invocation is $P_{error}=\left(1-\frac{1}{n}\right)$. $r$ repetitions have an error if all invocations have errors. Therefore, the total error probability is  
$\left(1-\frac{1}{n}\right)^r.$
\Endproof

 If we want to have constant (at most $\frac{1}{3}$) error probability, then $t=O(n)$.

\begin{lemma}
The error probability of the Random Sampling algorithm's $O(n)$ repetitions is at most $\frac{1}{3}$.
\end{lemma}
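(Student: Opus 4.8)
The plan is to combine Lemma~\ref{lm:boost1} with the elementary estimate $1-x\le e^{-x}$, valid for every real $x$. By Lemma~\ref{lm:boost1} the error probability after $r$ repetitions equals $\left(1-\frac1n\right)^r$, so it suffices to exhibit a value $r=\bigo{n}$ for which $\left(1-\frac1n\right)^r\le\frac13$, and then observe that this $r$ is indeed linear in $n$.

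First I would record the inequality $1-\frac1n\le e^{-1/n}$, a special case of $1-x\le e^{-x}$ (which follows from convexity of $t\mapsto e^{-t}$ at $t=0$, or, on $[0,1]$, from the alternating Taylor series $e^{-x}=1-x+\frac{x^2}{2}-\cdots$). Raising both nonnegative sides to the $r$-th power yields
\[
\left(1-\frac1n\right)^r\le e^{-r/n}.
\]
Next I would choose $r=\lceil n\ln 3\rceil$; then $r/n\ge\ln 3$, so $e^{-r/n}\le e^{-\ln 3}=\frac13$, and therefore the error probability is at most $\frac13$. Since $r=\lceil n\ln 3\rceil=\bigo{n}$, this is precisely the claimed statement. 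If one prefers to avoid logarithms, taking $r=2n$ works just as well, because $e^{-2}<\frac13$; more generally $r=cn$ with any constant $c\ge\ln 3$ suffices, and the constant hidden in the $\bigo{n}$ is this $c$.

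There is essentially no obstacle here — the whole argument is a one-line estimate once Lemma~\ref{lm:boost1} is available. The only point worth a remark is the degenerate case $n=1$: there the single-solution hypothesis forces $x_1=1$, so a single round of random sampling already succeeds with probability $1$ and the bound holds trivially; for $n\ge 2$ the computation above applies without change.
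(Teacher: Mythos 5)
Your proof is correct, and it is slightly more careful than the one in the paper. The paper fixes $r=3n$ and argues via the limit $\lim_{n\to\infty}(1-\tfrac1n)^{3n}=e^{-3}\approx 0.05$, concluding that the bound holds ``even for small $n>2$''; strictly speaking a limit statement only controls the asymptotic behaviour, and the finite-$n$ claim needs the additional observation that $(1-\tfrac1n)^n$ increases monotonically to $e^{-1}$, so that every term is at most the limit. Your route replaces the limit by the pointwise inequality $1-x\le e^{-x}$, which yields $\left(1-\tfrac1n\right)^r\le e^{-r/n}$ for every $n$ with no asymptotic caveat, and it also buys a sharper constant: $r=\lceil n\ln 3\rceil\approx 1.1n$ suffices, versus the paper's $3n$. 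Your remark on the degenerate case $n=1$ is a nice touch the paper omits. Both arguments are ultimately the same comparison of $(1-\tfrac1n)^r$ with a power of $e$, so the difference is one of rigor and tightness rather than of underlying idea.
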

\Beginproof
Let $r=3n$
Due to Lemma \ref{lm:boost1}, the error probability is 

\[P_{error}^{RSB}=\left(1-\frac{1}{n}\right)^{3n}\]

It is known that 
\[\lim_{n\to\infty}P_{error}^{RSB}=\lim_{n\to\infty}\left(1-\frac{1}{n}\right)^{3n}=\frac{1}{e^3}\approx0.05=\frac{1}{20}.\]

So, even for small $n>2$ we have $P_{error}^{RSB}<1/3$.
\Endproof

So, the total query complexity of the algorithm is $O(n)$.  It is known that $\Omega(n)$ is also lower bound for a randomized algorithm \cite{bbbv1997}. The same is true for deterministic algorithms because deterministic algorithms are a particular case of randomized algorithms.

Let us discuss the quantum algorithm for the problem called Grover's Search Algorithm\cite{g96,bbht98}. In fact, the algorithm is a quantum version of the described randomized algorithm.

%%%%%%%%%%%%%%%%%%%%%%%%%%%%%%%%%%%%%%%%%%%%%%%
%%%   Grover single one %%%%%%%%%%%%%%%%%%%%%%%
%%%%%%%%%%%%%%%%%%%%%%%%%%%%%%%%%%%%%%%

\subsubsection{Grover's Search Algorithm for Single-solution Search Problem}\label{sec:grover}
Let us consider two quantum registers. The quantum register $\ket{\psi}$ has $\log_2 n$ qubits. It stores an argument. The quantum register $\ket{\phi}$ has one qubit. It is an auxiliary qubit.

The access to oracle is presented by the following transformation:
\[Q:|i\rangle|\phi\rangle \to |i\rangle|\phi\oplus x_i\rangle\]
The transformation is implemented by the control NOT gate, which has $\ket{\psi}$ and $x_i$ as the control qubits and $|\phi\rangle$ as the target qubit.

\paragraph{``Inverting the sign of an amplitude'' Procedure.} The procedure is also known as a partial case of the phase kickback trick.
Firstly, we prepare $\ket{\phi}=\frac{\ket{0}-\ket{1}}{\sqrt{2}}$ state. 
Initially, the $\ket{\phi}$ qubit in $\ket{0}$ state. Then we apply $NOT$ or $X$ gate 
\[NOT\ket{0}=\ket{1}\]
After that we apply  Hadamard transformation $H$ 
\[H\cdot NOT\ket{0}= H\cdot \ket{1}=\frac{\ket{0}-\ket{1}}{\sqrt{2}}\]

\begin{property}
Suppose $\ket{\phi}=\frac{\ket{0}-\ket{1}}{\sqrt{2}}$.  Then the operator $Q$ performs the ``Inverting the sign of an amplitude'' procedure. That  is,  
\[
Q: \ket{i}\ket{\phi} \mapsto (-1)^{x_i}\ket{i}\ket{\phi}. 
\]   
\end{property}
\begin{proof}
If $x_i=0$, then  $\ket{i}\ket{\phi\oplus x_i}= \ket{i}\ket{\phi}= (-1)^0\ket{i}\ket{\phi}$.
If $x_i=1$, then 
\[
\ket{i}\ket{\phi\oplus x_i}=\ket{i}\ket{\phi\oplus 1}= \ket{i}\frac{\ket{0\oplus 1}-\ket{1\oplus 1}}{\sqrt{2}}=\]\[= \ket{i}\frac{\ket{1}-\ket{0}}{\sqrt{2}}=-\ket{i}\frac{\ket{0}-\ket{1}}{\sqrt{2}}=(-1)^1\ket{i}\ket{\phi}.
\]
Therefore, $\ket{i}\ket{\phi\oplus x_i}=(-1)^{x_i}\ket{i}\ket{\phi}$
\end{proof}

\paragraph{Description of the Algorithm}
The initial step of the algorithm is the following.
\begin{itemize}
\item The initial state is $\ket{\psi}\ket{\phi}=\ket{0}\ket{0}$.
\item We apply  $NOT$ and $H$ gates to $\ket{\phi}$ for preparing $\ket{\phi}=\frac{\ket{0}-\ket{1}}{\sqrt{2}}$ state.
\item We apply $H^{\otimes \log_2 n}$ to $\ket{\psi}$ for  preparing $\ket{\phi}=\frac{1}{\sqrt{n}}\sum_{i=1}^n |i\rangle$ state. It means we apply $H$ transformation to each qubit of $\ket{\psi}$.
\end{itemize} 

The main step of the algorithm is the following:
\begin{itemize}
\item We do the query to oracle $Q$ that inverts the amplitude of the target element:
\[Q:a_k\ket{k}\to -a_k\ket{k},\]
and does nothing to other elements.

We can say that the matrix $Q$ is the following one

$Q=\begin{pmatrix} 
1&0&\dots&0&0&0&\dots&0\\
0&1&\dots&0&0&0&\dots&0\\
\dots&\dots&\dots&\dots&\dots&\dots&\dots&\dots\\
0&0&\dots&1&0&0&\dots&0\\
0&0&\dots&0&-1&0&\dots&0\\
0&0&\dots&0&0&1&\dots&0\\
\dots&\dots&\dots&\dots&\dots&\dots&\dots&\dots\\
0&0&\dots&0&0&0&\dots&1\\
\end{pmatrix}$

At the same time, we remember that it is implemented using the ``Inverting the sign of an amplitude'' procedure.
\item We apply Grover's Diffusion transformation $D$ that rotates all amplitudes near a mean. Let us describe the operation in details.
Let $\ket{\psi}=\sum_{i=0}^{n-1}a_i\ket{i}$ and $m = \frac{1}{n}\sum_{i=0}^{n-1}a_i$. Then
\[D:a_i\ket{i} \to (2m-a_i)\ket{i}\]

We can say that the matrix $D$ is the following one

$D=\begin{pmatrix} 
\frac{2}{n}-1&\frac{2}{n}&\dots&\frac{2}{n}\\
\frac{2}{n}&\frac{2}{n}-1&\dots&\frac{2}{n}\\
\dots&\dots&\dots&\dots\\
\frac{2}{n}&\frac{2}{n}&\dots&\frac{2}{n}-1
\end{pmatrix}$

We will discuss this operation in details later in this section.
\end{itemize} 

If we measure $\ket{\psi}$ after the initial step, then we can obtain the solution $\ket{\psi}=\ket{k}$ with a very small probability $1/n$.
After one main step, you can see that $a_k>a_i$ for any $i\in[n]\backslash\{k\}$, but the probability is still very small.

For obtaining a good success probability, we should repeat the main step $O(\sqrt{n})$ times. After that, if we measure $\ket{\psi}$, then we obtain the solution $\ket{\psi}=\ket{k}$ with a probability close to $1$.

Let us discuss the complexity of the algorithm and explain the above claims in details.

\paragraph{Query Complexity and Error Probability} 
Let us consider the state of $\ket{\psi}$ after the initial step
\[|\psi\rangle=\frac{1}{\sqrt{n}}\sum\limits_{i\in[n]}|i\rangle =\frac{1}{\sqrt{n}}\sum\limits_{i\in[n]\backslash \{k\}}|i\rangle+ \frac{1}{\sqrt{n}}|k\rangle\]

After $j$ main steps we have the state

\[|\psi\rangle=B_{(j)}\sum\limits_{i\in\{1,\dots,n\}\backslash \{k\}}|i\rangle+ G_{(j)}|k\rangle\]

Here $B_{(j)}$ is the amplitude for non-target states and $G_{(j)}$ is the amplitude for the target state. You can verify that if we apply $Q$ and $D$, then the amplitudes of all non-target states are equal.

We know that 
\[1=B_{(j)}^2+\dots +B_{(j)}^2+G_{(j)}^2=(n-1)B_{(j)}^2 + G_{(j)}^2=\left(\sqrt{n-1}B_{(j)}\right)^2 + G_{(j)}^2\]
In other words:
\[ \left(\sqrt{n-1}B_{(j)}\right)^2 + G_{(j)}^2= 1\]

Remember that all initial amplitudes and transformations are real valued. that is why we can say about numbers themselves, but not about absolute values of complex numbers.

We can choose an angle $\alpha$ such that $\sin\alpha = G_{(j)}$ and $\cos\alpha = \sqrt{n-1}B_{(j)}$.
This angle represents the state $\ket{\psi}$.

After the initial step $G_{0}=\frac{1}{\sqrt{n}}$. Therefore, $\sin \alpha = \frac{1}{\sqrt{n}}$ and $\alpha=\arcsin{\frac{1}{\sqrt{n}}}$. Let us define $\theta=\arcsin{\frac{1}{\sqrt{n}}}$.

\begin{figure}[h]
\begin{center}
\includegraphics[height=5cm]{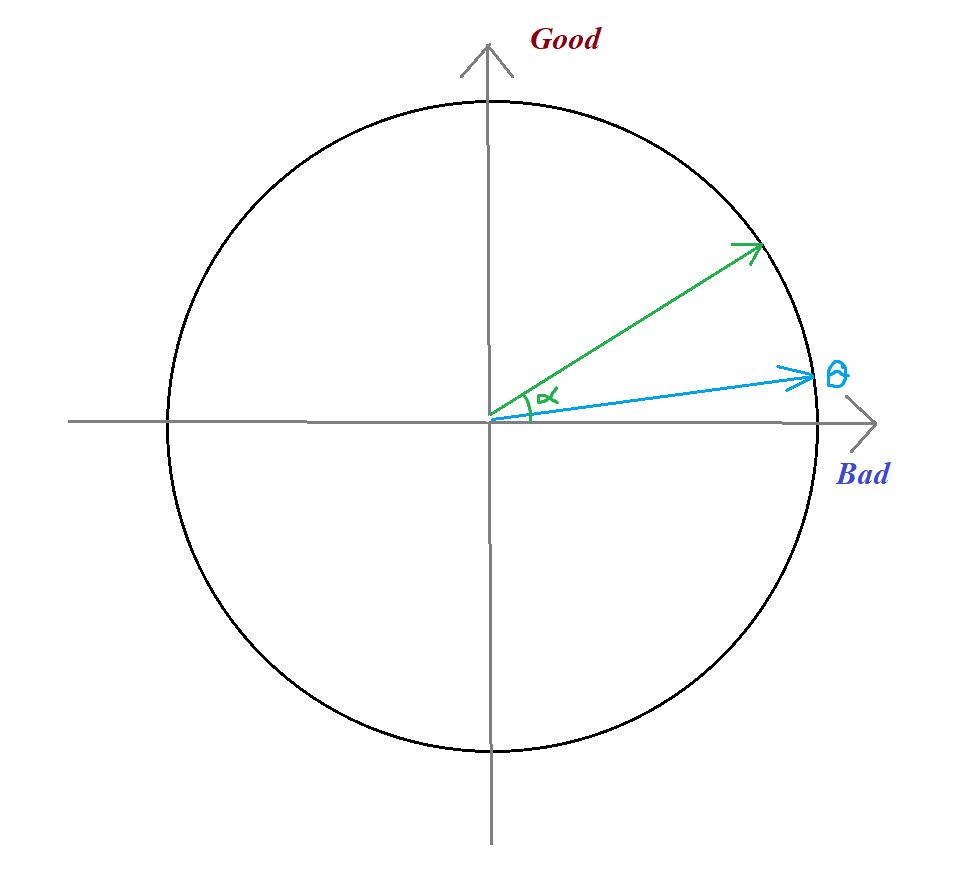}
\caption{The angle $\alpha$ that represents $\ket{\psi}$}
\end{center}
\end{figure}

What happens with $\alpha$ if we apply $Q$ or $D$ transformations?

Let us start with the $Q$ transformation.

\[Q:B_{(j)}\sum\limits_{i\in[n]\backslash \{k\}}|i\rangle+ G_{(j)}|k\rangle \to B_{(j)}\sum\limits_{i\in[n]\backslash \{k\}}|i\rangle - G_{(j)}|k\rangle \]

So, only the amplitude $G_{(j)}$ is changed. We can say that the transformation is 
\[Q:\sin\alpha \to -\sin\alpha = \sin(-\alpha)\]

Due to $\cos\alpha=\cos(-\alpha)$, it is equivalent to transformation
\[Q:\alpha \to -\alpha.\]
This is a reflection near the $0$ angle.

\begin{figure}[h]
\begin{center}
\includegraphics[height=5cm]{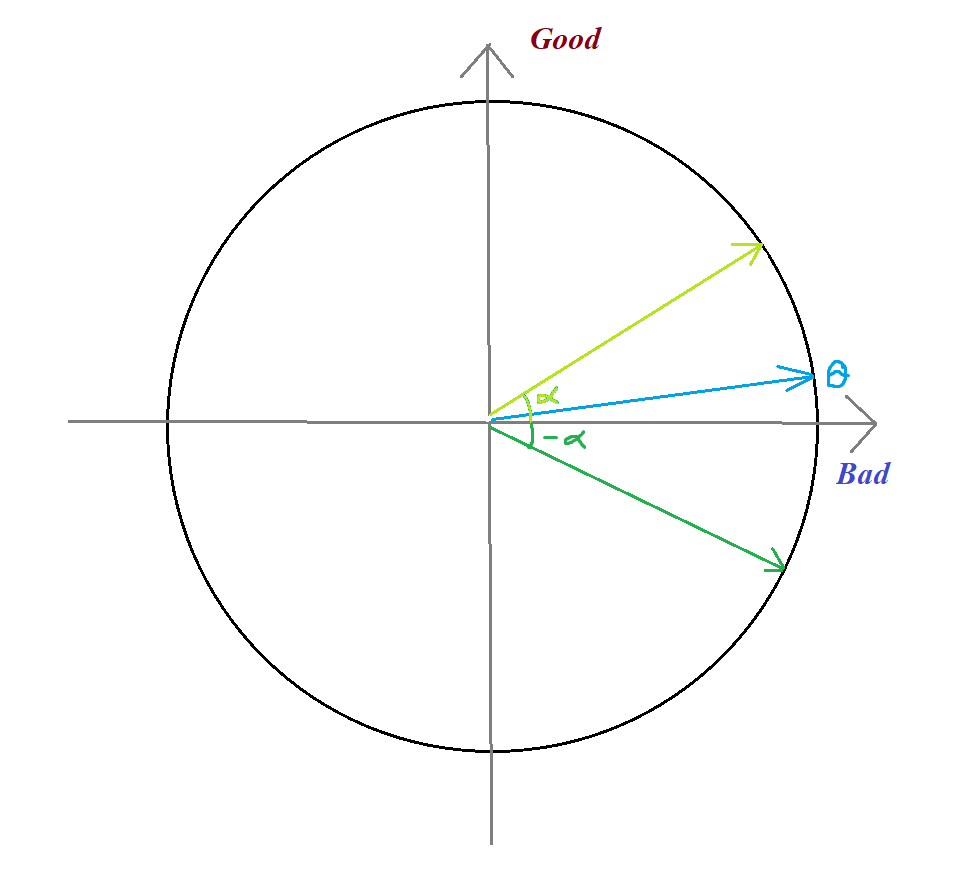}
\caption{The effect of a query to the angle $\alpha$ }
\end{center}
\end{figure}

Let us continue with the $D$ transformation. The matrix of $D$ is following

$D=\begin{pmatrix} 
\frac{2}{n}-1&\frac{2}{n}&\dots&\frac{2}{n}\\
\frac{2}{n}&\frac{2}{n}-1&\dots&\frac{2}{n}\\
\dots&\dots&\dots&\dots\\
\frac{2}{n}&\frac{2}{n}&\dots&\frac{2}{n}-1
\end{pmatrix}=2\begin{pmatrix} 
\frac{1}{n}&\frac{1}{n}&\dots&\frac{1}{n}\\
\frac{1}{n}&\frac{1}{n}&\dots&\frac{1}{n}\\
\dots&\dots&\dots&\dots\\
\frac{1}{n}&\frac{1}{n}&\dots&\frac{1}{n}
\end{pmatrix}-I^{\otimes \log n}$.

Let us denote $|\Psi\rangle=\begin{pmatrix} 
\frac{1}{\sqrt{n}}\\
\dots\\
\frac{1}{\sqrt{n}}
\end{pmatrix}$ and $\langle\Psi|=(|\Psi\rangle)^*=(\frac{1}{\sqrt{n}},\dots,\frac{1}{\sqrt{n}})$.
We can see that 

\[D=2 |\Psi\rangle\langle\Psi| - I^{\otimes \log n}\]

Let $|\Psi'\rangle$ be the orthogonal vector for $|\Psi\rangle$, i.e. $\langle\Psi||\Psi'\rangle=0$. Then we can represent any vector $\ket{\varphi}$ as a linear combination of $|\Psi\rangle$ and $|\Psi'\rangle$.
\[|\varphi\rangle=a|\Psi\rangle + a'|\Psi'\rangle\]

If we apply $D$ matrix to $\ket{\varphi}$, then we obtain the following result:
\[D|\varphi\rangle=(2 |\Psi\rangle\langle\Psi| - I)(a|\Psi\rangle + a'|\Psi'\rangle)=
2a|\Psi\rangle\langle\Psi||\Psi\rangle + 2a'|\Psi\rangle\langle\Psi||\Psi'\rangle-a|\Psi\rangle - a'|\Psi'\rangle=\]
Remember that $\langle\Psi||\Psi'\rangle=0$ by definition of $|\Psi'\rangle$ and $\langle\Psi||\Psi\rangle=\sum_{i=0}^{n-1}\frac{1}{\sqrt{n}}\cdot\frac{1}{\sqrt{n}}=n\cdot \frac{1}{n}=1$. Therefore,
\[=2a|\Psi\rangle-a|\Psi\rangle - a'|\Psi'\rangle=a|\Psi\rangle - a'|\Psi'\rangle\]

In other words, $D$ rotates near $0$ in the coordinate system with $|\Psi\rangle$ and $|\Psi'\rangle$ as directions of the coordinate axis.

Remember that $|\Psi\rangle=\begin{pmatrix} 
\frac{1}{\sqrt{n}}\\
\dots\\
\frac{1}{\sqrt{n}}
\end{pmatrix}$. It means $|\Psi\rangle=H^{\otimes\log n}\ket{0}$ that is the state after the initial step. So, we can see that $|\Psi\rangle$ corresponds to the $\theta$ angle and $|\Psi'\rangle$ corresponds to $\frac{\pi}{2}+\theta$ angle. Therefore, the $D$ transformation is the rotation near the $\theta$ angle. 

\begin{figure}[h]
\begin{center}
\includegraphics[height=5cm]{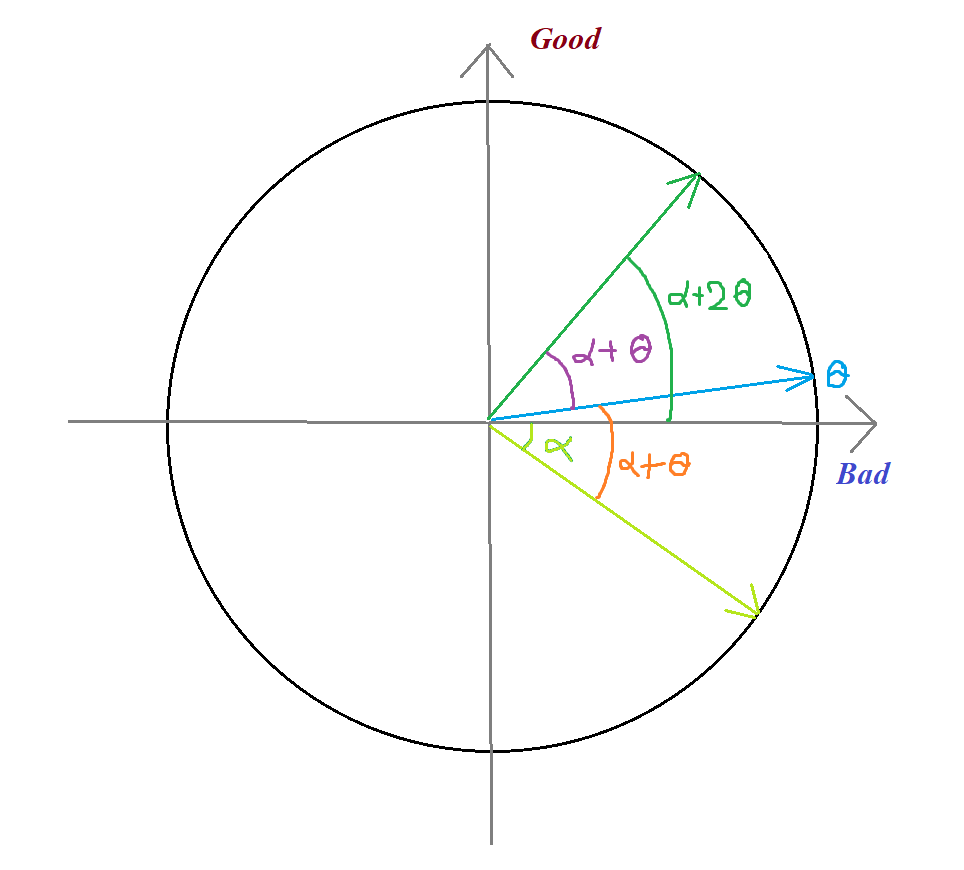}
\caption{The effect of query and diffusion to the angle $\alpha$ }
\end{center}
\end{figure}

If we consider the main step which is applying $Q$ and $D$, then we can see that
\[Q:\alpha\to-\alpha\]
\[D:-\alpha\to \alpha+2\theta\]
for $\alpha\geq \theta$.
Finally, the main step is equivalent to increasing $\alpha$ to $2\theta$.

On the zeroth (initial) step of the algorithm $\alpha\gets\theta$.
On the $i$-th step of the algorithm $\alpha=(2i+1)\theta$.
We stop on the $L$-th step of the algorithm when $\alpha=(2L+1)\theta\approx\frac{\pi}{2}$.

Due to $\lim_{\beta\to 0}\frac{\sin(\beta)}{\beta}=1$, we can say that $\theta \approx sin(\theta)$ for small $\theta$. Therefore,
\[ \frac{\pi}{2}=(2L+1)\theta \approx (2L+1)sin(\theta)=(2L+1)\frac{1}{\sqrt{n}},\]
and $L\approx \frac{\pi}{4}\sqrt{n}$.

In $L$ steps $G_{(L)}\approx\sin(\pi/2)=1$ and $B_{(L)}\approx\cos(\pi/2)=0$. The state is
\[\ket{\psi}\approx\ket{k}\]
If we measure the $\ket{\psi}$ register, then we obtain $k$ with almost $1$ probability.

You can see, that $L$ is an integer. Maybe we cannot find an integer $L$ such that $(2L+1)\theta=\pi/2$. But we can reach the difference between $\alpha$ and $\pi$ at most $\theta$ because one step size is $2\theta$. Therefore, the error probability is $B_{(L)}^2\leq cos(\frac{\pi}{2}-\theta)^2=sin(\theta)^2 =\left(\frac{1}{\sqrt{n}}\right)^2=\frac{1}{n}$.

%%%%%%%%%%%%%%%%%%%%%%%%%%%%%%%%%%%%%%%%%%%%
%           $t$-solutions Search Problem%%%%
%%%%%%%%%%%%%%%%%%%%%%%%%%%%%%%%%%%%%%%%%%%%
\subsection{$t$-solutions Search Problem}\label{sec:grover-fixed-t}

\paragraph{Deterministic solution.}
We can use the same deterministic algorithm (linear search) as for {\em single-solution search problem}. In the worst case, all solutions (indexes of ones) are at the end of the search space. The query complexity of such a solution is $O(n-t)=O(n)$ for a reasonable $t$.

\paragraph{Randomized (probabilistic) solution.}
We can use the same randomized algorithm (random sampling) as for {\em single-solution search problem} that is picking an element $i$ uniformly. Here we have a better success probability that is $Pr_{success}=\frac{t}{n}$. Using the ``Boosting success probability'' technique, we obtain $O(n/t)$ query complexity.

%%%%%%%%%%%%%%%%%%%%%%%%%%%%%%%%%%%%%%%%%%%%%%%
%%%   Grover t ones %%%%%%%%%%%%%%%%%%%%%%%
%%%%%%%%%%%%%%%%%%%%%%%%%%%%%%%%%%%%%%%

\subsubsection{Grover's Search Algorithm for $t$-solutions Search Problem}
Let $K=(k_1,\dots,k_t)$ be indexes of solutions, i.e. $\{i:x_i=1, i\in[n]\}=K$.
We use the same algorithm as for {\em single-solution search problem}. At the same time, you can see that the access to Oracle is changed. In fact, it is the same, but the ``Inverting the sign of an amplitude'' procedure  can be represented by another matrix $O$ that is 

\[Q=\begin{pmatrix} 
1&0&\dots&0&0&0&\dots&0\\
0&1&\dots&0&0&0&\dots&0\\
\dots&\dots&\dots&\dots&\dots&\dots&\dots&\dots\\
0&0&\dots&1&0&0&\dots&0\\
0&0&\dots&0&-1&0&\dots&0\\
0&0&\dots&0&0&1&\dots&0\\
\dots&\dots&\dots&\dots&\dots&\dots&\dots&\dots\\
0&0&\dots&0&0&0&\dots&1\\
\end{pmatrix},\]

 where $-1$s are situated in lines with indexes from $K$.
Grover's Diffusion transformation $D$ is not changed.

For obtaining good success probability, we should repeat the main step $O(\sqrt{\frac{n}{t}})$ times that is the difference with the ``single-solution'' case.

Let us discuss the complexity of the algorithm and explain the above claims in details.

\paragraph{Query Complexity and Error Probability} 
We have an analysis that is similar to the ``single-solution'' case.

Let us consider the state of $\ket{\psi}$ after the initial step
\[|\psi\rangle=\frac{1}{\sqrt{n}}\sum\limits_{i\in[n]}|i\rangle =\frac{1}{\sqrt{n}}\sum\limits_{i\in[n]\backslash K}|i\rangle+ \frac{1}{\sqrt{n}}\sum\limits_{i\in K}|i\rangle\]

After $j$ main steps we have the state

\[|\psi\rangle=B_{(j)}\sum\limits_{i\in\{1,\dots,n\}\backslash K}|i\rangle+ G_{(j)} \sum\limits_{i\in K}|i\rangle\]

Here $B_{(j)}$ is the amplitude for non-target states and $G_{(j)}$ is the amplitude for target states. You can verify that if we apply $Q$ and $D$, then amplitudes of all non-target states are equal and for all target states are equal also.
	
We know that 
\[1=B_{(j)}^2+\dots +B_{(j)}^2+G_{(j)}^2+\dots+G_{(j)}^2=(n-t)B_{(j)}^2 + tG_{(j)}^2=\left(\sqrt{n-t}B_{(j)}\right)^2 + \left(\sqrt{t}G_{(j)}\right)^2\]
In other words:
\[ \left(\sqrt{n-t}B_{(j)}\right)^2 + \left(\sqrt{t}G_{(j)}\right)^2= 1\]

We can choose an angle $\alpha$ such that $\sin\alpha = \sqrt{t}G_{(j)}$ and $\cos\alpha = \sqrt{n-t}B_{(j)}$.
This angle represents the state $\ket{\psi}$.

After the initial step $G_{(0)}=\sqrt{\frac{1}{n}}$. Therefore, $\sin \alpha = \sqrt{\frac{t}{n}}$ and $\alpha=\arcsin{\sqrt{\frac{t}{n}}}$. Let us define $\theta=\arcsin{\sqrt{\frac{t}{n}}}$.

What happens with $\alpha$ if we apply $Q$ or $D$ transformations?

Let us start with the $Q$ transformation.

\[Q:B_{(j)}\sum\limits_{i\in[n]\backslash K}|i\rangle+ G_{(j)}\sum\limits_{i\in K}|i\rangle \to B_{(j)}\sum\limits_{i\in[n]\backslash \{k\}}|i\rangle - G_{(j)}\sum\limits_{i\in K}|i\rangle \]

So, only the amplitudes $G_{(j)}$ of states with indexes from $K$ were changed. Similarly to the single solution case, we can say that the transformation is a reflection near the $0$ angle.
\[Q:\alpha \to -\alpha.\]

Let us continue with the $D$ transformation. It is still the rotation near  $|\Psi\rangle=\begin{pmatrix} 
\frac{1}{\sqrt{n}}\\
\dots\\
\frac{1}{\sqrt{n}}
\end{pmatrix}$ vector. Remember that  $|\Psi\rangle=H^{\otimes\log n}\ket{0}$, and in the new settings $|\Psi\rangle$ corresponds to $\arcsin{\sqrt{\frac{t}{n}}}=\theta$. So, the $D$ transformation is rotation near the new $\theta$ angle. 

We have the same actions for $Q$ and $D$:
$Q:\alpha\to-\alpha$ and 
$D:-\alpha\to \alpha+2\theta$ for $\alpha\geq \theta$.
Hence, the main step is equivalent to increasing $\alpha$ to $2\theta$.

We stop on the $L$-th step of the algorithm when $\alpha=(2L+1)\theta\approx\frac{\pi}{2}$. In that case

\[ \frac{\pi}{2}=(2L+1)\theta \approx (2L+1)sin(\theta)=(2L+1)\sqrt{\frac{t}{n}},\]
and $L\approx \frac{\pi}{4}\sqrt{\frac{n}{t}}$.

In $L$ steps $\sqrt{t}G_{(L)}\approx\sin(\pi/2)=1$ and $\sqrt{n-t}B_{(L)}\approx\cos(\pi/2)=0$. The state is
\[\ket{\psi}\approx\frac{1}{\sqrt{t}}\sum_{i\in K}\ket{i}\]
If we measure the $\ket{\psi}$ register, then we obtain elements from $K$ with almost $1$ probability. Each element from $K$ can be obtained with equal probability.

Similarly to the single-solution case,  $L$ is an integer. So, the error probability is $B_{(L)}^2\leq cos(\frac{\pi}{2}-\theta)^2=sin(\theta)^2 =\left(\sqrt{\frac{t}{n}}\right)^2=\frac{t}{n}$.

%%%%%%%%%%%%%%%%%%%%%%%%%%%%%%%%%%%%%%%%%%%%
%          unknown-number-of-solutions Search Problem%%%%
%%%%%%%%%%%%%%%%%%%%%%%%%%%%%%%%%%%%%%%%%%%%
\subsection{Grover's Search Algorithm for unknown-number-of-solutions Search Problem}\label{sec:grover-t}
If we do not know $t$, then we do not know $\theta$. So, we cannot compute $L$. If we do too many steps, then the angle can become $\pi$ which corresponds to a very small probability of obtaining a target state.
Therefore, we should know the moment for stopping. 

We use a classical technique that is close to the Binary search algorithm or Binary lifting technique.

\begin{itemize}
\item {\bf Step 1}. We invoke Grover's Search Algorithm with $1=2^0$ main step. Then, we measure the $\ket{\psi}$ register and obtain some state $\ket{i}$. If $x_i=1$, then we win. If $x_i=0$, then we continue. 
\item {\bf Step 2}. We invoke Grover's Search Algorithm with $2=2^1$ main steps. Then, we measure the $\ket{\psi}$ register and obtain some state $\ket{i}$. If $x_i=1$, then we win. If $x_i=0$, then we continue. 
\item {\bf Step 3}. We invoke Grover's Search Algorithm with $4=2^2$ main steps. Then, we measure the $\ket{\psi}$ register and obtain some state $\ket{i}$. If $x_i=1$, then we win. If $x_i=0$, then we continue. 
\item {\bf Step $j$}. We invoke Grover's Search Algorithm with $2^{j-1}$ main steps. Then, we measure the $\ket{\psi}$ register and obtain some state $\ket{i}$. If $x_i=1$, then we win. If $x_i=0$, then we continue. 
\item {\bf Step $\log_2\left(\frac{\pi}{4}\sqrt{n}\right)+1$}. We invoke Grover's Search Algorithm with $\frac{\pi}{4}\sqrt{n}$ main steps. Then, we measure the $\ket{\psi}$ register and obtain some state $\ket{i}$. If $x_i=1$, then we win. If $x_i=0$, then there are no target elements. In fact, we take the closest power of $2$ to $\frac{\pi}{4}\sqrt{n}$ for number of main steps, that is $2^{\lceil\log_2 (\frac{\pi}{4}\sqrt{n}) \rceil}$. 
\end{itemize}

Assume that if we know $t$, then we should do $L=O(\sqrt{\frac{n}{t}})$ steps. Let us look at the step $j$ such that $2^{j-1}\leq L < 2^j$. Therefore, the angle $\alpha_j=(2\cdot 2^{j-1}+1)\theta$ such that $(2\cdot \frac{L}{2}+1)\theta<\alpha_j \leq (2\cdot L+1)\theta$.

\begin{figure}[h]
\begin{center}
\includegraphics[height=5cm]{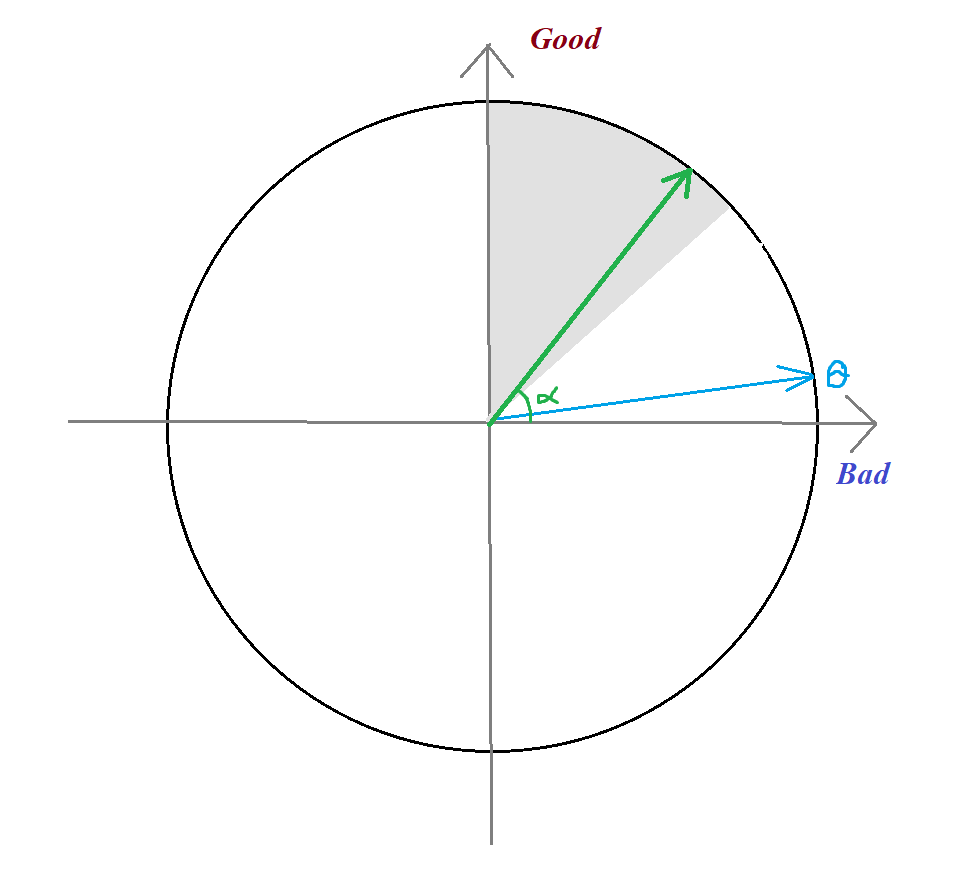}
%\caption{Qubit}
\end{center}
\end{figure}

So for this $j$ we can see that $\frac{1}{2}<(\sin(\alpha_j))^2\leq 1$. Therefore, with at least $0.5$ success probability we obtain the target element in Step $j=\log_2(\sqrt{\frac{n}{t}})+1$. If there is no solution, then we do all steps.

Let us compute the query complexity of steps from $1$ to $j$. It is $1+2+\dots+2^{j-1}=2^j-1$.
So we can say, that if there are $t$ solutions, then the expected number of queries is $O(\sqrt{\frac{n}{t}})$. If there are no solutions, then the number of queries is $O(\sqrt{n})$. 
\subsection{``Oracle generalization'' of Grover's Search Algorithm}
The main step of the algorithm consists of two matrices - $D$ and $O$. We discuss generalizations for both matrices one by one. Let us start with the generalization of the oracle.

Let us consider a more general search problem.

 {\bf General Search Problem} Given a function $f:\{0,\dots,n-1\}\to \{0,1\}$ one want to find an argument $x\in[n]=\{0,\dots,n-1\}$ such that $f(x)=1$.
 
 We want to use Grover's Search algorithm for the problem. Let us look at two quantum registers $\ket{\psi}$ and $\ket{\phi}$. The quantum register $\ket{\psi}$ of $\log_2 n$ qubits holds an index of an element in the original algorithm. Here we assume that $x_i=f(i)$ and $\ket{\psi}$ holds a possible argument of the function $f$. As in the original algorithm  $\ket{\phi}=\frac{\ket{0}-\ket{1}}{\sqrt{2}}$. If the query $Q$ has the following form, then we can use other parts of Grover's Search Algorithm as is.
\[Q:|i\rangle|\phi\rangle \to |i\rangle|\phi\oplus f(i)\rangle.\]
At the same time, in the original algorithm, we had the ``excluding or'' ($\oplus$) operation with a variable that can be implemented by the controlled-NOT gate. Here we should compute $f(i)$ first.
Assume that we have two additional registers that are $\ket{\varphi}$ of one qubit and $\ket{\zeta}$.

Let us have a unitary $U_{f(i)}$ for computing the function $f$. The register $\ket{\varphi}$ holds result of the function $f(i)$ and $\ket{\zeta}$ is additional memory for computing $f(i)$
So $U_{f(i)}$ is such that $U_{f(i)}:\ket{0}\ket{0}\to \ket{f(i)}\ket{\zeta'_i}$ for some specific state $\ket{\zeta'_i}$.
Let $U_{f(i)}^*=U_{f(i)}^{-1}$; in other words $U_{f(i)}:\ket{f(i)}\ket{\zeta'_i}\to\ket{0}\ket{0}$.

Let $U_f$ be the unitary that apply controlled operator $U_{f(i)}$ to $\ket{\varphi}\ket{\zeta}$ depending on the control register $\ket{\psi}$. So,
\[U_f:|i\rangle\ket{0}\ket{0} \to |i\rangle\ket{f(i)}\ket{\zeta'_i}.\]
\[U_f^{-1}:|i\rangle\ket{f(i)}\ket{\zeta'_i}\to|i\rangle\ket{0}\ket{0}.\]

For computing $Q$ we do the following actions:
\begin{itemize}
\item{\bf Step 1}  We apply $U_f$ to $\ket{\varphi}\ket{\zeta}$  as target registers and $\ket{\psi}$ as a control  register. After that $\ket{i}\ket{\phi}\ket{\varphi}\ket{\zeta}=|i\rangle\ket{\phi}\ket{f(i)}\ket{\zeta'_i}$.
\item{\bf Step 2}  We apply controlled NOT operator to $\ket{\phi} $ as a target and $\ket{\psi}\ket{\varphi}$ as a control register:
\[\ket{i}\ket{\phi}\ket{f(i)}\ket{\zeta'_i}\to \ket{i}\ket{\phi\oplus f(i)}\ket{f(i)}\ket{\zeta'_i}\]
\item{\bf Step 3}  We apply  $U_f^{-1}$ to $\ket{\varphi}\ket{\zeta}$  as target registers and $\psi$ as a control  register. After that $\ket{i}\ket{\phi}\ket{\varphi}\ket{\zeta}=|i\rangle\ket{\phi\oplus f(i)}\ket{0)}\ket{0}$.
\end{itemize}

The total sequence of actions is:
\[\ket{i}\ket{\phi}\ket{0}\ket{0}\xrightarrow{U_f}|i\rangle\ket{\phi}\ket{f(i)}\ket{\zeta'_i}\xrightarrow{\text{Controlled }NOT}\ket{i}\ket{\phi\oplus f(i)}\ket{f(i)}\ket{\zeta'_i}\xrightarrow{U_f^{-1}}|i\rangle\ket{\phi\oplus f(i)}\ket{0)}\ket{0}\]

So, we can use Grover's Search Algorithm but replace access to $x_i$ with access to $f(i)$.

Let us discuss the property of $f$ that is required for the algorithm.
\begin{itemize}
\item We can compute $f(i)$ with probability $1$ (with no errors) for each $i\in\{0,\dots,n-1\}$.
\item The algorithm for computing $f$ is reversible. In other words, $U_f$ and $U_f^{-1}$ exist.
\end{itemize}

Both restrictions can be removed using specific techniques. We discuss it in the next sections. Let us start with the second of two properties.
\subsubsection{Representing an Algorithm as a Unitary Transformation}\label{sec:dettoquant}
Assume that the algorithm has a non-reversible operator $g(x)$, where $x$ is the argument of the operator. Then, we can replace it with the operator $x\to(g(x),x)$. In other words, we can store not only the result of the operator but additionally the argument of the operator too. So, using the information we can implement the reverse operator that is $(g(x),x)\to x$. There is a more complex technique that requires less additional memory and is resented in the paper \cite{btv2001}.

Let us discuss representing a reversible probabilistic (randomized) algorithm as a quantum operator. Note, that any probabilistic (randomized) algorithm can be represented as a sequence of stochastic matrices applied to a vector of a probability distribution for states of memory. 
If we have a reversible probabilistic algorithm, then we can quantumly emulate it using the following steps:
\begin{itemize}
 \item The probabilistic distribution $p=(p_1,\dots,p_d)$ is replaced by a quantum state $|\psi\rangle=(\sqrt{p_1},\dots,\sqrt{p_d})$.
\item a stochastic transition matrix $M$ is replace by  a unitary $U$, where $U_{i,j}=\sqrt{M_{i,j}}$. 
\end{itemize}

If we have a quantum algorithm for computing a function, sometimes it uses intermediate measurement. We can move all measurements to the end using the ``principle of safe storage'' or the ``principle of deferred measurement''. Let us explain it.
We show it in the one-qubit case, but it can be easily generalized to the multiple-qubits case.
The intermediate measurement is useful if we apply $U$ depending on the results of a measurement. Assume that we have one control qubit $\ket{\psi}=a_0\ket{0}+a_1\ket{1}$ and one target qubit $\ket{\varphi}=b_0\ket{0}+b_1\ket{1}$.  Let us look at two cases. The first case is applying measurement to $\ket{\psi}$, then transformation $U=\begin{pmatrix}u_{00},&u_{01}\\u_{10},&u_{11}\end{pmatrix}$ to $\ket{\varphi}$ if the result of measurement is $\ket{1}$. The second case is applying control-$U$ to $\ket{\varphi}$ with control bit $\ket{\psi}$. Then we measure $\ket{\psi}$.

Let us look at the first case. After a measurement we obtain $\ket{0}$ with probability $|a_0|^2$ and $\ket{1}$ with probability $|a_1|^2$. After applying $U$, we obtain \[\ket{\varphi}\ket{\psi}=b_0\ket{00} +b_1\ket{10}\mbox{ with probability $|a_0|^2$}\]
 and 
\[\ket{\varphi}\ket{\psi}=(b_0u_{00} + b_1u_{01})\ket{01} +(b_0u_{10} + b_1u_{11})\ket{11}\mbox{ with probability $|a_1|^2$.}\] 
%So, if we measure the state after that, then we obtain $\ket{0}$ with probability $(a_0u_{00})^2+(a_1u_{01})^2$ and we obtain $\ket{1}$ with probability $(a_0u_{10})^2+(a_1u_{11})^2$.

Let us look at the second case. After applying control-$U$, we obtain 
\[\ket{\varphi}\ket{\psi}=a_0b_0\ket{00} +a_0b_1\ket{10} + a_1(b_0u_{00} + b_1u_{01})\ket{01} +a_1(b_0u_{10} + b_1u_{11})\ket{11}\] 
Then we measure $\psi$ and obtain
\[\ket{\varphi}\ket{\psi}=b_0\ket{00} +b_1\ket{10}\mbox{ with probability $|a_0|^2$}\]
 and 
\[\ket{\varphi}\ket{\psi}=(b_0u_{00} + b_1u_{01})\ket{01} +(b_0u_{10} + b_1u_{11})\ket{11}\mbox{ with probability $|a_1|^2$.}\] 
The situation is the same as in the first case.

The difference is the following. In the first case, we can reuse the qubit $\ket{\psi}$ after measurement. In the second case, we should keep the qubit $\ket{\psi}$ before measurement.
So, the moving measurement to the end increases the size of memory. At the same time, there is a new result that provides a method for moving measurement to the end with small additional memory \cite{fr2021}.
%If we have a quantum algorithm that uses inner measurement, then we can do all measurements at the end

\subsubsection{Complexity of the algorithm}

Let us discuss the complexity of the algorithm. The complexity of one query is the complexity of $U_f$ and $U_f^{-1}$. Let the complexity of computing $f$ is $T(f)$. So, the complexity of one query is $2T(f)$ and the total complexity is $O(T(f)\cdot \sqrt{n})$. 

\subsection{Amplitude Amplification or ``Diffusion generalization'' of Grover's Search Algorithm}\label{sec:amplampl}
As we have discussed in ``Query Complexity and Error Probability'' of Section \ref{sec:grover}, the diffusion is a rotation near $\theta$ that represented by $|\Psi\rangle=\begin{pmatrix} 
\frac{1}{\sqrt{n}}\\
\dots\\
\frac{1}{\sqrt{n}}
\end{pmatrix}$. We have two questions. The first one is ``Why should we rotate near exactly this angle $\theta$?''. Another question is ``Can we implement a quantum version of boosting for not Random sampling algorithm, but for some other algorithm?''. For answering these questions we consider any other quantum algorithm that does not have an intermediate measurement. So, this algorithm can be represented as a unitary matrix $A$. Assume that the algorithm finds the target element after one invocation with probability $p$. In other words, if we apply $A$ to $\ket{0}$ and measure it after that, then the target element can be obtained with probability $p$. Formally, if target elements correspond to elements from a set $K=\{k_1,\dots,k_t\}\subset [n]$, then
\[A\ket{0}=\sum_{i=1}^n a_i\ket{i}\mbox{ where }|a_{k_1}|^2+\dots+|a_{k_t}|^2=p\]
So we can say that $\sum\limits_{i\in K}|a_i|^2 + \sum\limits_{i\in[n]\backslash K}|a_i|^2 = 1$.
Let $\theta_A$ be such that 
\[\sin \theta_A = \sqrt{p}=\sqrt{\sum_{i\in K}|a_i|^2}\mbox{ and }\cos \theta_A = \sqrt{1-p}= \sqrt{\sum_{i\in[n]\backslash K}|a_i|^2}.\]  
Let $\ket{\Psi_A}=A\ket{0}$, and $D_A = 2\ket{\Psi_A}\bra{\Psi_A} - I^{\otimes \log n}$ like it was for Grover's search. Let us look at the angle $\alpha$ on the unit circle, where $\ket{1}$ corresponds to the target elements and $\ket{0}$ to non-target elements. So, we can see that we have a situation similar to the standard version of Grover's search algorithm. So, $\ket{\Psi_A}$ vector corresponds to the angle $\theta_A$, $D_A$ matrix rotates $\alpha$ near $\theta_A$, and $O$ converts $\alpha$ to $-\alpha$.

Therefore, the main step (the application of $Q$ and $D_A$) is equivalent to increasing $\alpha$ to $2\theta_A$. On the $i$-th step of the algorithm $\alpha=(2i+1)\theta_A$.
We stop on the $L$-th step of the algorithm when $\alpha=(2L+1)\theta_A\approx\frac{\pi}{2}$.
\[ \frac{\pi}{2}=(2L+1)\theta_A \approx (2L+1)sin(\theta_A)=(2L+1)\sqrt{p},\]
and $L\approx \frac{\pi}{4}\frac{1}{\sqrt{p}}$.

In $L$ steps $G_{(L)}\approx\sin(\pi/2)=1$ and $B_{(L)}\approx\cos(\pi/2)=0$. The state is
\[\ket{\psi}\approx\ket{k}\]
If we measure the $\ket{\psi}$ register, then we obtain $k$ with almost $1$ probability.

You can see, that $L$ is an integer. Maybe we cannot find an integer $L$ such that $(2L+1)\theta_A=\pi/2$. But we can reach the difference between $\alpha$ and $\pi$ at most $\theta_A$ similar to Grover's search case.Therefore, the error probability is $B_{(L)}^2\leq cos(\frac{\pi}{2}-\theta_A)^2=sin(\theta_A)^2 =(\sqrt{p})^2=p$.

Let us discuss $D_A$.
\[D_A=2\ket{\Psi_A}\bra{\Psi_A}-I^{\otimes\log n}=A 2\ket{0}\bra{0}A^* - AI^{\otimes\log n}A^{*}=\]\[A(2\ket{0}\bra{0}A^* - I^{\otimes\log n}A^{*})=A(2\ket{0}\bra{0} - I^{\otimes\log n})A^{*}=ARA^{*}=ARA^{-1}\]
where \[R=2\ket{0}\bra{0} - I^{\otimes\log n}=
\begin{pmatrix} 
2&0&0&\dots&0&0\\
0&0&0&\dots&0&0\\
0&0&0&\dots&0&0\\
\dots&\dots&\dots&\dots&\dots&\dots\\
0&0&0&\dots&0&0\\
0&0&0&\dots&0&0
\end{pmatrix}+\begin{pmatrix} 
-1&0&0&\dots&0&0\\
0&-1&0&\dots&0&0\\
0&0&-1&\dots&0&0\\
\dots&\dots&\dots&\dots&\dots&\dots\\
0&0&0&\dots&-1&0\\
0&0&0&\dots&0&-1
\end{pmatrix}
\]\[R=\begin{pmatrix} 
1&0&0&\dots&0&0\\
0&-1&0&\dots&0&0\\
0&0&-1&\dots&0&0\\
\dots&\dots&\dots&\dots&\dots&\dots\\
0&0&0&\dots&-1&0\\
0&0&0&\dots&0&-1
\end{pmatrix}\]

So the main step is the application of $Q$ and $D_A=ARA^{*}=ARA^{-1}$. As we discussed in Section \ref{sec:dettoquant} we can convert any deterministic, probabilistic (randomized), or quantum (even with intermediate measurement) algorithm to the unitary matrix maybe with increasing memory size.

The presented algorithm is called the Amplitude amplification algorithm \cite{bhmt2002}. Note, that we can combine the generalization of diffusion and the generalization of oracle. Finally, we obtain an algorithm that can amplify or boost the success probability of an algorithm $A$ with success probability $p$ and complexity $T(A)$. 

If we search an argument of a function $f$ with query complexity $O(T(f))$, then the complexity of oracle access is $O(T(f))$ as was discussed in the previous section. The complexity of diffusion is $T(A)+T(A^*)=O(T(A))$. So, the complexity of the main step is $O(T(A)+T(f))$ and the number of steps is $O(\frac{1}{\sqrt{p}})$. The total complexity is $O((T(A)+T(f))\cdot\frac{1}{\sqrt{p}})$.
\subsection{Randomized Oracle or Bounded-error Input for Grover's Search Algorithm}\label{sec:random-oracle}
Let us consider the Bounded-error Input case. Let us present the problem formally.

Given a function $f:\{0,\dots,n-1\}\to \{0,1\}$ one want to find the $x\in[n]=\{0,\dots,n-1\}$ such that $f(x)=1$. Access to the function $f$ is provided via a function $g:\{0,\dots,n-1\}\to \{0,1\}$. If $f(x)=0$ then $g(x)=0$. If $f(x)=1$, then $g(x)=1$ with probability $\varepsilon$ and $g(x)=0$ with probability $1-\varepsilon$ for $0<\varepsilon<1$.

The one way for solving the problem is repeating a query to $g(x)$. If we query it $3\log_2 n$ times and return $0$ iff all of queries return $0$, then the probability of error is at most $(1-\varepsilon)^{3\log_2 n}=O(\frac{1}{n^3})$. If we consider the whole input, that contains $n$ values, then we can see that error events are independent, and the probability of error in at least one element is $O(\frac{1}{n^2})$. The probability of error in at least one element in at least one of $O(\sqrt{n})$ steps is $O(\frac{1}{n^{1.5}})$. The probability error of Grover's search algorithm itself is constant at most $1/3$. Finally, the total error probability is $1/3+O(\frac{1}{n^{1.5}})\leq 0.4$. 

So, we obtain the algorithm with good error probability. The total complexity of the algorithm is $O(T(g)\sqrt{n}\log n)$, where $T(g)$ is the complexity of computing $g$. In other words, we increase the complexity in $\log n$ times.

At the same time, we can obtain similar results without an additional $\log n$ factor.

Assume that we consider an algorithm $A$ that randomly picks an element $i\in\{0,\dots,n-1\}$ and returns the successful answer if $g(i)=1$. What is the probability of success for the such algorithm? It is $p=\frac{\varepsilon}{n}$. Therefore, we can amplify it and obtain a target element with constant error probability and complexity $O(T(g)\cdot \frac{\sqrt{n}}{\varepsilon})$ that is $O(T(g)\cdot \sqrt{n})$ if $\varepsilon=const$.

Similarly, we can act in the case of two-side errors. In other words, in the case of probability $f(x)\neq g(x)$ is less or equal to $\varepsilon$.

The results of this section are presented in \cite{hmw2003, abikkpssv2020}. 
\section{Basic Applications of Grover's Search Algorithm. First One Search, Minimum Search, LCP Problems}
Let us discuss several problems where we can apply Grover's Search algorithm.
\subsection{String Equality Problem}
{\bf Strings Equality Problem}
 \begin{itemize}
 \item We have two strings $s=s_1,\dots,s_n$ and $t=t_1,\dots,t_m$, where $t_i,s_i\in\{0,1\}$.
 \item Check whether these two strings are equal
 \end{itemize}
The algorithm is based on \cite{ki2019} results. At first, we check whether $n=m$. If it is true, then we continue.
Let us consider a search function $f:[n]\to\{0,1\}$ where $[n]=\{0,\dots,n-1\}$. Let $f(i)=1$ iff $t_i\neq s_i$. Our goal is to search any $i_0$ such that $f(i_0)=1$. If we find such $i_0$, then $s\neq t$. If we cannot find such $i_0$, then $s=t$. We do not know how many unequal symbols these two strings have. Therefore, we can apply Grover's Search algorithm with an unknown number of solutions. The complexity of computing $f$ is constant. Therefore, the complexity of the total algorithm is $O(\sqrt{n})$, and the error probability is constant. 
\subsection{Palindrome Checking Problem}
{\bf Polyndrom Checking Problem}
 \begin{itemize}
 \item We have a string $s=s_1,\dots,s_n$, where $s_i\in\{0,1\}$.
 \item We want to check whether $s_1,\dots,s_n=s_n,\dots,s_1$, i.e. $s=s^R$.
 \end{itemize}
 
Let us consider a search function $f:[n/2]\to\{0,1\}$ where $[n/2]=\{0,\dots,n/2-1\}$. Let $f(i)=1$ iff $s_i\neq s_{n-i-1}$. Our goal is to search any $i_0$ such that $f(i_0)=1$. If we find such $i_0$, then $s\neq s^R$. If we cannot find such $i_0$, then $s=s^R$. We do not know how many unequal symbols we have. Therefore, we can apply Grover's Search algorithm with an unknown number of solutions. The complexity of computing $f$ is constant. Therefore, the complexity of the total algorithm is $O(\sqrt{n})$, and the error probability is constant.
\subsection{Minimum and Maximum Search problem}\label{sec:max}
{\bf Minimum Search Problem}
 \begin{itemize}
 \item A vector of integers $a=a_1,\dots,a_n$ is given.
 \item We search $MIN=\min\{a_1,\dots,a_n\}$.
 \end{itemize}
 Here we present the algorithm from \cite{dh96,dhhm2004}.
 The algorithm is based on iterative invocations of Grover's Search algorithm. Each iteration improves the answer from the previous iteration.
 \begin{itemize}
 \item Iteration 0. We uniformly randomly choose an index $j_0\in_R[n]$.
 \item Iteration 1. Let use define the function $f_1:[n]\to\{0,1\}$ such that $f_1(i)=1$ iff $a_i<a_{j_0}$. In other words, it marks elements that improve the answer from Iteration 0. We invoke Grover's Search for $f_1$. The result is $j_1$. Due to behavior of Grover's Search algorithm, $j_1\in_R\{i:f_1(i)=1, i\in[n]\}$. It is chosen from the set with equal probability.
\item Iteration 2. Let use define the function $f_2:[n]\to\{0,1\}$ such that $f_2(i)=1$ iff $a_i<a_{j_1}$. We invoke Grover's Search for $f_2$. The result is $j_2$. We can say that $j_2\in_R\{i:f_2(i)=1, i\in[n]\}$.
\item Iteration 3. Let use define the function $f_3:[n]\to\{0,1\}$ such that $f_3(i)=1$ iff $a_i<a_{j_2}$. We invoke Grover's Search for $f_3$. The result is $j_3$. We can say that $j_3\in_R\{i:f_3(i)=1, i\in[n]\}$.
\item ...
\item If on an iteration $m$ Grover's Search algorithm does not find an element, then it means we cannot improve the existing solution $j_{m-1}$. So, the solution is $a_{j_{m-1}}$. 
 \end{itemize}
 
We can see that the algorithm can be used for the Maximum search problem. 
Assume that we want to search $MAX=\max\{a_1,\dots,a_n\}$.  It is easy to see that $MAX= - \min\{-a_1,\dots,-a_n\}$. In other words, if we find the minimum of $-a_i$ (we invert the sign of the elements), then after inverting the sign of the result back we obtain the result.

In fact, the algorithm returns not only $MIN$ or $MAX$ itself, but the index of the target element also. If there are several elements with the minimal (maximal) value, then it returns one of them with equal probability.
 
 \subsubsection{Complexity and Error Probability of Minimum Search Algorithm }
 Let us consider the complexity of the algorithm.
 Let $m$ be the number of iterations until reaching the minimum.
 Let $t_i$ be the running time of $i$-th iteration. Note that $t_i$ is a random variable.
 Let $T=t_0+\dots+t_m$ be the total running time. $T$ is a random variable.
 We want to compute the expected value of the total running time $\mathbb{E}T$.
Let us present $T$ in another form
 \[T\leq q_1\cdot c\sqrt{\frac{n}{1}}+ q_2\cdot c\sqrt{\frac{n}{2}}+\dots+ q_n\cdot c\sqrt{\frac{n}{n}}.\] Here $q_j$ is an indicator of corresponding term with a value $0$ or $1$, $c=const$. 
 Let us explain this sum. Due to $t_i$ is complexity of Grover's search of some iteration, $t_i\in\{c\sqrt{\frac{n}{v}}, v\in\{1,\dots,n\}\}$. Additionally, we can say that each search function $f_i$ marks strictly smaller than $a_{j_{i-1}}$ elements, it means number of solutions are such that $|\{j:f_i(j)=1\}|<|\{j:f_{i-1}(j)=1\}|$. Therefore, the sum $T$ contains only terms from $\{c\sqrt{\frac{n}{v}}, v\in\{1,\dots,n\}\}$, end each term can be met at most once.
 
Due to the linearity of the expected value, we can say that 
\[\mathbb{E}T=\sqrt{\frac{n}{1}}\mathbb{E}q_1 +\dots + \sqrt{\frac{n}{n}}\mathbb{E}q_n\]
\[\mathbb{E}q_j=1\cdot p_j + 0\cdot(1-p_j)=p_j,\]
where $p_j$ is a probability of occurrence of $\sqrt{\frac{n}{j}}$ term. In other words, probability of choosing $a_i$ such that $|\{z:a_z<a_i\}|=j$. It is because $\sqrt{\frac{n}{j}}$ means that after choosing such an element, we invoke Grover search and the number of marked elements is $j$.

Hence, we have
\[\mathbb{E}T=\sqrt{\frac{n}{1}}p_1 +\dots + \sqrt{\frac{n}{n}}p_n\]
The probability $p_j\leq \frac{1}{j}$. We shall show it later. Therefore,
\[\mathbb{E}T\leq\sum\limits_{j=1}^{n}\frac{1}{j}\cdot c\sqrt{\frac{n}{j}}=c\sqrt{n}+c\sqrt{n}\sum\limits_{j=2}^{n}j^{-1.5}\leq \]
\[c\sqrt{n}+c\sqrt{n}\int\limits_{2}^nj^{-1.5}dj=c\sqrt{n}+c\sqrt{n}(-\frac{1}{2\sqrt{j}})\Big|_2^n=\]
\[=c\sqrt{n}+c\sqrt{n}(\frac{1}{2\sqrt{2}}-\frac{1}{2\sqrt{n}})\leq c\sqrt{n}+c\sqrt{n}\frac{1}{2\sqrt{2}}=O(\sqrt{n})\]
Due to Markov inequality, we have $T=O(\sqrt{n})$ with constant probability, for example, $0.9$.

Let us show that $p_j\leq \frac{1}{j}$.
Assume that we already have proved the claim for all $z\geq j$ and $p_z\leq\frac{1}{z}$.
Let us consider the event $B_j$ that is an occurrence of $c\sqrt{\frac{n}{j}}$. In other words, probability of choosing $a_i$ such that $rank(a_i)=j$, where $rank(a_i)=|\{z:a_z<a_i\}|$.
This event can happen if exactly before $i$-th element we have chosen some element $i'$ with $rank(a_{i'})>j$ because in that case $a_{i'}>a_i$.
The event that $i'$ was chosen and after that $i$ was chosen is $D_{i'}$.  Let us compute the probability  $p(D_{i'})$ of the event $D_{i'}$. The probability is the product of probabilities for two independent events:
\begin{itemize}
\item $a_{i'}$ was chosen. Let $rank(a_{i'})=w$. Then, the probability of this event is the same as the probability of $c\sqrt{\frac{n}{w}}$ occurrence, that is $p(B_w)\leq\frac{1}{w}$.
\item $a_{i}$ was chosen exactly after $a_{i'}$. After choosing $a_{i'}$, the Grover's search returns any element from $\{z:a_z<a_{i'}\}$ with equal probability. Probability of choosing exactly $i\in\{z:a_z<a_{i'}\}$ is $\frac{1}{rank(a_{i'})}=\frac{1}{w}$.
\end{itemize} 

So, the probability of $D_{i'}$ is

\[p(D_{i'})=p(B_{w})\cdot\frac{1}{w}\leq \frac{1}{w^2}\mbox{, where }rank(a_{i'})=w.\]
Let us compute the probability  $p(B_j)$ of the event $B_j$. Any $i'$ such that $rank(a_{i'})>j$ can be before $i$. So, the probability $p(B_j)$ is sum of corresponding $p(D_{i'})$
\[p(B_j)=\sum\limits_{rank(a_{i'})>j}p(D_{i'})\leq\sum\limits_{w=j+1}^{n}\frac{1}{w^2}\leq \int\limits_{j+1}^{n}w^{-2}dz=-\frac{1}{w}\Big|_{j+1}^n=\frac{1}{j+1}-\frac{1}{n}\leq \frac{1}{j}.\]

 \subsubsection{Boosting the Success Probability}
We have proved, that the success probability is at least $0.9$ and the error probability is at most $0.1$. Let us obtain the success probability $1-O(\frac{1}{n})$ and error probability $O(\frac{1}{n})$.

The error of the algorithm means returning $a_i$ that is not the minimum. Let us repeat the algorithm $k$ times and choose the minimum of the answers. In that case, we have an error iff all invocations of the algorithm have an error. Therefore error probability of $k$ invocations is $0.1^k$. Let $k=(-\log_20.1)\cdot\alpha\cdot \log_2n$ for some constant $\alpha$. Then, the error probability is $0.1^k=0.1^{(-\log_20.1)\cdot\alpha\log_2n}=2^{-\alpha\log_2n}=n^{-\alpha}=O(\frac{1}{n^{\alpha}})$.
We can choose $\alpha=1$ or any other positive constant or function that we need. 
 
\subsection{First One Search Problem}\label{sec:first-one}
{\bf The First One Search Problem}
 \begin{itemize}
 \item We have a function $f(i)$ where $f:\{0,\dots,n-1\}\to\{0,1\}$.
 \item Find the minimal $x$ such that $f(x)=1$. 
 \end{itemize}
 
 We have two algorithms for the problem. The first one is based on the minimum search algorithm and was presented in \cite{dhhm2004,k2014,ll2015,ll2016}.
Let us consider a function $g:[n]\to[2]\times[n]$ that is $g(i)=(1-f(i),i)$. Then, the target element $g(x)=(1-f(x),x)$ is the lexicographical minimum of $g$.
We can apply the minimum search algorithm for the problem. Complexity is $O(\sqrt{n})$.

The second one is based on Grover's search algorithm and the Binary search algorithm. It was presented in \cite{kkmsy2020}.

Assume we have $i\in\{l,\dots,r\}$ such that that $f(i)=1$ for some $l$ and $r$ where $0\leq l\leq r\leq n-1$. At the same time, we assume that $f(i)=0$ for any $0\leq i<l$.
Let $mid = (l+r)/2$. If $Grover(l,mid,f)=1$, then we can be sure that the first one in $\{l,\dots,mid\}$ and we update $r\gets mid$, else $l\gets mid+1$ because the target element in $\{mid+1,\dots,r\}$.
So we start from $l=0$ and $r=n-1$. After each step that was presented before, we always have $x\in\{l,\dots,r\}$. We stop when $l=r$.

The Complexity of such solution is \[O\left(\sqrt{n}+\sqrt{\frac{n}{2}}+\sqrt{\frac{n}{2^2}}+\dots+\sqrt{\frac{n}{2^{\lceil\log_2 n\rceil}}}\right)\leq O\left(\sqrt{n}\sum\limits_{i=0}^{\lceil\log_2 n\rceil}\frac{1}{2^{0.5i}}	\right)=O\left(\sqrt{n}\right)\]

 Success probability is $(\frac{1}{2})^{\log_2 n}=\frac{1}{n}$. It is a very small success probability! Let us boost it. For this reason, we repeat the invocation of Grover's Search algorithm on $i$-th step of Binary search $2i$ times. The complexity of such solution is \[\bigo{\sqrt{n}+2\cdot 1\cdot\sqrt{\frac{n}{2}}+2\cdot2\cdot\sqrt{\frac{n}{2^2}}+\dots+2\lceil\log_2 n\rceil\sqrt{\frac{n}{2^{\lceil\log_2 n\rceil}}}}\leq \bigo{\sqrt{n}\sum\limits_{i=0}^{\lceil\log_2 n\rceil}\frac{i}{2^{0.5i}}}=\bigo{\sqrt{n}}\]
Error probability on the $i$-the step of Binary search is $0.25^i$, where $p=0.25$ is an error of Grover's Search.
The total error probability is \[\sum_{i=1}^n0.25^i\leq\sum_{i=1}^\infty 0.25^i=\frac{0.25}{1-0.25}=\frac{1}{3}.\]
\subsection{Searching with Running Time that Depends on the Answer}\label{sec:first-one2}
Let us present the algorithm with $O(\sqrt{x})$ running time where $x=min\{i\in[n]: f(i)=1\}$ the target element. The algorithm is presented in \cite{k2014,ll2015,ll2016,kkmsy2020}.
Let us consider a search border $b$. The algorithm is following.
\begin{itemize}
\item Iteration 1. We set $b\gets 2^1$. We invoke the first one search algorithm for $[0,b-1]$.  If the solution is found, then stop, otherwise, continue.
\item Iteration 2.  We set $b\gets 2^2$. Invoke the first one search algorithm for $[0,b-1]$. If the solution is found, then stop, otherwise, continue.
\item Iteration 2.  We set $b\gets 2^3$. Invoke the first one search algorithm for $[0,b-1]$. If the solution is found, then stop, otherwise, continue.
\item ...
\item Iteration $z$.  We set $b\gets 2^z$. Invoke the first one search algorithm for $[0,b-1]$. If the solution is found then stop, otherwise continue.
\end{itemize}

The process is stopped when $2^{z-1}<x\leq 2^z$. So, we can say that $z=\lceil \log_2 x\rceil$.
The complexity is $O(\sqrt{2^1}+\dots+\sqrt{2^z})=O(\sqrt{2^z})=O(\sqrt{x})$.

The error probability of this method is big, at the same time, it can be reduced without increasing complexity using the technique described in  \cite{k2014,ll2015,ll2016} or the method similar to the Binary search's way to decrease the error probability (repeating invocation several times).

\subsection{All Ones Search Problem}\label{sec:all-ones}
{\bf All Ones Search Problem}
 \begin{itemize}
 \item We have a function $f(i)$ where $f:[n]\to\{0,1\}$.
 \item Our goal is to find all  $x_1,\dots,x_t$ such that $f(x_i)=1$ for $i\in[n]$. 
 \end{itemize}
 
 For the problem, we can use the first one search algorithm. 
 The algorithm is following
\begin{itemize}
\item Iteration 1. We search the first solution $x_1$ in the segment $[0,n-1]$.
\item Iteration 2. We search the first solution $x_2$ in the segment $[x_1+1,n-1]$.

\item ...
\item Iteration $t$.  We search the first solution $x_t$ in the segment $[x_{t-1}+1,n-1]$.
\item Iteration $t+1$. We search for the first solution in the segment $[x_{t}+1,n-1]$. On this iteration, we do not find any solution and stop the process.
\end{itemize}
Complexity of the iteration 1 is $O(\sqrt{x_1})$, complexity of an iteration $i$ is $O(\sqrt{x_i-x_{i-1}})$ for $i\in\{2,\dots,t\}$. Complexity of the iteration $t+1$ is $O(\sqrt{n-x_t})$.
The total complexity is \[\bigo{\sqrt{x_1}+\sqrt{x_2-x_{1}}+\dots+\sqrt{x_t-x_{t-1}}+\sqrt{n-x_t}}.\] Due to Cauchy–Bunyakovsky-Schwarz inequality the complexity is less or equal to \[\bigo{\sqrt{t\cdot(x_1+x_2-x_{1}+\dots+x_t-x_{t-1}+n-x_t)}}=\bigo{\sqrt{tn}}.\]

The algorithm and the way of reducing error probability are presented in \cite{k2014}.

\subsection{The Longest common prefix (LCP) Problem}
{\bf The Longest common prefix (LCP) Problem}
 \begin{itemize}
 \item We have two strings $s=s_0,\dots,s_{n-1}$ and $t=t_0,\dots,t_{m-1}$, where $t_i,s_i\in\{0,1\}$.
 \item The goal is to find the maximum $j$ such that $s_1=t_1,\dots,t_j=s_j$ and $t_{j+1}\neq s_{j+1}$. If there is no such $j$, then $j=\min\{n,m\}$. It happens when one string is a prefix of the second one.
 \end{itemize}
 The algorithm is presented in \cite{kkmsy2020}.  
 Let $k=\min\{n,m\}$.
Let us consider a search function $f:[k]\to\{0,1\}$ where $[k]=\{0,\dots,k-1\}$. Let $f(i)=1$ iff $t_i\neq s_i$. Our goal is searching {\bf the minimal} $i_0$ such that $f(i_0)=1$. If we find such $i_0$, then the length of LCP of $s$ and $t$ is $j=i_0-1$. If we cannot find such $i_0$, then one string is a prefix of the second one and $j=k$. We do not know how many unequal symbols these two strings have. Therefore, we can apply Grover's Search algorithm with an unknown number of solutions. The complexity of computing $f$ is constant. Therefore, the complexity of total algorithm is $O(\sqrt{j}\cdot O(1))=O(\sqrt{j})=O(\sqrt{\min\{n,m\}})$ and error probability is constant. 

\subsection{Comparing Two Strings in Lexicographical Order}\label{sec:str-cmp}
{\bf Comparing Two Strings in Lexicographical Order}
 \begin{itemize}
 \item We have two strings $s=s_0,\dots,s_{n-1}$ and $t=t_0,\dots,t_{m-1}$, where $t_i,s_i\in\{0,1\}$.
 \item Our goal is one of three following options:
 \begin{itemize}
 \item return $-1$ if $s<t$ in lexicographical order;
 \item return $+1$ if $s>t$ in lexicographical order;
 \item return $0$ if $s=t$;
 \end{itemize}
 \end{itemize}
 The algorithm is presented in \cite{ki2019,kkmsy2020,k2021,kl2020}.  
 Let $j$ be LCP of $s$ and $t$. The algorithm for this problem was presented in the previous section.  If $j=\min\{n,m\}$, then we can return the answer depending on the lengths of the strings
 \begin{itemize}
 \item if $n=m$, then return $0$;
 \item if $n\neq m$ and $n=\min\{n,m\}$, then return $-1$;
 \item if $n\neq m$ and $m=\min\{n,m\}$, then return $+1$.
 \end{itemize}
 If $j=\min\{n,m\}$ and $s_{j+1}<t_{j+1}$, then we return $-1$, and $+1$ otherwise.
 The complexity of the total algorithm is the same as for LCP that is $O(\sqrt{j}\cdot O(1))=O(\sqrt{j})=O(\sqrt{\min\{n,m\}})$ and error probability is constant. 
\section{Applications of Grover's Search Algorithm for Graph Problems. DFS, BFS, Dynamic Programming on DAGs, TSP}
Let us discuss several important algorithms on Graphs. Classical versions of most of these algorithms are presented in \cite{cormen2001,l2017guide}.
\subsection{Depth-first search}\label{sec:dfs}
Let us discuss the well-known DFS algorithm that can be found in Section 22 of \cite{cormen2001} or Section 7.2 of \cite{l2017guide}.
 \begin{itemize}
 \item We have a graph $G=(V,E)$, where $V$ is the set of vertexes, $E$ is the set of edges, $n=|V|$, and $m=|E|$. 
 \item We should invoke the DFS algorithm.
 \end{itemize}

Assume that we have a $visited$ array such that  $visited[v]=true$ if our search has visited a vertex $v$ and $visited[v]=false$ otherwise. Let $\textsc{Neighbors}(v)$ be a list of all neighbors of a vertex $v$. In other words, $\textsc{Neighbors}(v)=(x_1,\dots,x_{L_v})$, where $(v,x_i)\in E$ and $L_v$ is the length of the neighbors list for the vertex $v$.

The standard form of the recursive procedure for the DFS algorithm is presented in Algorithm \ref{alg:dfs1}.

\begin{algorithm}
\caption{Standard version of DFS algorithm. It is a $dfs$ procedure that accepts an observing vertex $v$ as a parameter.}\label{alg:dfs1}
\begin{algorithmic}
\State $visited[v] \gets True$
\For{$x \in \textsc{Neighbors}(v)$}
\If{$visited[x]=false$}
\State $dfs(x)$
\EndIf
\EndFor
\end{algorithmic}
\end{algorithm}
 
We can modify the code, by replacing checking {\em all} neighbors of $v$ by checking {\em only not visited} neighbors of $v$. See Algorithm \ref{alg:dfs2}.

\begin{algorithm}
\caption{Modified version of DFS algorithm. It is a $dfs$ procedure that accepts an observing vertex $v$ as a parameter.}\label{alg:dfs2}
\begin{algorithmic}
\State $visited[v] \gets True$
\For{$x \in \textsc{NOT\!-\!VISITED\_Neighbors}(v)$}
\State $dfs(x)$
\EndFor
\end{algorithmic}
\end{algorithm}
Here  \textsc{NOT$\_$VISITED$\_$Neighbors}$(v)$ is the sequence of elements $x$ form \textsc{Neighbors}$(v)$ such that $visited[x]=false$. Note that this list can be changed during invocations of the function for other vertexes. That is why we can use another function that returns an index of a not-visited neighbor of $v$ with an index bigger than $i$ that is  \textsc{NEXT$\_$NOT$\_$VISITED$\_$Neighbor}$(v,i)$. It returns $i'>i$ such that  $visited[x_{i'}]=false$ for  \textsc{Neighbors}$(v)=(x_1,\dots,x_{L_v})$. Assume that if there is no such $i'$, then the procedure returns $NULL$ constant. To search the first not-visited neighbor of $v$, we should invoke \textsc{NEXT$\_$NOT$\_$VISITED$\_$Neighbor}$(v,-1)$.  See Algorithm \ref{alg:dfs3}.  

\begin{algorithm}
\caption{The third version of DFS algorithm. It is a $dfs$ procedure that accepts an observing vertex $v$ as a parameter.}\label{alg:dfs3}
\begin{algorithmic}
\State $visited[v] \gets True$
\State $i \gets\textsc{NEXT\_NOT\_VISITED\_Neighbor}(v,-1)$
\While{$i \neq NULL$}
\State $dfs(x_i)$
\State $i \gets\textsc{NEXT\_NOT\_VISITED\_Neighbor}(v,i)$
\EndWhile
\end{algorithmic}
\end{algorithm}

Let us implement $\textsc{NEXT\_NOT\_VISITED\_Neighbor}(v,i)$ as a quantum algorithm for the first one search (Section \ref{sec:first-one2}) for a search function $f:\{i+1,\dots,L_v\}\to \{0,1\}$, where $f(i')=1$ iff $visited[x_{i'}]=false$.

The quantum algorithm is presented in \cite{f2008}.

\subsubsection{Complexity}
Let us discuss the complexity of the algorithm.

Complexity of processing a vertex $v$ is $O(\sqrt{L_v\cdot N_v}\log L_v)=O(\sqrt{L_v\cdot N_v}\log n)$, where $L_v$ is a number of neighbors for $v$ and $N_v$ is a number of processed by $dfs(v)$ not visited neighbors for $v$.
The total complexity is
\[O\left(\sum_{v\in V}\sqrt{L_v\cdot N_v}\log n\right)\leq\]
due to Cauchy–Bunyakovsky-Schwarz inequality
\[ O\left(\sqrt{\sum_{v\in V}L_v}\cdot\sqrt{\sum_{v\in V} N_v}\log n\right)\leq\]
Note that $\sum_{v\in V}L_v=O(|E|)=O(m)$ because each directed edge gives us a neighbor in one of the lists and an undirected edge gives us neighbors in two lists.  $\sum_{v\in V} N_v\leq |V|=n$ because each vertex is visited only once. This property is provided by the $visited$ array.
\[\leq \bigo{\sqrt{m}\cdot\sqrt{n}\log n}=\bigo{\sqrt{nm}\log n}.\]

In the case of the Adjacency matrix, the complexity is $O(n^{1.5}\log n)$.

Note that in the classical case, the complexity is $O(m)$ in the case of the list of neighbors and $O(n^2)$ in the case of the Adjacency matrix. So, we obtain a quantum speed-up if $m>n(\log n)^2$ in the case of the list of neighbors. If we use the Adjacency matrix, then we have a quantum speed-up anyway. 

In the case of the Adjacency matrix, the algorithm with $O(n^{1.5})$ complexity is presented in \cite{bt2020}.

\subsection{Topological sort}\label{sec:topsort}
Let us discuss the well-known Topological sort algorithm for sorting vertexes of an acyclic-directed graph. The problem and a classical algorithm can be found in Section 22 of \cite{cormen2001} or Section 7.2 of \cite{l2017guide}.
\begin{itemize}
 \item We have a directed graph $G$ with $n$ vertexes and $m$ edges. 
 \item We should sort vertexes and obtain an order $order=(v_1,\dots,v_n)$ such that for any edge $(v_i,v_j)$ we have $i<j$.
 \end{itemize}
 
The standard implementation of the topological sort algorithm is based on the DFS algorithm and is presented in Algorithm \ref{alg:topSort}.

\begin{algorithm}
\caption{Topological sort algorithm. It is a $TopSort$ procedure that accepts an observing vertex $v$ as a parameter.}\label{alg:topSort}
\begin{algorithmic}
\State $visited[v] \gets True$
\State $i \gets\textsc{NEXT\_NOT\_VISITED\_Neighbor}(v,-1)$
\While{$i \neq NULL$}
\State $dfs(x_i)$
\State $i \gets\textsc{NEXT\_NOT\_VISITED\_Neighbor}(v,i)$
\EndWhile
\State $order\gets v \cup order $ 
\end{algorithmic}
\end{algorithm}

We have a small modification of the DFS algorithm. So, the complexity of the algorithm equals to the complexity of the DFS algorithm. That is $O(\sqrt{nm}\log n)$ in the case of the list of neighbors and $O(n^{1.5}\log n)$ in the case of the adjacency matrix. In the case of the Adjacency matrix, the algorithm with $O(n^{1.5})$ complexity is presented in \cite{bt2020}.

\subsection{Breadth-first search}
Let us discuss the well-known BFS algorithm that can be found in Section 22 of \cite{cormen2001} or Section 7.2 of \cite{l2017guide}.
\begin{itemize}
 \item We have a graph $G=(V,E)$, where $V$ is the set of vertexes, $E$ is the set of edges, $n=|V|$, and $m=|E|$. 
 \item We should invoke the BFS algorithm.
 \end{itemize}
We use the same notations as for the DFS algorithm. Additionally, we use the queue data structure (See  Section 10 of \cite{cormen2001}). It is a collection (sequence) of elements such that we can add an element at the end of the queue and get an element from the beginning of the queue. It is a so-called FIFO (first-in-first-out) data structure. Assume that we have 
\begin{itemize}
\item $\texttt{ADD}(queue,v)$ procedure for adding a vertex $v$ to the queue;
\item $\texttt{GET\_TOP}(queue)$ procedure returns a vertex $v$ from the top of the queue;
\item $\texttt{EMPTY}(queue)$ procedure returns $true$ if the queue is empty and $false$ otherwise.
\end{itemize}

The standard implementation of BFS is presented in Algorithm \ref{alg:bfs1}. 

\begin{algorithm}
\caption{Standard version of BFS algorithm. It is a $bfs$ procedure that accepts the starting vertex $v$ as a parameter.}\label{alg:bfs1}
\begin{algorithmic}
\State $visited[v] \gets True$
\State $ADD(queue,v)$
\While{$EMPTY(queue)=False$}
\State $v\gets GET\_TOP(queue)$
\For{$x \in \textsc{Neighbors}(v)$}
\If{$visited[x]=False$}
\State $visited[x] \gets True$
\State $ADD(queue,x)$
\EndIf
\EndFor
\EndWhile
\end{algorithmic}
\end{algorithm}

As for DFS we use  $\texttt{NOT\_ VISITED\_ Neighbors}(v)$ in Algorithm \ref{alg:bfs2}.

\begin{algorithm}
\caption{The modified version of BFS algorithm. It is a $bfs$ procedure that accepts the starting vertex $v$ as a parameter.}\label{alg:bfs2}
\begin{algorithmic}
\State $visited[v] \gets True$
\State $ADD(queue,v)$
\While{$EMPTY(queue)=False$}
\State $v\gets GET\_TOP(queue)$
\For{$x \in \textsc{NOT\_VISITED\_Neighbors}(v)$}
\If{$visited[x]=False$}
\State $visited[x] \gets True$
\State $ADD(queue,x)$
\EndIf
\EndFor
\EndWhile
\end{algorithmic}
\end{algorithm}

So, we can implement $\texttt{NOT\_ VISITED\_ Neighbors}(v)$ function using the quantum procedure for searching all ones (Section \ref{sec:all-ones}) for a function $f_v:\{1,\dots,L_v\}\to \{0,1\}$. See Algorithm \ref{alg:bfs3}.

\begin{algorithm}
\caption{The modified version of BFS algorithm. It is a $bfs$ procedure that accepts the starting vertex $v$ as a parameter.}\label{alg:bfs3}
\begin{algorithmic}
\State $visited[v] \gets True$
\State $ADD(queue,v)$
\While{$EMPTY(queue)=False$}
\State $v\gets GET\_TOP(queue)$
\For{$x \in \textsc{ALL\_ONES}(f_v)$}
\If{$visited[x]=False$}
\State $visited[x] \gets True$
\State $ADD(queue,x)$
\EndIf
\EndFor
\EndWhile
\end{algorithmic}
\end{algorithm}
\subsubsection{Complexity}
Let us discuss the complexity of the algorithm.

Complexity of processing a vertex $v$ is $O(\sqrt{L_v\cdot N_v}\log L_v)=O(\sqrt{L_v\cdot N_v}\log n)$, where $L_v$ is a number of neighbors for $v$ and $N_v$ is a number of processed by $bfs(v)$ not visited neighbors for $v$.
The total complexity is
\[O\left(\sum_{v\in V}\sqrt{L_v\cdot N_v}\log n\right)\leq\]
due to Cauchy–Bunyakovsky-Schwarz inequality
\[ O\left(\sqrt{\sum_{v\in V}L_v}\cdot\sqrt{\sum_{v\in V} N_v}\log n\right)=\]
Note that $\sum_{v\in V}L_v=O(|E|)=O(m)$ because each directed edge gives us a neighbor in one of the lists and the directed edge give us neighbors in two lists.  $\sum_{v\in V} N_v=|V|=n$ because each vertex is visited only once. This property is provided by the $visited$ array.
\[=O(\sqrt{m}\cdot\sqrt{n}\log n)=O(\sqrt{nm}\log n)\]

In the case of the Adjacency matrix, the complexity is $O(n^{1.5}\log n)$.

Note that in the classical case, the complexity is $O(m)$ in the case of the list of neighbors and $O(n^2)$ in the case of the Adjacency matrix. So, we obtain a quantum speed-up if $m>n(\log n)^2$ in the case of the list of neighbors. If we use the Adjacency matrix, then we have a quantum speed-up anyway.

 In the case of the Adjacency matrix, the algorithm with $O(n^{1.5})$ complexity is presented in \cite{ll2015,ll2016}.
\subsection{Dynamic Programming on Directed Acyclic Graphs. Games on DAGs}
Let us discuss a problem from the Game theory called games on Graphs or games on DAGs. An example of such a game can be a Subtraction game or NIM game \cite{emaxxGame}.

 \begin{itemize}
 \item We have a Directed Acyclic Graph(DAG) $G(V,E)$ with $n$ vertexes and $m$ edges. 
 \item There is a stone on the starting vertex $A$.
 \item Two players can move it by directed edges. Players moves turn by turn, one after another.
 \item The player who cannot move the stone loses.
 \item We want to compute the number of player (the first or the second) who wins in the game. Additionally, we want to have a winning strategy for this player.
 \end{itemize}
 Many games can be represented in this way. A vertex can be a game situation or game configuration. A stone is the current game configuration. An edge is a possible move to another game configuration. In chess, positions of chessmen. At the same time, many complex games like chess and others have too many configurations or too big graph for such analysis.
 
For solving the problem we mark each vertex $v$ as a winning vertex or as a loss vertex. We use a $Win:V\to\{true,false\}$ function. $Win(v)=true$ iff we can start from this vertex $v$ and win. $Win(v)=false$ if there is no way to start from this vertex $v$ and win.

Assume, that the graph is topologically sorted. Otherwise, topological sorting is the first step. We can say that each vertex $v$ without outgoing edges is such that $Win(v)=false$.  Then, for each vertex from $n$ to $1$ (in topological order), we can say that $Win(v)=true$ iff there is a vertex $x$ such that there is an edge $(v,x)\in E$ and $Win(x)=false$. In that case, we can move the stone to $x$ and the opposite player will be in the losing game configuration.  Because of the topological sort, we can be sure that $x$ and $v$ satisfy $x>v$ and it is already computed in previous steps. So we have Algorithm \ref{alg:game1}.

\begin{algorithm}
\caption{The algorithm for solving a game on a graph.}\label{alg:game1}
\begin{algorithmic}
\State $G\gets ClassicalTopSort(G)$
\For $v\in\{n,\dots,1\}$
\If{any neighbor element $x$ is such that $Win(x)=false$}
\State $Win(v)\gets True$
\Else
\State $Win(v)\gets False$
\EndIf
\EndFor
\end{algorithmic}
\end{algorithm}

Finally, if $Win(1)=true$, then the first player wins, and the second player wins otherwise.

For checking ``all neighbor elements $x$ are such that $Win(x)=true$'' condition, we can use Grover's Search algorithm. We search an $x$ for $f_v:\{1,\dots,L_v\}\to\{0,1\}$ function such that $f(i)=1$ iff $Win(x_i)=false$ for $Neighbors(v)=(x_1,\dots, x_{L_v})$. Additionally, we use the quantum implementation of the Topological sort (Section \ref{sec:topsort}). Finally, the algorithm is following.

\begin{algorithm}
\caption{The quantum algorithm for solving a game on a graph.}\label{alg:game2}
\begin{algorithmic}
\State $G\gets QuantumTopSort(G)$
\For $v\in\{n,\dots,1\}$
\If{$GROVER(f_v)\neq NULL$}
\State $Win(v)\gets True$
\Else
\State $Win(v)\gets False$
\EndIf
\EndFor
\end{algorithmic}
\end{algorithm}

The algorithm is presented in \cite{kks2019} and \cite{kksk2020} for a multi-player game.

\subsubsection{Complexity}\label{sec:game-cmpl}
Let us discuss the complexity of the algorithm.

The complexity of processing a vertex $v$ is $O(\sqrt{L_v})$, where $L_v$ is the number of neighbors for $v$. We repeat each Grover's search invocation $2\log n$ times for getting $O(\frac{1}{n^2})$ error probability of processing a vertex. In that case, the error probability of processing $n$ vertexes is $O(\frac{1}{n})$. Finally, the total complexity of processing a vertex is $O(\sqrt{L_v}\log n)$.
The total complexity is
\[O\left(\sum_{v\in V}\sqrt{L_v}\log n\right)\leq\]
due to Cauchy–Bunyakovsky-Schwarz inequality
\[ O\left(\sqrt{n\sum_{v\in V}L_v}\log n\right)=\]
Note that $\sum_{v\in V}L_v=O(|E|)=O(m)$ because each directed edge gives us a neighbor in one of the lists and a directed edge give us neighbors in two lists. 
\[=O(\sqrt{nm}\log n)\]

In the case of the Adjacency matrix, the complexity is $O(n^{1.5}\log n)$.

Note that in the classical case, the complexity is $\Theta(m)$ in the case of the list of neighbors and $\Theta(n^2)$ in the case of the Adjacency matrix. These are lower and upper bounds \cite{kks2019}. So, we obtain a quantum speed-up if $m>n(\log n)^2$ in the case of the list of neighbors. If we use the Adjacency matrix, then we have a quantum speed-up anyway.
\subsection{Dynamic Programming on Directed Acyclic Graphs. The Longest Path Problem}
It is known, that the longest path problem is NP-hard for a general graph. At the same time, for a DAG we can solve it in polynomial time.

 \begin{itemize}
 \item We have a DAG $G(V,E)$ with $n$ vertexes and $m$ edges. 
 \item Find the longest path from the starting vertex $A$.
 \end{itemize}
 
 For solving the problem we compute the longest path from each vertex. We use a $TheLongestPath:V\to\{0,\dots,n\}$ function. $TheLongestPath(v)$ is the length of the longest path that starts from a vertex $v$.

Assume, that the graph is topologically sorted. Otherwise, it is the first step. Additionally, we are interested only in vertexes that can be reached from $A$. Therefore, after the Topological sort, the index of $A$ is $1$. Then, for each vertex $i$ from $n$ to $1$, we can say that $TheLongestPath(i)=MAX\{TheLongestPath(v):(i,v)\in E\}$ is the maximal length of a path that started from a vertex $v$. Assume that $MAX$ of an empty set is $0$. Because of the topological sort, we can be sure that $i$ and $v$ satisfy $v>i$, and $TheLongestPath(v)$ is already computed. So, we have Algorithm \ref{alg:longest-dag}.

\begin{algorithm}
\caption{The quantum algorithm for The Longest Path Problem on a DAG.}\label{alg:longest-dag}
\begin{algorithmic}
\State $G\gets TopSort(G)$
\For{$i\in\{n,\dots,1\}$}
\State $TheLongestPath(i)\gets MAX\{TheLongestPath(v):(i,v)\in E\}$
\EndFor
\State\Return $TheLongestPath(A)$
\end{algorithmic}
\end{algorithm}

For searching a maximum we can use the Quantum maximum/minimum search algorithm (Section \ref{sec:max}). Additionally, we use the quantum implementation of the Topological sort (Section \ref{sec:topsort}).

Complexity of processing a vertex $v$ is $O(\sqrt{L_v}\log n)$ if we have $O(\frac{1}{n})$ error probability, where $L_v$ is a number of neighbors for $v$. So, the complexity is similar to the complexity of the previous problem (Section \ref{sec:game-cmpl}). That is $O(\sqrt{nm}\log n)$ in the case of a list of neighbors and $O(n^{1.5}\log n)$ in the case of an adjacency matrix.

Note that the last two problems are an example of the Quantum version of the Dynamic Programming approach for DAGs that was discussed in details in \cite{ks2019,ks2019short}.

\subsection{Hamiltonian Path Problem}\label{sec:hamilt}
Let us consider the following well-known problem.
 \begin{itemize}
 \item We have a graph $G=(V,E)$ with $n$ vertexes and $m$ edges.
 \item We should find a path that 
 \begin{itemize}
     \item contains all vertexes and
     \item each vertex occurs only once in the path. 
 \end{itemize}
 \end{itemize}
The name of the presented problem is ``Hamiltonian path Problem'', and the path that visits all vertexes exactly once is called ``Hamiltonian path''. If the starting and finishing vertexes of the path are the same, then it is a ``Hamiltonian cycle''. It is known that the problem is NP-complete. See \cite{cormen2001} for a detailed description.

\subsubsection{Brute Force Solution}
The first possible solution of the problem is a brute force solution.
We can check all possible permutations of numbers from $1$ to $n$. Each permutation can be a Hamiltonian path. Each vertex index in the sequence occurs exactly once because it is a permutation. We should check that it is a path. In other words, for a permutation $\pi=(v^1,\dots,v^n)$, we should check the property $(v^i,v^{i+1})\in )$ for each $i\in\{1,\dots,n-1\}$. Let us present Algorithm \ref{alg:hp-bf1} that uses two functions:
\begin{itemize}
\item $\textsc{GetPermuation}(i)$ that returns $i$-th permutation.
\item $\textsc{CheckPath}(\pi)$ that checks a permutation $\pi$ whether it is a path.
\end{itemize}

There are $n!$ different permutations.

\begin{algorithm}
\caption{Classical brute force algorithm for the Hamiltonian Path Problem.}\label{alg:hp-bf1}
\begin{algorithmic}
\State $resultPath\gets NULL$\Comment{Returns $NULL$ if the path is not found}
\For{$i\in\{1,\dots,n!\}$}
\State $\pi \gets \textsc{GetPermuation}(i)$
\If{$\textsc{CheckPath}(\pi)=True$}
\State $resultPath\gets \pi$\Comment{Stop the search}
\EndIf
\EndFor
\Return $resultPath$\Comment{Returns $NULL$ if the path is not found}
\end{algorithmic}
\end{algorithm}
The procedures $\textsc{GetPermuation}$ and $\textsc{CheckPath}$ can be implemented with $O(n)$ running time. The total running time is 
\[O(n!\cdot n)=O\left(\sqrt{n}\frac{n^n}{e^n}\cdot n\right)=O\left(n^{1.5}\cdot 2^{n(\log_2 n - \log_2 e)}\right)\] due to Stirling's approximation. 
 
We can consider the problem as a search function $f:\{1,\dots,n!\}\to\{0,1\}$, where $f(i)=1$ iff $\textsc{CheckPath}(\textsc{GetPermuation}(i))=True$.
So, we invoke Grover's search algorithm for the function. The search space size is $n!$ and the complexity of one element checking is $O(n)$. The total complexity is $O(\sqrt{n!}\cdot n)= O\left(n^{1.25}\cdot 2^{0.5n(\log_2 n - \log_2 e)}\right)$.

\subsubsection{Dynamic Programming Solution}
There is another solution that is based on the Dynamic Programming approach \cite{b62,hk62}. Let us describe the solution.

Let  $mask\in\{0,\dots,2^n-1\}$ be a bitmask of a subset of vertexes, and let $V(mask)$ be the corresponding subset. Formally, if $bin(mask)=(mask_1,\dots,mask_n)\in\{0,1\}^n$ is a binary representation of $mask$, then $v_i\in V(mask)$ iff $mask_i=1$.

 Let us consider a function $h:\{0,\dots,2^n-1\}\times \{1,\dots,n\}\times \{1,\dots,n\}\to\{0,1\}$.
 $h(mask,v,u)=1$ iff there is a Hamiltonian path in the subset $V(mask)$ that starts from $v$ and finishes in $u$, where $v,u\in V(mask)$.
 
There are several properties of the function $h$.
\begin{itemize}
\item The single vertex property. $h(mask,v,v)=1$ for any $v\in V$ and $mask$ such that $V(mask)=\{v\}$.
\item The cutting one edge property. $h(mask,u,v)=1$ iff there is $t\in V(mask)$ such that $h(mask',u,t)=1$, $V(mask')=V(mask)\backslash \{v\}$, and there is an edge between $t$ and $v$.
\item The splitting of a subset to two parts property. $h(mask,u,v)=1$ iff there are 
  \begin{itemize}
    \item $mask', mask''\in\{0,\dots, 2^n-1\}$ such that $V(mask')\cup V(mask'')=V(mask)$ and $V(mask')\cap V(mask'')=\emptyset$;
  \item $t\in V(mask'), z\in V(mask'')$ such that $h(mask',u,t)=1$, $h(mask'',z,v)=1$, and there is an edge between $t$ and $z$.  
  \end{itemize} 
\end{itemize}

Let us describe the classical algorithm. Our goal is computing $h(FullSet,v,u)$ for each $v,u\in V$ and a bit mask $FullSet=2^{n}-1$ such that $V(FullSet)=V$. If we can find a pair of vertexes $(v,u)$ such that $h(FullSet,v,u)=1$, then it is a solution.
 In that case, we can solve the decision problem which is checking the existence of a Hamiltonian path. Let us focus on this problem and then we solve the original one.
 
We find $h(mask,v,u)$ recursively using ``The cutting one edge property''. For this reason we check all  neighbors $t$ of $u$ and search $h(mask',v,t)=1$ where $V(mask')=V(mask)\backslash \{u\}$.
At the same time, we can store values of the function $h$ in an array and avoid recomputing the same value several times. In fact, this ``caching'' trick converts the ``brute force'' solution to the ``dynamic programming'' solution.

Let us present the recursive procedure in Algorithm \ref{alg:ham-dp1}. Assume that $\hat{h}$ is an array that stores value of $h$, i.e., $\hat{h}[mask,v,u]=h(mask,v,u)$.

\begin{algorithm}
\caption{The recursive procedure for computing $h$ function. Input parameters are $mask\in\{0,\dots,2^n-1\}, v,u\in V$. Output parameter is $h(mask,v,u)$}\label{alg:ham-dp1}
\begin{algorithmic}
\If{$\hat{h}[mask,v,u]$ is not assigned}
\If{$v=u$ and $V(mask)=\{v\}$}
\State $\hat{h}[mask,v,u]\gets 1$
\Else
\State $mask'=mask-2^u$\Comment{$V(mask')=V(mask)\backslash \{u\}$}
\For{$t$ is a neighbor of $u$}
\If{$h(mask',v,t)=1$}\Comment{a recursive invocation of the current procedure}
\State $\hat{h}[mask,v,u]\gets 1$\Comment{Stop the for-loop}
\EndIf
\EndFor
\EndIf
\EndIf
\State\Return $\hat{h}[mask,v,u]$
\end{algorithmic}
\end{algorithm}

Using this procedure we can solve the decision problem (Algorithm \ref{alg:ham-dp2}).

\begin{algorithm}
\caption{The Dynamic programming solution of the decision version of Hamiltonian path problem.}\label{alg:ham-dp2}
\begin{algorithmic}
\State $result \gets 0$
\For{$v\in V$}
\For{$u\in V$}
\If{$h(2^n-1,v,u)=1$}\Comment{$V(2^n-1)=V$}
\State $result\gets 1$\Comment{There is a Hamiltonian path between $v$ and $u$. We stop the for-loop.}
\EndIf
\EndFor
\EndFor
\State \Return $result$
\end{algorithmic}
\end{algorithm}

For solving the main problem we can store additional information in the array $\hat{F}(mask,v,u)$. Let $\hat{F}(mask,v,u)=t$ if the vertex $t$ precedes $u$ in a Hamiltonian path between $v$ and $u$ in $V(mask)$. For $v\in V, mask'$ such that $V(mask')=v$, we assume $\hat{F}(mask',v,v)=NULL$.
We can change Algorithm \ref{alg:ham-dp1} and obtain Algorithm \ref{alg:ham-dp3}.

\begin{algorithm}
\caption{The recursive procedure for computing $h$ function. Input parameters are $mask\in\{0,\dots,2^n-1\}, v,u\in V$. Output parameter is $h(mask,v,u)$}\label{alg:ham-dp3}
\begin{algorithmic}
\If{$\hat{h}[mask,v,u]$ is not assigned}
\If{$v=u$ and $V(mask)=\{v\}$}
\State $\hat{h}[mask,v,v]\gets 1$
\State $\hat{F}(mask',v,v)\gets NULL$
\Else
\State $mask'=mask-2^u$\Comment{$V(mask')=V(mask)\backslash \{u\}$}
\For{$t$ is a neighbor of $u$}
\If{$h(mask',v,t)=1$}\Comment{a recursive invocation of the current procedure}
\State $\hat{F}(mask,v,v)\gets t$
\State $\hat{h}[mask,v,u]\gets 1$\Comment{Stop the for-loop}
\EndIf
\EndFor
\EndIf
\EndIf
\State\Return $\hat{h}[mask,v,u]$
\end{algorithmic}
\end{algorithm}

We can change Algorithm \ref{alg:ham-dp2} and obtain Algorithm \ref{alg:ham-dp4}.

\begin{algorithm}
\caption{The Dynamic programming solution of the decision version of Hamiltonian path problem.}\label{alg:ham-dp4}
\begin{algorithmic}
\State $result \gets 0$
\For{$v\in V$}
\For{$u\in V$}
\If{$h(2^n-1,v,u)=1$}\Comment{$V(2^n-1)=V$}
\State $resultV\gets v$
\State $resultU \gets u$
\State $result\gets 1$\Comment{There is a Hamiltonian path between $v$ and $u$. We stop the for-loop.}
\EndIf
\EndFor
\EndFor
\If{$result= 1$}
\State $HamiltonianPath\gets(resultU)$
\State $t \gets resultU$
\State $mask = 2^n-1$\Comment{$V(2^n-1)=V(FullSet)=V$}
\While{$t\neq result V$}
\State $t'\gets \hat{F}(mask,v,t)$
\State $mask = mask-2^t$\Comment{Excluding $t$ from the set $V(mask)$}
\State $t\gets t'$
\State $HamiltonianPath\gets t\circ  HamiltonianPath$ \Comment{Adding $t$ to the begin of the path}
\EndWhile
\Else 
\State $HamiltonianPath\gets NULL$
\EndIf
\State \Return $result, HamiltonianPath$
\end{algorithmic}
\end{algorithm}

Let us discuss the quantum algorithm for the problem that was presented in \cite{abikpv2019}.
Let us consider the following sets
 \begin{itemize}
\item Let $M_2\subset \{0,\dots, 2^n-1\}$ be such that 
$mask\in M_2$ satisfies $|V(mask)|=\lceil n/2\rceil$;
\item Let $M_2'\subset \{0,\dots, 2^n-1\}$ be such that $mask\in M_2'$ satisfies $|V(mask)|=\lfloor n/2\rfloor$;
 \item Let $M_4\subset \{0,\dots, 2^n-1\}$ be such that $mask\in M_4$ satisfies $|V(mask)|=\lceil n/4\rceil$. 
 \item Let $M_4'\subset \{0,\dots, 2^n-1\}$ be such that $mask\in M_4'$ satisfies $|V(mask)|=\lfloor n/4\rfloor$.
 \item For $mask_2\in M_2$ or $mask_2\in M_2$', let $M_4(mask_2)\subset \{0,\dots, 2^n-1\}$ be such that $mask\in M_4(mask_2)$ satisfies 
 \begin{itemize}
 \item $|V(mask)|=\lceil n/4\rceil$ and
 \item $V(mask)\subset V(mask_2)$.
 \end{itemize} 
 \item Similarly,  For $mask_2\in M_2$ or $mask_2\in M_2$', let $M_4'(mask_2)\subset \{0,\dots, 2^n-1\}$ be such that $mask\in M_4'(mask_2)$ satisfies 
 \begin{itemize}
 \item $|V(mask)|=\lfloor n/4\rfloor$ and
 \item $V(mask)\subset V(mask_2)$.
 \end{itemize}  
 \end{itemize} 

The algorithm is following.
 As a first step, we classically compute $h(mask,u,v)$ for all $mask$ such that $|V(mask)|\leq n/4+1$ and $v,u\in V$. We do it using an invocation of Algorithm \ref{alg:ham-dp3} for each $mask\in M_4'$ and each $v,u\in V$.

For $mask\in M_2, u,v\in V(mask)$, let $f_{mask,u,v}:M_4(mask)\times V(mask)\to \{0,1\}$ be the first search function. The value $f_{mask,u,v}(mask_4,t)=1$ iff 
 \begin{itemize}
 \item $u,t\in V(mask_4)$, $h(mask_4,u,t)=1$,
 \item for $mask': V(mask')= V(mask)\backslash V(mask_4)$, there is $z\in V(mask)\backslash V(mask_4)$ such that $h(mask',z,v)=1$ and
 \item an edge between $t$ and $z$ exists.
 \end{itemize}     
 Note, that we can compute $mask'=mask - mask_4$.
 The values $h(mask_4,u,t)$ and $h(mask',z,v)$ are already computed and stored in the array $\hat{h}$. The computing $f_{mask,u,v}(mask_4,t)$ process is searching a required $z$. It can be done using Grover's search. The running time of computing $f_{mask,u,v}(mask_4,t)$ is $O(\sqrt{n})$. 

We want to compute 	$h(mask,u,v)$ for $mask\in M_2,u,v\in V$.  Note that $h(mask,u,v)=1$ iff there is a $mask_4'$ and $t'$ such that $f_{mask,u,v}(mask_4',t')=1$ due to ``The splitting a subset to two parts property'' of $h$. We can use Grover's Search for searching the argument $(mask_4',t')$.

 We define $f'_{mask,u,v}$ similar to $f_{mask,u,v}$, but $mask\in M_2'$.
 
For $u,v\in V$, let $f_{u,v}:M_2\times V\to \{0,1\}$ be the second search function.
The value $f_{u,v}(mask_2,t)=1$ iff
 \begin{itemize}
 \item for $u,t\in V(mask_2)$, we have $h(mask_2,u,t)=1$.
 \item for $mask': V(mask')= V\backslash V(mask_2)$, there is $z\in V\backslash V(mask_2)$ such that $h(mask',z,v)=1$.
 \item An edge between $t$ and $z$ exists.
\end{itemize}    

Computing $f_{u,v}(mask_2,t)$ is searching $z$ such that corresponding values of $h$ are $1$, in other words, $h(mask',z,v)=1$ and $h(mask_2,u,t)=1$. Searching $z$ can be done using Grover's Search in $O(\sqrt{n})$ running time. At the same time, computing $h(mask_2,u,t)$ and $h(mask',z,v)$ itself requires Grover's Search as it was discussed before. Computing $h(mask_2,u,t)$ requires Grover's Search for $f_{mask_2,u,t}$  function; and computing $h(mask',z,v)$ requires Grover's Search for $f'_{mask',z,v}$  function, where $V(mask_2')=V\backslash V(mask_2)$, i.e. $mask'=(2^n-1) - mask_2$.

Finally, for any two vertexes $u,v\in V$, we have $h(FullSet,u,v)=1$ iff there is $mask_2'$ and $t'$ such that $f_{u,v}(mask_2',t')=1$. So, we can compute it using Grover's Search by $(mask_2,t)$ argument.

 Then, we search $u,v\in V$ such that $h(FullSet,u,v)=1$ for a Hamiltonian path. We can use Grover's search by $(u,v)$ argument here. 
 In the case of a Hamiltonian cycle, we can check whether  $h(FullSet,v_1,v_1)=1$ for the fixed constant vertex $v_1$ because the cycle should visit all the vertexes including the vertex $v_1$.

 Let us discuss the complexity of the Algorithm.
\begin{lemma}
The running time of the presented algorithm is $O(1.755^n\cdot n^3)$ for a Hamiltonian path, and $O(1.755^n\cdot n^2)$ for a Hamiltonian cycle. The error probability is constant.
\end{lemma}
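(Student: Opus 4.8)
The running time splits into two parts: the classical precomputation of all values $h(mask,u,v)$ with $|V(mask)|\le n/4+1$, and the nested Grover searches that assemble $h(\mathrm{FullSet},\cdot,\cdot)$ from these tables. First I would bound the classical part. By Stirling's approximation, $\binom{n}{\lceil n/4\rceil}=\Theta\!\left(2^{nH(1/4)}/\sqrt n\right)$ where $H$ is the binary entropy and $H(1/4)=2-\tfrac34\log_2 3\approx 0.811$, so that $2^{H(1/4)}\approx 1.755$ and the number of masks with $|V(mask)|\le n/4+1$ is $O(1.755^n)$. For each such mask and each ordered pair $(u,v)$, Algorithm \ref{alg:ham-dp3} fills one entry in $O(n)$ time (a single loop over neighbours, all recursive calls being memoised), so the whole precomputation costs $O(1.755^n\cdot n^3)$; for the cycle version it suffices to keep the entries having one endpoint equal to the fixed vertex $v_1$, which removes a factor $n$ and gives $O(1.755^n\cdot n^2)$.

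Next I would bound the quantum part level by level, from the innermost search outward. Evaluating $f_{mask,u,v}(mask_4,t)$ or $f_{u,v}(mask_2,t)$ is a Grover search for a vertex $z$ over a set of size at most $n$, each test being a constant number of lookups into $\hat h$ plus an edge check, hence $O(\sqrt n)$ elementary steps. Computing $h(mask_2,u,t)$ for $mask_2\in M_2\cup M_2'$ is a Grover search over $M_4(mask_2)\times V(mask_2)$, a set of size $O\!\left(\binom{n/2}{n/4}\cdot n\right)=O(2^{n/2}n)$, so $O(2^{n/4}\cdot n^{O(1)})$ iterations, each running the $O(\sqrt n)$ inner search. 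Finally, computing $h(\mathrm{FullSet},u,v)$ is a Grover search over $M_2\times V$, a set of size $O\!\left(\binom{n}{n/2}\cdot n\right)=O(2^{n}\sqrt n)$, so $O(2^{n/2}\cdot n^{O(1)})$ iterations, each of which runs the $O(\sqrt n)$-iteration search over $z$ in which every test evaluates two $M_2$-level $h$-values; multiplying the three square-roots of the exponential search sizes gives overall quantum cost $O(2^{3n/4}\cdot n^{O(1)})$ for one top-level value. For a Hamiltonian path one wraps this in one more Grover search over the $n^2$ endpoint pairs, adding a factor $n$; for a Hamiltonian cycle one only evaluates $h(\mathrm{FullSet},v_1,v_1)$. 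Since $2^{3/4}\approx 1.682<1.755$, this quantum cost is dominated by the classical precomputation, and the total is $O(1.755^n n^3)$ for the path and $O(1.755^n n^2)$ for the cycle.

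It remains to bound the error probability. Each Grover instance with a noiseless oracle already has error below a fixed constant (Sections \ref{sec:grover} and \ref{sec:amplampl}), but here the oracle of an outer search is itself a bounded-error quantum subroutine. I would control this by boosting each subroutine a constant number of times so that its error is below $\delta/D$, where $D=O(1)$ is the nesting depth and $\delta$ is the target slack; equivalently, one invokes the bounded-error-oracle version of Grover's search from Section \ref{sec:random-oracle}, which inflates the query count of each level by at most a polylogarithmic factor that is absorbed into the $n^{O(1)}$ overhead. A union bound over the $O(1)$ levels then gives constant overall error.

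The main obstacle I expect is the careful bookkeeping of the nested search: verifying that the product of the square-roots of the three exponential search-space sizes really comes out to $2^{3n/4}$ (rather than something larger, which would change whether the classical or quantum part dominates), and simultaneously checking that amplifying the success probability at every nesting level costs only polynomially in $n$, so that the announced $n^3$ (resp.\ $n^2$) polynomial factor is not exceeded. The number-theoretic content — that the correct exponential base is $2^{H(1/4)}\approx 1.755$ — is just Stirling's formula and is the easy part.
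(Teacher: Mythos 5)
Your proposal follows the same decomposition as the paper's proof: bound the classical precomputation by $n^3\sum_{i\le n/4}\binom{n}{i}=O(1.755^n\cdot n^3)$, then bound the nested Grover searches level by level, absorb the quantum term into the classical one, and control the error of nested bounded-error oracles via the technique of Section \ref{sec:random-oracle}. Your accounting of the quantum part is in fact the more careful of the two: the inner search for $h(mask_2,u,t)$ ranges over $M_4(mask_2)\times V(mask_2)$ with $|M_4(mask_2)|=\binom{n/2}{\lceil n/4\rceil}$, so the product of the square roots of the nested search-space sizes is $\sqrt{\binom{n}{n/2}\binom{n/2}{n/4}}\cdot n^{O(1)}=2^{3n/4}\cdot n^{O(1)}\approx 1.682^n$, which is genuinely dominated by the classical $1.755^n$ term. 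The paper's displayed quantity $\sqrt{\binom{n}{n/2}\binom{n}{n/4}}\approx 1.874^n$ would \emph{not} be dominated by $1.755^n$, so your version of the bookkeeping is the one that actually closes the argument; this is exactly the ``main obstacle'' you identified, and you resolved it correctly.

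The one step I do not believe is your justification of the $n^2$ polynomial factor in the Hamiltonian-cycle bound. You propose to shrink the classical table to the entries having one endpoint equal to the fixed vertex $v_1$. That fails: evaluating $h(FullSet,v_1,v_1)$ through the two-level splitting requires quarter-size values $h(mask_4'',z'',t)$ with $mask_4''=mask_2-mask_4$, and both endpoints $z''$ and $t$ are arbitrary vertices (they sit in the far quarter of each half of the cycle), so the full $n^2$ pairs per mask, at $O(n)$ per entry, are still needed and the classical term remains $O(1.755^n\cdot n^3)$. To be fair, the paper's own proof does not derive the $n^2$ for the cycle either --- its displayed total for the cycle still contains the $n^3\cdot\sum_{i}\binom{n}{i}$ classical term, so only the quantum contribution drops by a factor of $n$ there. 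For the Hamiltonian path none of this matters and your bound $O(1.755^n\cdot n^3)$ is fully justified.
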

\begin{proof}
Firstly, let us note that we have several nested Grover's search algorithms. It means, all except the deepest Grover's search has bounded-error input; and we should use the random-oracle version of Grover's Search (Section \ref{sec:random-oracle}).

Let us compute the complexity of the classical part. There are $\binom{n}{n/4}$ different elements of $M_4$. For computing all functions $h(mask,u,v)$ for $mask\in M'_4, u,v\in V$ we should compute all elements $\hat{h}(mask,u,v)$ for  $V(mask)\leq n/4, u,v\in V$. The total number of elements is
\[n^2\cdot\left(\binom{n}{1}+\binom{n}{2}+\dots+\binom{n}{n/4}\right)\]
The complexity is 
\[O\left(n^3\cdot\left(\binom{n}{1}+\binom{n}{2}+\dots+\binom{n}{n/4}\right)\right)\]

The complexity of the quantum part is the following
\begin{itemize}
\item Computing $f_{mask,u,v}$ takes $O(\sqrt{n})$ running time.
\item For $mask\in M_2$ or $mask\in M_2'$, $u,v\in V(mask)$, computing $h(mask,u,v)$ takes $O \left(\sqrt{n\binom{n}{n/4}}\cdot\sqrt{n}\right)=O \left(n\sqrt{\binom{n}{n/4}}\right)$. The search space size is $n\binom{n}{n/4}$ for the pair of arguments $(mask_4,t)$ and complexity of computing $f_{mask,u,v}$ is $O\left(\sqrt{n}\right)$.
\item Computing $f_{u,v}$ takes $O\left(\sqrt{n}\cdot n\sqrt{\binom{n}{n/4}}\right)=O\left(n^{1.5}\sqrt{\binom{n}{n/4}}\right)$ running time. The search space size is $n$ for the argument $z$ and complexity of computing $h(mask,u,v)$ is $O \left(n\sqrt{\binom{n}{n/4}}\right)$.
\item Computing  $h(FullSet,u,v)$ takes $O \left(\sqrt{n\binom{n}{n/2}}\cdot n^{1.5}\sqrt{\binom{n}{n/4}}\right)=O \left(n^{2}\sqrt{\binom{n}{n/2}\binom{n}{n/4}}\right)$. The search space size is $n\binom{n}{n/2}$ for the pair of arguments $(mask_2,t)$ and complexity of computing $f_{u,v}$ is $O\left(n^{1.5}\sqrt{\binom{n}{n/4}}\right)$.
\item The quantum part for a Hamiltonian path takes $O \left(\sqrt{n^2}\cdot n^{2}\sqrt{\binom{n}{n/2}\binom{n}{n/4}}\right)=O\left(n^{3}\sqrt{\binom{n}{n/2}\binom{n}{n/4}}\right)$ running time. The search space size is $n^2$ for the pair of arguments $(u,v)$ and complexity of computing $h(FullSet,u,v)$ is $O\left(n^{2}\sqrt{\binom{n}{n/2}\binom{n}{n/4}}\right)$. 
\item The quantum part for a Hamiltonian cycle takes $O \left(n^{2}\sqrt{\binom{n}{n/2}\binom{n}{n/4}}\right)$ running time. It is exactly the running time for computing $h(FullSet,v_1,v_1)$. 
\end{itemize}
The error probability is constant according to Section \ref{sec:random-oracle}.

The total complexity is
\begin{itemize}
\item for a Hamiltonian path:
\[O\left(n^3\cdot\left(\binom{n}{1}+\binom{n}{2}+\dots+\binom{n}{n/4}\right)+n^{3}\sqrt{\binom{n}{n/2}\binom{n}{n/4}}\right)\approx O(1.755^n\cdot n^3)=O^*(1.755^n),\]
\item for a Hamiltonian cycle:
\[O\left(n^3\cdot\left(\binom{n}{1}+\binom{n}{2}+\dots+\binom{n}{n/4}\right)+n^{2}\sqrt{\binom{n}{n/2}\binom{n}{n/4}}\right)\approx  O(1.755^n\cdot n^2)=O^*(1.755^n).\]
\end{itemize}
Here $O^*$ hides a log factor.
\end{proof} 
 
This algorithm is splinting the path into two parts of size $n/2$ and is splitting each of these parts into two sub-parts of size $n/4$. Then we use classically precomputed information about paths of size $n/4$. At the same time, we can split the path once more time.

Classically we compute $h(mask,u,v)$ for all $mask:V(mask)\leq (1-\alpha )n/4$ where $\alpha$ is some constant. Then, we split sets $V(mask_4)$ of size $n/4$ to two parts: $V(mask_\alpha)\subset V(mask_4)$ and $ V(mask_4)\backslash V(mask_\alpha)$ and search the result similar to $f_{mask,u,v}$ by Grover's Search.

The total complexity in that case is 
\begin{itemize}
\item for a Hamiltonian path:
\[O\left(n^3\cdot\left(\binom{n}{1}+\binom{n}{2}+\dots+\binom{n}{(1-\alpha)n/4}\right)+n^{4}\sqrt{\binom{n}{n/2}\binom{n}{n/4}\binom{n}{(1-\alpha)n/4}}\right),\]
\item for a Hamiltonian cycle:
\[O\left(n^3\cdot\left(\binom{n}{1}+\binom{n}{2}+\dots+\binom{n}{(1-\alpha)n/4}\right)+n^{3}\sqrt{\binom{n}{n/2}\binom{n}{n/4}\binom{n}{(1-\alpha)n/4}}\right).\]
\end{itemize}
$\alpha=0.055362$ minimizes these functions and the complexity is
$O^*(1.728^n)$.

We can compute the path itself using $\hat{F}(mask,u,v)$ similar to the classical case.
\subsection{Traveling Salesman Problem}
Let us consider the following well-known problem.
 \begin{itemize}
 \item We have a weighted graph $G=(V,E)$ with $n$ vertexes and $m$ edges. Let $w(u,v)$ be a weight of the edge between vertexes $u$ and $v$. If there is no edge between $u$ and $v$, then $w(u,v)=+\infty$.
 \item We should find a Hamiltonian path with minimal weight. The weight of a path is the sum of weights of edges that belong to the path.
\end{itemize}
Like the Hamiltonian Path Problem, the Traveling Salesman Problem(TSP) is NP-complete. 

We can use a similar solution as for  Hamiltonian Path Problem. We consider a function $h:\{0,\dots,2^n-1\}\times V\times V\to \mathbb{R}_{+0}$, where $\mathbb{R}_{+0}$ is the set of non-negative real numbers. The value  $h(mask,v,u)$ is the minimal weight of a Hamiltonian path in the subset $V(mask)$ that starts from $v$ and finishes in $u$, where  $v,u\in V(mask)$. We assume that $h(mask,v,u)=+\infty$ if such a Hamiltonian path does not exist. The function has the following properties:
  
 \begin{itemize}
\item The single vertex property. $h(mask,v,v)=0$ for any $v\in V$ and $mask$ such that $V(mask)=\{v\}$.
\item The cutting one edge property. If $h(mask,u,v)\neq +\infty$, then there is $t\in V(mask)$ such that $h(mask,u,v)=h(mask',u,t)+w(t,v)$.
\item The splitting of a subset to two parts property. if $h(mask,u,v)\neq +\infty$, then there are 
  \begin{itemize}
    \item $mask', mask''\in\{0,\dots, 2^n-1\}$ such that $V(mask')\cup V(mask'')=V(mask)$ and $V(mask')\cap V(mask'')=\emptyset$;
  \item $t\in V(mask'), z\in V(mask'')$ such that $h(mask,u,v) =h(mask',u,t)+ w(t,z)+ h(mask'',z,v)$.  
  \end{itemize} 
\end{itemize}

We can use exactly the same algorithm as for Hamiltonian Path Problem, but we replace Grover's search algorithm with the Quantum minimum/maximum search algorithm (Section \ref{sec:max}).
The complexity of the algorithm for TSP is $O^*(1.728^n)$. It is presented in \cite{abikpv2019}.

There is another application of this idea for the following problems:
\begin{itemize}
\item Path in the Hypercube \cite{abikpv2019},
\item Vertex Ordering Problems \cite{abikpv2019},
\item Graph Bandwidth \cite{abikpv2019},
\item Feedback Arc Set \cite{abikpv2019},
\item Minimum Set Cover \cite{abikpv2019},
\item The Shortest Substring Problem \cite{kb2022},
\item The Longest Trail Problem \cite{kk2021trail},
\item Minimum Steiner Tree Problem \cite{mikl2020} and others.
\end{itemize} 
\section{Applications of Grover's Search Algorithm and Meet-in-the-middle Like Ideas.}
%Subsetsum Problem and Collision Problem
\subsection{Subsetsum Problem (0-1 Knapsack Problem)}
Problem.
 \begin{itemize}
 \item Let us have $n$ positive integers $a=(a_0,\dots,a_{n-1})$. 
 \item Let us have a positive integer $k$. 
 \item We want to find a subset $\{i_1,\dots,i_m\}\subset\{0,\dots,n-1\}$ such that
 $a_{i_1}+\dots+a_{i_m}=k$
 \end{itemize}
 An example: $n=5$ $a=(3,7,4,9,12)$, $k=19$.
A possible solutions: $(3,7,9)$ or $(7,12)$. We can return any of them.

The simple version is the following. We should answer ``Yes'' if we can find such a set, and ``No'' otherwise.

We know, that if $a_i$ and $k$ are in order of $2^n$, then
the problem is NP-complete;
and only exponential-time solutions are known!
 If $k=poly(n)$, then 
  there is a dynamic programming solution in $O(nk)$.
\subsubsection{Brute Force Solution}
 Let us consider all possible subsets $2^{[n]}$ of set $[n]=\{0,\dots,n-1\}$. For each of them, we can compute a sum of the set and compare it with $k$.
Let us consider a search function $f:\{0,\dots,2^n-1\}\to \{0,1\}$.
  The function is such that $f(i)=1$ iff the $i$-th subset has a sum $k$.
 We can say that the $i$-th subset is a subset with a bit mask $i$ or a characteristic vector that is the binary representation of $i$.
  
  For example:  $a=(3,7,4,9,12)$. If $i=10$, then  $bin(i)=(0,1,0,1,0)$ and the  subset is $\{7,9\}$ and $a_1+a_3=7+9=16$.
  
Complexity of computing $f(i)$ is $V(f)=O(n)$.

The classical solution is the linear search. We check all possible $i$ and search $f(i)=1$. Complexity is $O(2^n\cdot V(f))=O(n2^n)$.

The quantum solution is Grover's search. Complexity is $O(\sqrt{2^n}\cdot V(f))=O(n2^{n/2})$.
Note, that memory complexity is $O(n)$.

\subsubsection{Meet-in-the-Middle Solution}
Let us discuss the classical version of the Meet-in-the-Middle Solution.
We split the sequence $a$ to two parts $(a_1,\dots,a_t)$ and $(a_{t+1},\dots,a_n)$ for $t=\lfloor n/2 \rfloor$.

As the first step, we check all subsets of the first part and store all possible sums of these subsets in a set $Sums$. We implement the set using Self-Balanced Binary Search Tree \cite{cormen2001}.

As the second step, we check all subsets of the second part. We consider a search function $f:\{0,\dots,2^{n-t}-1\}\to \{0,1\}$. The value $f(mask)=1$ iff the subset $\{j_1,\dots,j_{m''}\}\subset\{t+1,\dots,n\}$ from the second part, which characteristic vector is the binary representation of $mask$, has sum $a_{j_1}+\dots+a_{j_{m''}}=s$  and there is the value  $k-s$ in the set $Sums$.
If we find a subset $a_{j_1},\dots,a_{j_{m''}}$ from the second part with a sum $s$ and the corresponding subset $a_{i_1},\dots,a_{i_{m'}}$ from the first part with a sum $k-s$, then we can construct a subset $a_{i_1},\dots,a_{i_{m'}},a_{j_1},\dots,a_{j_{m''}}$ with sum $k$ using elements from the whole sequence. In other words, we search for $mask: f(mask)=1$. We can do it using the linear search by all arguments $mask\in\{0,\dots 2^{n-t}-1\}$.

Let us compute the complexity of the algorithm. There are at most $2^t$ different sums for the first part. The complexity of adding an element to the set (Self-Balanced Binary Search Tree) is $O(\log(2^t))=O(t)$. The complexity of the first step (adding all sums to $Sums$) is $O(t\dot 2^t)$. 
There are $2^{n-t}$ different subsets for the second part. The complexity of computing one of them is $O(n-t)$ for computing a sum $s$ and $O(t)$ for searching $k-s$ in the set $Sums$. So, the complexity is $O(n-t+t)=O(n)$. The complexity of the second step is  $O(2^{n-t}n)$, and the total complexity is  $O(2^t\cdot t + 2^{n-t}\cdot n)=O(2^{n/2}n)$ if $t=n/2$.

Note that here we should store at most $2^t$ different sums in memory. Therefore, memory complexity is $O(2^n/2)$! That is exponentially bigger than in the brute force case. At the same time, we have achieved good running time as for the quantum version of the brute force algorithm.

Let us discuss the quantum version of the algorithm. We do the following changes:
\begin{itemize}
\item We change $t=n/3$.
\item We do the first step classically.
\item We replace the linear search of a $1$-argument of $f$ with Grover's Search.
\end{itemize}

Let us compute the complexity of the algorithm. The complexity of the first step is $O(2^{n/3}\cdot n)$. Complexity of the second step is $O(\sqrt{2^{2n/3}}\cdot n)=O(2^{n/3}\cdot n)$. The total complexity is $O(2^{n/3}\cdot n+2^{n/3}\cdot n)=O(2^{n/3}\cdot n)$.

The algorithm is presented in \cite{as2003}.

\subsection{Collision Problem}\label{sec:collision}
Problem.
\begin{itemize}
 \item We have a positive integer $n$ and positive integers $a_1,\dots,a_n$. There are two cases:
 \begin{enumerate}
 \item all elements are unique
 \item for any $a_i$ there is only one $a_j$ such that $i\neq j$ and $a_i=a_j$
 \end{enumerate}
 \item Our goal is to distinguish these two cases. We should return $1$ if it is the first case, and $0$ if it is the second case. 
 \end{itemize}
  In other words, we should distinguish $1$-to-$1$ and $2$-to-$1$ functions.

Firstly, let us discuss a simple solution with big running time. We start with the classical solution. In the second case, any element has a pair. Let us take the first number $a_1$. It should have a pair also. If it does not have a pair, then it is the first case (a $1$-to-$1$ function). We search $j$ such that $a_j=a_1$. In other words, we consider a search function $f:\{2,\dots,n\}\to\{0,1\}$. The value $f(j)=1$ iff $a_j=a_1$. If we found $f(j)=1$ and $j$ exists, then it is a $2$-to-$1$ function (the second case), else it is a $1$-to-$1$ function (the first case). Classically, we search a $1$-argument using linear search. Complexity is $O(n)$.

The quantum version of the algorithm uses Grover's search algorithm for searching the required $1$-argument for the search function $f$. Complexity is $O(\sqrt{n})$.

Let us use the meet-in-the-middle idea. Let us split the sequence into two parts. The first part is $a_1,\dots, a_t$, and the second one is $a_{t+1},\dots, a_n$.

Step 1. We read all variables from the first part and check whether collisions exist in the first part. If there is a collision, then we return that it is a ``$2$-to-$1$ function'' (the second case).
We can do it, for example, in the following way. We read all of these elements and store them into a set $S$ that is implemented by Self-Balanced Search Tree \cite{cormen2001}. Before storing an element, we check whether an element with the such value exists. If it is true, then it is a collision.

Step 2. We consider a search function for the second part that is $f_2:\{t+1,\dots,n\}\to \{0,1\}$. The value $f_2(j)=1$ iff $a_j$ belongs to the set $S$. We search an argument $j$ such that $f(j)=1$ using Grover's Search algorithm. If such an argument $j$ exists, then it is a ``$2$-to-$1$ function'' (the second case), else it is a ``$1$-to-$1$ function'' (the first case).

Let us discuss the complexity. The running time for Step 1 is $O(t\log t)$. Step 2 uses Grover's search algorithm for the search space of size $n-t$. If it is a ``$2$-to-$1$ function'' (the second case), then each element from the first part has exactly one pair element. Therefore, there are exactly $t$ solutions (arguments with $1$-value of the search function $f_2$), and Grover's search does $O(\sqrt{\frac{n-t}{t}})$ steps. The complexity of computing a value $f_2(j)$ is $
 O(\log t)$. So, the complexity of Step 2 is $O\left(\left(\frac{n-t}{t}\right)^{1/2}\log t\right)$.
Total complexity is $O\left(t\log t + \left(\frac{n-t}{t}\right)^{1/2}\log t\right)$. Let us take $t=n^{1/3}$. Total complexity is $O(n^{1/3}\log n + n^{(2/3)/2}\log n)=O(n^{1/3}\log n)$.
We can use another way for implementing the set $S$. If $a_i\in\{0,\dots,m-1\}$ and $m$ is not very big, then we can use, for example, a boolean array of size $m$ for the set $S$. In that case, the size of memory becomes $O(m)$, but the running time for search/insert operation for the set is $O(1)$ and the total complexity is $O( n^{1/3})$. Another way to obtain constant complexity for search/insert operation is using Hash tables.
At the same time, the query complexity of the algorithm is $O(n^{1/3})$ because it counts queries of input variables.

The algorithm is presented in \cite{bht97collarx}.

\section{Applications of Grover's Search Algorithm for String Problems.}
In this section, we compare strings in lexicographical order.

Firstly, let us discuss the useful data structure, that implements the ``Sorted set of strings'' data structure.

Assume that we have a sequence of strings $s^1,\dots,s^n$ of a length $k$. We fix the length for simplicity of explanation. At the same time, it is not necessary to have a fixed length. We use the Self-balanced Binary Search Tree as the data structure \cite{cormen2001}.

The data structure allows us to add, delete and search elements in $O(\log n)$ running time.
As keys in the nodes of the tree, we store indexes $i\in\{1,\dots,n\}$.
If an index $i$ is stored in a node, then any node in the left subtree has an index $j$ such that $s^j<s^i$; and any node in the right subtree has an index $j$ such that $s^j>s^i$.
As the comparing procedure for strings, we use the quantum comparing procedure from Section \ref{sec:str-cmp}.

The comparing procedure is noisy. If we use it as is, then we obtain a high error probability. We can repeat the comparing procedure $O(\log n)$ times to obtain $O(\frac{1}{n^3})$ error probability; and $O(\frac{\log n}{n^3})=O(\frac{1}{n^2})$ error probability for add, delete, search operations of the tree. In that case, the running time of comparing two strings is $\bigo{\sqrt{k}\log n}$. Add, delete, and search operations do $O(\log n)$ comparing operations. Therefore, the running time for these operations is $\bigo{\sqrt{k}(\log n)^2}$. At the same time, there is an advanced technique  \cite{ke2022,ksz2022} based on the Random Walks algorithm that allows us to obtain the same error probability $O(\frac{1}{n^2})$ with $\bigo{\sqrt{k}\log n}$ running time for add, delete, search operations. 

There are additional operations that the Self-Balanced Binary Search Tree provides. They are minimal end maximal elements search. It can be done in constant running time and correct if the structure of the tree is correct. It is correct if all previous add/delete operations were correct. In other words, if we have done $u$ add/delete operations, then the probability of error for minimum/maximum search is $O(\frac{u}{n^2})$.

The data structure is discussed in \cite{ke2022,ki2019,ksz2022}.

Using the ``Sorted set of strings'' data structure we implement quantum algorithms for several problems that are presented in the following sections.
\subsection{Strings Sorting}
Problem
\begin{itemize}
\item There are $n$ strings of the length $k$ that are $s_1,\dots,s_n$.
\item We want to sort them in the lexicographical order.
\end{itemize}

It is known \cite{hns2001,hns2002} that no quantum algorithm can sort arbitrary comparable objects faster than $O(n\log n)$. At the same time, several researchers tried to improve the hidden constant \cite{oeaa2013,oa2016}. In the case of string sorting, we can obtain quantum speed-up.

Let us discuss the classical algorithm for the problem. Strings can be sorted using Radix sort algorithm \cite{cormen2001}. The running time of the algorithm is $O((k+p)n)$, where $p$ is the size of the alphabet. If the size of the alphabet is constant, then the running time is $O(kn)$. At the same time, it is the lower bound too \cite{kiv2022}. So, the classical complexity is $\Theta(kn)$.
The main idea of sorting is sorting by the $k$-th symbol using the counting sort. Then, by $(k-1)$-th symbol and so on.

The quantum sorting algorithm can use different ideas. One of the presented here. 

Step 1. Add all strings to the ``Sorted set of strings'' data structure that was discussed below.

Step 2. Get and remove from the ``Sorted set of strings'' data structure the index of the minimal string and add it to the end of the result list.

A similar idea is used in HeapSort algorithm \cite{cormen2001}. At the same time, if we use Trie (Prefix tree) data structure \cite{cormen2001} as an implementation of  ``Sorted set of strings'', then we obtain the mentioned complexity $O(nk)$ in a case of the constant size of the alphabet.

Let $T$ be the Self-Balanced Binary Search Tree with strings as keys. We use it as a ``Sorted set of strings'' $T$. The procedure $\textsc{Add}(T,i)$ adds the string $s^i$ to the set. The procedure $\textsc{PopMin}(T)$ returns the index of the minimal string in the set and removes it. Both operations work with $O(\sqrt{k}\log n)$ running time as it was discussed below. The sorting algorithm is presented in Algorithm \ref{alg:sort}.

	\begin{algorithm}
\caption{Quantum algorithm for the String Sorting problem}\label{alg:sort}
\begin{algorithmic}
\For{$i\in\{1,\dots,n\}$}
\State $\textsc{Add}(T,i)$
\EndFor
\State $order\gets()$\Comment{Empty list}
\For{$i\in\{1,\dots,n\}$}
\State $order \gets order \cup \textsc{PopMin}(T)$
\EndFor
\end{algorithmic}
\end{algorithm}
Let us discuss complexity of the algorithm. The main operations with ``Sorted set of strings'' has $O(\sqrt{k}\log n)$ running time and $O(\frac{1}{n^2})$ error probability. If we repeat them $2n$ times, then the total running time is $O(\sqrt{k}\cdot n\log n)$ and $O(\frac{1}{n})$ error probability.

Note, that the lower bound \cite{ksz2022} for quantum sorting of strings is $\Omega(\sqrt{k}\cdot n/\log n)$. So, the presented algorithm is close to the lower bound.  

Different versions of the sorting algorithm and lower bounds for classical and quantum cases are presented in \cite{ki2019,kiv2022,ksz2022}. 
\subsection{The Most Frequent String Search Problem}
Problem
\begin{itemize}
\item There are $n$ strings of the length $k$ that are $s^1,\dots,s^n$.
\item We want to find the minimal index of the most frequent string. Let us define it formally. 
\begin{itemize}
\item Assume $\#(u)=|\{i:s^i=u\}|$ is the number of occurrence of a string $u$.
\item Let $u_{max}=s^j$ for some $j\in\{1,\dots, n\}$ such that $\#(u_{max})=\max\limits_{i\in\{1,\dots, n\}}\#(s^i)$.
\item We search the minimal index $j_0=min\{j: s^j=u_{max}\}$.
\end{itemize}
\end{itemize}
There are different solutions in the classical case. One of them is sorting strings using Radix Sort \cite{cormen2001}, then all equal strings are situated sequentially. Here searching for the largest segment of equal strings is a solution to our problem.
Let us present the algorithm. We check strings one by one, if we found an unequal pair of sequential strings, then it is the end of a segment of equal strings. We return the length of the longest segment of equal strings.
Assume that we have $\textsc{Compare}(u,v)$ procedure from Section \ref{sec:str-cmp} that compares two strings $u$ and $v$ for equality. It returns $0$ iff strings equal.
\begin{algorithm}
\caption{Deterministic solution for the Most Frequent String Search Problem}
\begin{algorithmic}
\State $\textsc{Sort}(s^1,\dots,s^n)$
\State $lastBorder\gets 0$
\For{$i\in\{1,\dots,n-1\}$}
\If{$\textsc{Compare}(s^i,s^{i+1})\neq 0$}
\State $SegmentLength\gets i-lastBorder+1$
\If{$SegmentLength>ans$}
\State $ans\gets SegmentLength$
\EndIf
\State $lastBorder\gets i+1$
\EndIf
\EndFor\State $SegmentLength\gets n-lastBorder+1$
\If{$SegmentLength>ans$}
\State $ans\gets SegmentLength$
\EndIf
\State \Return $ans$
\end{algorithmic}
\end{algorithm}

 The complexity of Radix Sort is $O(nk)$, complexity of searching the largest segment is also $O(nk)$. The total complexity is $O(nk)$. At the same time, the lower bound for the problem is also $\Omega(nk)$. Finally, we obtain the classical complexity of the problem $\Theta(nk)$. (Lower bound is presented in \cite{kiv2022}). 
 
The quantum version of the algorithm uses the quantum sorting algorithm and the quantum string comparing algorithm. So, the complexity of the sorting algorithm is $O(\sqrt{k}n\log n)$. The complexity of comparing algorithm is $O(\sqrt{k}\log n)$ (We repeat it to obtain $O(\frac{1}{n^2})$ error probability). The final complexity is $O(\sqrt{k}n\log n + \sqrt{k}n\log n)=O(\sqrt{k}n\log n)$. 
 
 A detailed description of the algorithm and alternative solutions are presented in \cite{kiv2022, ki2019, ke2022}.
\subsection{Dyck language}
Let us consider a string that contains only parenthesis ``('' or ``)''. We want to check that the parentheses are balanced. The problem is well-known and important. In theoretical computer science, it is known as Dyck language recognition.
Example:
\begin{itemize}
\item the string ``((()()))'' belongs to Dyck language,
\item  the string (())\textbf{)} does not belong to Dyck language.
\end{itemize}

Let us consider the following problem $Dyck_{k,n}$:

\begin{itemize}
\item Input is a boolean string of a length at most $n$. $0$ means opening parenthesis,  $1$ means closing parenthesis.
\item A depth of parentheses is at most $k$.
\end{itemize}

The answer is $1$ if an input string is a well-balanced parentheses sequence and the depth is at most $k$. Otherwise, the result is $0$.  

Example:
\begin{itemize}
\item ((()())) belongs to $Dyck_{3,8}$ or $Dyck_{4,8}$,
\item ((()())) does not belong to $Dyck_{2,8}$ or $Dyck_{1,8}$.
\end{itemize} 

%Assume that we encode the opening parenthesis $($ using $0$ and the closing parenthesis $)$ using $1$.
Let an input string be $x=(x_1,\dots,x_n)$ and $x[l,r]=(x_l,\dots,x_r)$ be a substring of $x$. Let $x[1,r]$ be a prefix of the string $x$. Let $h(x,r)=\#_0(x[1,r])-\#_1(x[1,r])$ be a balance of opening and closing parenthesizes in the prefix $x[1,r]$, where $\#_0(x[1,r])$ and $\#_1(x[1,r])$ are numbers of $0$ and $1$ symbols, respectively. We assume that $h(x,0)=0$. Let $g(x[l,r])=h(x,r)-h(x,l-1)$ be a balance of opening and closing parenthesizes in the substring $x[l,r]$.  

If we check the following three properties, then we can be sure, that an input string is $1$-instance.
\begin{itemize}
\item the balance for $x$ is $0$, i.e. $g(x)=0$;
\item there is no negative-balance prefix, i.e. $g(x[1,r])\geq 0$ for $r\in\{1,\dots,n\}$;
\item there is no $(k+1)$-balance prefix, i.e. $g(x[1,r])\leq k$ for $r\in\{1,\dots,n\}$.
\end{itemize}

For checking these three properties classically, we can use a counter for computing a balance on a prefix and check all three conditions. The running time of the algorithm is $O(n)$. At the same time, the lower bound is also $\Omega(n)$.

Let us consider the quantum algorithm. We want to reformulate the problem as a search problem.
Let us consider the new input $u=1^k x 0^k$, where $1^k$ is the string of $k$ times $1$s (closing parenthesizes) and $0^k$ is the string of $k$ times $0$s (opening parenthesizes). 

Let a substring $x[l,r]$ be $+t$-substring if $g(u[l,r])=t$ and it is minimal, i.e. there is no $(l',r')$ such that $l\leq l'\leq r'\leq r$, $(l,r)\neq (l',r')$ and $g(x[l',r'])=t$.
Let a substring $x[l,r]$ be $-t$-substring if $g(x[l,r])=-t$ and it is minimal, i.e. there is no $(l',r')$ such that $l\leq l'\leq r'\leq r$, $(l,r)\neq (l',r')$ and $g(x[l',r'])=-t$.
Let a substring $x[l,r]$ be $\pm t$-substring if it is  $+t$-substring or $-t$-substring.
Let a sign of a substring $x[l,r]$ be $sign(x[l,r])=1$ if $x[l,r]$ is a $+t$-substring; and let $sign(x[l,r])=-1$ if $x[l,r]$ is a  $-t$-substring.

The computing of $Dyck_{k,n}(x)$ is equivalent to searching any $\pm (k+1)$-substring in $u=1^k x 0^k$. Let $m=|u|=2k+n$ be the length of $u$.
\begin{itemize}
\item If $g(x)>0$, then there is $l$ such that $u[l,m]$ is a $+(k+1)$-substring.
\item If $g(x[1,r])<0$ for some $r\in\{1,\dots,n\}$, then $u[1,k+r]$ is a $-(k+1)$-substring;
\item If $g(x[1,r])= k+1$ for some $r\in\{1,\dots,n\}$, then $u[k+1,k+r]$ is a $+(k+1)$-substring;.
\end{itemize}
If we found a $\pm (k+1)$-substring, then $Dyck_{k,n}(x)=0$, and $Dyck_{k,n}(x)=1$ otherwise.

\subsubsection{$\pm 2$-substring Search}
Let us discuss how to search $\pm t$-substring if $t=2$. 
Let us consider a search function $f:\{1,\dots,n-1\}\to\{0,1\}$ such that $f(i)=1$ iff $x_i= x_{i+1}$. If we find any $i'$ such that $f(i')=1$, then the substring $x[i',i'+1]$ is a $\pm 2$-substring. If $x_{i}=0$, then it is a $+2$-substring, and a $-2$-substring otherwise.
We use Grover's Search for searching $i'$ argument for $f$. The complexity of the search is $O(\sqrt{n})$.

We can search a $\pm 2$-substring inside segment $[L,R]$, where $1\leq L\leq R\leq n$. For this, we invoke the Grover Search algorithm only inside this segment. 

We can search the minimal $j\in[L,R|$ such that $f(j)=1$ gives us the first $\pm 2$-substring in the segment. We can use the First One Search algorithm from Section \ref{sec:first-one2} for this. The complexity is $O(\sqrt{j'-L})$, where $j'$ is the target index. We can search the maximal $j\in[L,R|$ such that $f(j)=1$ gives us the last $\pm 2$-substring in the segment. Here we also can use the First One Search algorithm, but we search in the segments $[R-2^J, R]$ for different powers of two $2^J$. The complexity is $O(\sqrt{R-j''})$, where $j''$ is the target index. 

%%%%%%%%%%%%%%%%%%%%%%%%%%%%%%%%%%%%%%%%%%%%%%%%%%%%%
\subsubsection{$\pm 3$-substring Search}
%%%%%%%%%%%%%%%%%%%%%%%%%%%%%%%%%%%%%%%%%%%%%%%%%%%%%
Let us discuss how to search $\pm t$-substring in a segment $[L,R]$ if $t=3$.

Firstly, we fix a size of $\pm 3$-substring. Assume that it is $d>2$. We should have at least $3$ symbols in the target substring, that are $000$ or $111$.

For an index $q\in[L,R]$, let us consider an algorithm that checks whether the index $q$ is inside a $\pm 3$-substring of size at most $d$. For this reason, we check whether $q$ belongs to a substring of one of the two following patterns:
\begin{itemize}
\item a substring $x[i,j]$ is $+3$-substring if there are $u',u''$ such that $i\leq u'\leq j$, $i\leq u''\leq j$ and
\begin{itemize}
\item $x[i,u']$ is a $+2$-substring,
\item $x[u'',j]$ is a $+2$-substring,
\item if $u'< u''$, then there are no $\pm 2$-substring in $x[u',u'']$. 
\end{itemize}
The last condition is important, because if there is a $-2$-substring, then  $g(x[i,j])\leq 3$ or it is not minimal.  
\item a substring $x[i,j]$ is $-3$-substring if there are $u',u''$ such that $i\leq u'\leq j$, $i\leq u''\leq j$ and
\begin{itemize}
\item $x[i,u']$ is a $-2$-substring,
\item $x[u'',j]$ is a $-2$-substring,
\item if $u'< u''$, then there are no $\pm 2$-substring in $x[u',u'']$. 
\end{itemize}
\end{itemize}

For this reason, we invoke the following algorithm
\begin{itemize}
\item Step 1. We check, whether $q$ is inside a $\pm 2$ substring. We can check $x[q]=x[q+1]$ or $x[q]=x[q-1]$ condition for this reason. Let $x[i',j']$ be the result $\pm 2$ substring in that case.
If the condition is true we go to Step 2. If it is false, then we go to Step 4.
\item Step 2. We check whether it is the left side of the pattern string. We search the minimal $u\in [i'+1,min(i'+d-2,R-1)]$ such that $x[u]=x[u+1]$. If $sign(x[i',j'])=sign(x[u,u+1])$, then we stop the algorithm and return $x[i',u+1]$ as a result substring. Otherwise, we go to Step 3.
\item Step 3. We check, whether $x[i',j']$ is the right side of the pattern string. We search the maximal $u\in [max(L+1,j'-d+1),j'-1]$ such that $x[u]=x[u-1]$. If $sign(x[i',j'])=sign(x[u-1,u])$, then we stop the algorithm and return $x[u-1,j']$ as a result substring. Otherwise, $q$ does not belong to a $\pm 3$-substring of a length at most $d$ and the algorithm fails.
\item Step 4. Assume that $q$ is not inside a $\pm 2$ substring. Then, we search a $\pm 2$-substring to the right. That is the minimal $u\in [q,min(R-1,q+d-2)]$ such that $x[u]=x[u+1]$. If we found it, then it is $x[i',j']$, where $j'=i'+1$. Then, we go to Step 5. Otherwise, the algorithm fails.
\item Step 5. We search a $\pm 2$-substring to the left from $x[i',j']$. That is the maximal $u\in [max(L+1,j'-d+1),j'-1]$ such that $x[u]=x[u-1]$. If there is no such $u$, then the algorithm fails. If we found $u$, then we check whether $sign(x[i',j'])=sign(x[u-1,u])$. If it is true, then we stop the algorithm and return $x[u-1,j']$ as a result substring. Otherwise, the algorithm fails.
\end{itemize} 

Let the procedure that implements this algorithm be called $\textsc{FixedPoindFixedLen}(q,d,t,L,R)$, where $t=3$.

It returns the pair of the indexes $(i,j)$ such that $x[i,j]$ is a $\pm 3$-substring. If the algorithm fails, then it returns $NULL$. 

The complexity of the algorithm is $O(\sqrt{d})$ because of the complexity of Grover's search.

Let us consider a randomized algorithm that searches any $\pm 3$-substring in a segment $[L,R]$ of a length at most $d$. We randomly choose $q\in [L,R]$ and check whether $\textsc{FixedPoindFixedLen}(q,d,t,L,R)\neq NULL$.
If $q$  belongs to the target segment, then the algorithm succeeds, otherwise, it fails. The probability of success is $p_{success}=\frac{len}{R-L}<\frac{d}{R-L}$, where $len<d$ is the length of the target segment. 

We apply the amplitude amplification algorithm (Section \ref{sec:amplampl}) to this randomized algorithm. The complexity is $O\left(\sqrt{\frac{1}{p_{success}}}\cdot \sqrt{d}\right)=\bigo{\sqrt{\frac{R-L}{d}}\cdot \sqrt{d}}=O(\sqrt{R-L})$.

Let the procedure that implements this algorithm be called $\textsc{FixedLen}(d,t,L,R)$, where $t=3$. 

Let us discuss the algorithm for searching the minimal indexes $(i,j)\in [L,R]$ such that $x[i,j]$ is a $\pm 3$-substring. We modify the binary search version of the First One Search algorithm from Section \ref{sec:first-one}. The algorithm is presented in Algorithm \ref{alg:first-segment}.

\begin{algorithm}[ht]
\caption{Quantum algorithm for searching the minimal indexes of $\pm 3$-segment in $[L,R]$ of size $d$.}\label{alg:first-segment}
\begin{algorithmic}
\If{$\textsc{FixedLen}(d,t,L,R)\neq NULL$}
\State $left\gets L$, $right\gets R$.
 \While{$left<right$}
\State $middle\gets \lceil (left + right)/2 \rceil$
\If{$\textsc{FixedLen}(d,t,left,middle-1)\neq NULL$}
\State $right\gets middle-1$
\Else
\If{$\textsc{FixedPointFixedLen}(middle,d,t,left,right)\neq NULL$}
\State $result\gets\textsc{FixedPointFixedLen}(middle,d,t,left,right)$
\State Stop the loop.
\Else
\State $left\gets middle+1$
\EndIf
\EndIf
 \EndWhile
\Else
 \State $result \gets NULL$
\EndIf
\State \Return $result$
\end{algorithmic}
\end{algorithm}

Similar to the proof of complexity from Section \ref{sec:first-one} we can show that complexity is $O(\sqrt{R-L})$.

Let the procedure that implements this algorithm is called $\textsc{FixedLenMinimal}(d,t,L,R)$, where $t=3$. 

Similarly, we can define the procedure  $\textsc{FixedLenMaximal}(d,t,L,R)$ that searches the maximal indexes of a $\pm t$-segment from $[L,R]$, where $t=3$.

We can discuss how to fix the issue with a fixed length of the segment $d$. Let us consider possible lengths of the segment $\{2^0,2^1,2^2,\dots,2^{\lceil \log _2 n\rceil}\}$. There is the minimal $j$ such that $2^j>len'$, where $len'$ is the minimal possible length of a $\pm 3$-substring in $[L,R]$. The algorithm fails for each smaller power of $2$. So, we can find the target $j$ using the First One Search algorithm (Section \ref{sec:first-one2}). The complexity of the algorithm is $O(\sqrt{j}\cdot T)=O(\sqrt{\lceil \log_2 len' \rceil}\cdot T)=O((\log (R-L))^{0.5}\cdot T)$, where $T$ is the complexity of the algorithm for the fixed size of a segment.
Finlally, we can use three procedures with complexity $O((\log (R-L))^{0.5}\sqrt{R-L})$: $\textsc{SegmentMinimal}(t,L,R)$, $\textsc{SegmentMaximal}(t,L,R)$ and $\textsc{SegmentAny}(t,L,R)$ for the minimal indexes of a $\pm 3$-segment, maximal ones and any indexes of a $\pm 3$-segment, respectively.

Note, that on each step we have Grover's Search based algorithms as subroutines. Therefore, we should use the bounded-error-input version of Grover's Search (Section \ref{sec:random-oracle}).

%%%%%%%%%%%%%%%%%%%%%%%%%%%%%%%%%%%%%%%%%%%%%%%%%%%%%%
\subsubsection{$\pm t$-substring Search}
%%%%%%%%%%%%%%%%%%%%%%%%%%%%%%%%%%%%%%%%%%%%%%%%%%%%%%
Let us discuss the way of searching $\pm t$-substring in a segment $[L,R]$. Assume that we already know how to implement $\textsc{SegmentMinimal}(t-1,L,R)$, $\textsc{SegmentMaximal}(t-1,L,R)$ and $\textsc{SegmentAny}(t-1,L,R)$ procedures. The complexity of these procedures is $O(\sqrt{R-L}(\log (R-L))^{0.5(t-3)})$.

We use the idea similar to the idea of the algorithm for  $\pm 3$-substring.
Firstly, we fix a size of $\pm t$-substring. Assume that it is $d>t-1$. We should have at least $t$ symbols in the target substring, that is $00\dots 0$ or $11\dots 1$.

For a given index $q\in[L,R]$, let us consider an algorithm that checks whether an index $q$ inside a $\pm t$-substring of size at most $d$. For this reason, we check whether $q$ belongs to a substring of one of the two following patterns:
\begin{itemize}
\item a substring $x[i,j]$ is a $+t$-substring if there are $u',u''$ such that $i\leq u'\leq j$, $i\leq u''\leq j$ and
\begin{itemize}
\item $x[i,u']$ is a $+(t-1)$-substring,
\item $x[u'',j]$ is a $+(t-1)$-substring,
\item if $u'< u''$, then there are no $\pm (t-1)$-substring in $x[u',u'']$. 
\end{itemize}
\item a substring $x[i,j]$ is a $-t$-substring if there are $u',u''$ such that $i\leq u'\leq j$, $i\leq u''\leq j$ and
\begin{itemize}
\item $x[i,u']$ is a $-(t-1)$-substring,
\item $x[u'',j]$ is a $-(t-1)$-substring,
\item if $u'< u''$, then there are no $\pm (t-1)$-substring in $x[u',u'']$. 
\end{itemize}
\end{itemize}

For this reason, we invoke the following algorithm
\begin{itemize}
\item Step 1. We check, whether $q$ is inside a $\pm (t-1)$ substring. We can invoke $\textsc{FixedPointFixedLen}(q,d,t-1,L,R)$ procedure for checking the condition. If the condition is true, then let $x[i',j']$ be the result $\pm (t-1)$ substring, and we go to Step 2. If it is false, then we go to Step 4.
\item Step 2. We check, whether the found $x[i',j']$ is the left side of the pattern string. We search the minimal $(i'',j'')$ such that $i'+1\leq i''\leq j''\leq min(i'+d-2,R)$ and $x[i'',j'']$ is a $\pm (t-1)$-substring. We use $\textsc{SegmentMinimal}(t-1,L,R)$ for this reason. If $sign(x[i',j'])=sign(x[i'',j''])$, then we stop the algorithm and return $x[i',j'']$ as a result substring. Otherwise, we go to Step 3.
\item Step 3. We check, whether $x[i',j']$ is the right side of the pattern string. We search the maximal $(i'',j'')$ such that $max(L,j'-d+1)\leq i''\leq j''\leq j'-1$ and $x[i'',j'']$ is a $\pm (t-1)$-substring. We use $\textsc{SegmentMaximal}(t-1,L,R)$ for this reason. If $sign(x[i',j'])=sign(x[i'',j''])$, then we stop the algorithm and return $x[i'',j']$ as a result substring. Otherwise, $q$ does not belong to a $\pm t$-substring of a length at most $d$ and the algorithm fails.
\item Step 4. Assume that $q$ is not inside a $\pm (t-1)$ substring. Then, we search for a $\pm (t-1)$-substring to the right. That is the minimal $(i',j')$ such that $q\leq i'\leq j'\leq min(R,q+d-2)$ and $x[i',j']$ is a $\pm (t-1)$-substring. We use $\textsc{SegmentMinimal}(t-1,L,R)$ for this reason. If the procedure finds a $\pm (t-1)$-substring, then we go to Step 5. Otherwise, the algorithm fails.
\item Step 5. We search a $\pm (t-1)$-substring to the left from $x[i',j']$. 
That is the maximal $(i'',j'')$ such that $max(L,j'-d+1)\leq i''\leq j''\leq j'-1$ and $x[i'',j'']$ is a $\pm (t-1)$-substring.  We use $\textsc{SegmentMaximal}(t-1,L,R)$ for this reason. If there is no such $(i'',j'')$, then the algorithm is failed. If we found $(i'',j'')$, then we check $sign(x[i',j'])=sign(x[i'',j''])$. If it is true, then we stop the algorithm and return $x[i'',j']$ as a result substring. Otherwise, the algorithm fails.
\end{itemize} 

We have implemented
the procedure $\textsc{FixedPoindFixedLen}(q,d,t,L,R)$, where $t\geq 3$. It returns a pair of indexes $(i,j)$ such that $x[i,j]$ is a $\pm t$-substring. If the algorithm fails, then it returns $NULL$. 

The complexity of the algorithm is $O(\sqrt{d}\cdot (\log d)^{0.5(t-3)})$ because of complexity of Grover's search and complexity of $\textsc{SegmentMinimal}(t-1,L,R),\textsc{SegmentMaximal}(t-1,L,R)$ and $\textsc{FixedPoindFixedLen}(q,d,t-1,L,R)$ procedures.

We obtain the procedure $\textsc{FixedLen}(d,t,L,R)$ from $\textsc{FixedPoindFixedLen}(q,d,t,L,R)$ using amplitude amplification algorithm (Section \ref{sec:amplampl}) similarly to the $t=3$ case. 

Similarly to the $t=3$ case, we can obtain the algorithm for searching the minimal indexes $(i,j)\in [L,R]$ such that $x[i,j]$ is a $\pm t$-substring. We use a modification of the binary search version of the First One Search algorithm from Section \ref{sec:first-one}.

Using these ideas we obtain $\textsc{FixedLenMinimal}(d,t,L,R)$ procedure and  $\textsc{FixedLenMaximal}(d,t,L,R)$ procedure for searching the maximal indexes of the $\pm t$-segment from $[L,R]$.

Procedures $\textsc{FixedLen}(d,t,L,R)$, $\textsc{FixedLenMinimal}(d,t,L,R)$ and $\textsc{FixedLenMaximal}(d,t,L,R)$ have complexity $O(\sqrt{R-L} \cdot (\log (R-L))^{0.5(t-3)})$.

Similarly to the $t=3$ case, we can resolve the issue with the fixed length $d$ using The First One Search algorithm for powers of two. Finally, we can use three procedures with complexity $O((\log (R-L))^{0.5(t-2)}\sqrt{R-L})$: $\textsc{SegmentMinimal}(t,L,R)$, $\textsc{SegmentMaximal}(t,L,R)$ and $\textsc{SegmentAny}(t,L,R)$ for the minimal indexes of a $\pm t$-segment, maximal ones and any indexes of a $\pm t$-segment, respectively.
%%%%%%%%%%%%%%%%%%%%%%%%%%%%%%%%%%%%%%%%%%%%%%%%%%%%
\subsubsection{Complexity of the Dyck Language Recognition}
We have discussed that the Dyck language recognition is equivalent to $\pm(k+1)$-substring search. So, the complexity of the algorithm is $O(\sqrt{n}\cdot (\log n)^{0.5(k-1)})$. The result was presented in \cite{abikkpssv2020}.

The lower bound for the problem is also known. It is $\Omega(\sqrt{n}\cdot c^k)$. The lower bound was proven by two groups independently \cite{abikkpssv2020,bps2021}.

Let us remind that classical complexity (lower and upper bound) is  $\Theta(n)$.

So, we can see two points:
\begin{itemize}
    \item The lower bound and upper bound are close, but they are not the same. It is an open problem to find a quantum algorithm that ``changes'' logarithm to constant in complexity or to prove a better lower bound.
    \item If $k=\Omega(\log n)$, then we cannot obtain a quantum speed-up.
\end{itemize}

Based on the same idea there is an algorithm for Dyck language recognition in the case of multiple types of brackets that have the same upper and lower bounds as in the case of a single type of brackets (parentheses). The research is presented in \cite{kk2021}. 
\section{Quantum Hashing and Quantum Fingerprinting}
Let us consider the streaming computational model that was discussed in Section \ref{sec:qbp}. The main goal is a minimization of memory.
 
Let us consider the Strings Equality problem for data stream processing algorithms.

   \begin{itemize}
     \item The input is $u2v$, where  $u\in\{0,1\}^n$, $v\in\{0,1\}^m$ are two binary strings.
     \item We should output $1$ if $u=v$, and $0$ otherwise.
     \item Example:
     
   $010100010001\textbf{2}010100010001 \mbox{ are equal}$
   
  $01010\textit{0}010001\textbf{2}01010\textit{1}010001 \mbox{ are inequal}$
   \end{itemize}
Let us discuss a deterministic solution.
We store $u$ in the memory. Then, we read $v_i$ and check equality of $u_i=v_i$ for each $i\in\{1,\dots,n\}$. If $m\neq n$, then we return $0$.
The memory complexity of the algorithm is $O(n)$.

Let us discuss a randomized solution.
The idea is to guess an unequal position and check whether it is the correct guess. We randomly choose $j\in\{1,\dots,n\}$. Then, we skip all indexes except $j$ and keep $v_j$ in memory. After that, we skip $n$ variables and compare them with $u_j$. If $v_j=u_j$, then we return $1$ and $0$ otherwise. Here we assume, that $n$ and $m$ are known in advance, and we already checked $n=m$ equality. If it is a wrong assumption, then we can assume that $n$ is maximal possible on the random choice step, then we update $j$ according to the actual value of $n$.
 The memory complexity of the algorithm is $O(\log n)$.
 Let us compute the success probability. In the worst case, strings differ in only one bit. Probability of picking this bit is $p_{sucess}=\frac{1}{n}$. 
  So, the size of memory is good, but the success probability is bad.

Let us discuss alternative approaches. How do we compare strings in programming languages (deterministically)? We compare hash values first. If hash values are equal, then we compare strings symbol by symbol. Here by a hash value, we mean a result of a hash function. In fact, as a hash function, we can choose any function  $h:X\to Y$ such that
 \begin{itemize}
\item $|X|>|Y|$; and
\item if $h(x)\neq h(x')$, then $x\neq x'$.
\end{itemize}

Let us discuss an example of a hash function.
 Let us consider a binary string $x$, e.g. $x=10110101$. Let $value(x)$ be a number which binary representation is reversed $x$, e.g. 
 
 $value(x)=1\cdot 2^0 + 0\cdot 2^1 + 1\cdot 2^2 + 1\cdot 2^3 + 0\cdot 2^4 + 1\cdot 2^5 + 0\cdot 2^6 + 1\cdot 2^7 =173$

 As a hash function we choose $h_p:\{0,1\}^*\to \{0,\dots,p-1\}$, such that $h_p(x)=value(x)$ mod $p$, for some prime $p$, e.g. for $p=7$, we have $h_7(10110101)=2$.
 
Let us consider only $h_p$ as a hash function. The algorithm for our problem is the following. We compute a hash value for the first string $h_p(u)$ and for the second one $h_p(v)$. If $h_p(u)=h_p(v)$, then we say that strings are equal. If $h_p(u)\neq h_p(v)$, then $u\neq v$.

For this algorithm, we store the hash values and the index of the current symbol. So, memory complexity is $O(\log p+\log n)$. The solution is deterministic and has good memory complexity. Does it work correctly? Sometimes... 

If $u=v$, then $h_p(u)=h_p(v)$ always. At the same time, if $u\neq v$, then we cannot claim that $h(u)\neq h(v)$. On the other hand, if $h(u)\neq h(v)$, then we can be sure that $u\neq v$. At the same time,  if $h(u)=h(v)$, then we cannot be sure that $u=v$. We can have \textit{collisions}. It means we can have two strings $u\neq v$ such that $h_p(u)=h_p(v)$. For example, $h_7(10110101)=173$ mod $7=2$, and $h_7(00000010)=2$ mod $7=2$. Hash values are equal, but strings are different.

How can we fix the issue? Let us choose two primes $p$ and $q$. We compute two hash values for each string. If $h_p(u)=h_p(v)$ and $h_q(u)=h_q(v)$, then we claim that $u=v$.  The memory complexity is $O(\log n+ \log p+\log q)$. It is much better but does not work always correctly anyway. The new idea can have a collision also. Maybe, we should choose more primes? Three, four, five...? How many primes should it be?

Due to Chinese Remainder Theorem (CRT) \cite{ir90}, we should take $p_1,\dots, p_z$ such that \[\prod_{i=1}^z p_i\geq value(u),value(v).\] In that case, if $u\neq v$ and length $n=m$, then there is $j\in\{1,\dots z\}$ such that $value(u)$ mod $p_j\neq value(v)$ mod $p_j$ or $h_{p_j}(u)\neq h_{p_j}(v)$.
So, the size of memory is $\log_2(p_1)+\dots+\log_2(p_z)=\log_2(p_1\cdot \dots\cdot p_z)=O(\max\{n,m\})$ bits. There is no difference in memory size with the naive deterministic algorithm.

Let us convert the idea to a randomized algorithm. We can pick $p\in\{p_1,\dots,p_z\}$ uniformly  randomly. In that case, the success probability is $p_{success}=\frac{1}{z}$ due to the Chinese Remainder Theorem (CRT). We can say that $p_i\geq 2$, therefore $z\leq \log_2 value(u)\leq n$. The memory complexity is $O(\log p_i)=O(\log n)$ for storing the hash value, and $O(\log n)$ for storing the index of an observing symbol. So, the total memory complexity is $O(\log n)$. At the same time, the success probability is $p_{success}\approx \frac{1}{n}$. So, there is no benefit compared to the previous randomized algorithm.

Let us take $z+1$ primes $\{p_1,\dots,p_{z+1}\}$. Assume that $u\neq v$. Due to CRT, there is $j\in\{1,\dots z+1\}$ such that $h_{p_j}(u)\neq h_{p_j}(v)$. Due to CRT, there is $j'\in\{1,\dots z+1\}\backslash\{j\}$ such that $h_{p_{j'}}(u)\neq h_{p_{j'}}(v)$. Therefore, if we pick $p\in\{p_1,\dots,p_{z+1}\}$ uniformly randomly, then $p_{success}\approx \frac{2}{n}$.

If we take $z+t$ primes $\{p_1,\dots,p_{z+t}\}$, then due to the same idea,  $p_{success}\approx \frac{1+t}{n}$ for picking  $p\in\{p_1,\dots,p_{z+t}\}$ uniformly randomly.
If we choose $z+t=n/\varepsilon$, then $p_{success}\approx 1-\varepsilon$, and the error probability is $\varepsilon$. 
Let us compute the memory complexity. We need $O(\log n)$ memory for storing an index of the current symbol. Additionally, we need $O(\log p)=O(\log(p_{z+t}))$ memory for storing a hash value. An $i$-th prime number $p_i\approx i\ln i$ because of Prime number Theorem (Chebyshev Theorem) \cite{d82}.
Therefore, the memory complexity is 
\[\bigo{\log n + \log((z+t)\log (z+t))}=\bigo{\log n +\log(\frac{n}{\varepsilon}\log\frac{n}{\varepsilon}) }=\]\[
\bigo{\log n +\log n +\log\varepsilon^{-1} + \log\log\frac{n}{\varepsilon} }=\bigo{\log n+\log \varepsilon^{-1}}\]

Finally, we obtain the algorithm that have $O(\log n+\log \varepsilon^{-1})$ memory complexity for $\varepsilon$ error probability.  If $\varepsilon$ is constant or $O(1/n)$, then the memory complexity is $O(\log n)$.

Let us present it as Algorithm \ref{alg:fp}.
\begin{algorithm}[ht]
\caption{Fingerprinting technique for comparing two strings}\label{alg:fp}
\begin{algorithmic}
\State $i\gets Random(1,\dots,n/\varepsilon)$
\State $p\gets p_i$
\State $read(x)$

\State $hu \gets 0$

\While{$x\neq 2 $}
\State $hu \gets (hu\cdot 2 + x)\mbox{ }mod\mbox{ }p$
\State $read(x)$
\EndWhile
\State$read(x)$

\State$hv \gets 0$

\While{$x\neq EndOfLine$}

\State$hv \gets (hv\cdot 2 + x)\mbox{ }mod\mbox{ }p$

\State$read(x)$

\EndWhile
\If{$hu=hv$}
\State $result\gets True$
\Else
\State $result\gets False$
\EndIf
\State \Return $result$
\end{algorithmic}
\end{algorithm}

The presented algorithm is called Fingerprinting technique \cite{Fre79}. It has many different applications that can be found, for example, in \cite{agky14,agky16}. We can also say that the Rabin-Karp algorithm \cite{kr87,cormen2001} for searching substring in a text also has a similar idea.

\subsection{Quantum Fingerprinting Technique}
Let us discuss the quantum version of this idea. 

Firstly, we discuss the algorithm on a single qubit.
Let us choose some positive integer $k$. We use one qubit. The hash value of $u$ is the state of the qubit or an angle of the qubit. The angle is
 $\alpha = 2\pi k\cdot value(u)$.
 On reading a symbol $u_i$, if $u_i=1$, then we rotate it to $ \gamma_i = 2\pi\cdot k\cdot 2^{i}$. If $u_i=0$, then we do nothing. We can say, that we rotate to $u_i\cdot\gamma_i$ angle. After reading the string $u$ we obtain the following state of the qubit
 \[|\psi_u\rangle=cos(2\pi k\cdot value(u))|0\rangle+sin(2\pi k\cdot value(u))|1\rangle.\]
 
 In the same way, we can use one more qubit and compute the hash value for $v$. That is 
\[|\psi_v\rangle=cos(2\pi k\cdot value(v))|0\rangle+sin(2\pi k\cdot value(v))|1\rangle\].

Then, we compare the states of these two qubits, if they are equal, then we say, that $u=v$. If the states are different, then we say, that $u\neq v$.
How can we compare two states of qubits for equality? One way is SWAP-test (You can read details about it in Section \ref{sec:swaptest}). Here  Reverse-test idea \cite{aavz2016} is more useful.

Instead of two qubits, we use one qubit. 
\begin{itemize}
\item On reading $u_i$, we rotate the qubit to $u_i\cdot\gamma_i$ angle.
\item On reading $v_i$, we rotate the qubit to $-v_i\cdot\gamma_i$ angle.
\end{itemize}

After reading both strings, the final state is $|\psi\rangle=cos(2\pi\cdot k\cdot (a-b))|0\rangle+sin(2\pi\cdot k\cdot (a-b))|1\rangle$, where $a=value(u)$ and $b=value(v)$. Finally, we measure the qubit.

If $u=v$, then the probability of $0$-result after measurement is $Pr\{|0\rangle\}=1$; and the probability of $1$-result is $Pr\{|1\rangle\}=0$

If $u\neq v$, then $Pr\{|0\rangle\}=(cos(2\pi\cdot k\cdot (a-b)))^2$; and $Pr\{|1\rangle\}=(sin(2\pi\cdot k\cdot (a-b)))^2$.

So, we use $|0\rangle$ state for equality-result and $|1\rangle$ for inequality-result.
In the case of $u=v$, the algorithm is always correct. In the case of $u\neq v$, the probability of success is very small.

Let us boost the success probability.
The first way is similar to the deterministic approach. In other words, we use several hash functions. We choose several coefficients  $k_0,\dots,k_{t-1} $ and use $t$ qubits. The coefficient $k_i$ is used for $i$-th qubit. The state of qubits after reading both strings is 
\[|\psi_0\rangle=cos(2\pi k_0\cdot (a-b))|0\rangle+sin(2\pi k_0\cdot (a-b))|1\rangle,\]
\[\dots,\]
\[|\psi_{t-1}\rangle=cos(2\pi k_{t-1}\cdot (a-b))|0\rangle+sin(2\pi k_{t-1}\cdot (a-b))|1\rangle.\]

Finally, we measure all qubits. We say, that strings are equal if all qubits are in $|0\rangle$; and unequal if at least one of them returns $|1\rangle$.
If $u=v$, then all qubits are $|0\rangle$ and we obtain the equality-result with probability $1$.
If $u\neq v$, then $p_{error}=\prod_{i=0}^{t-1} (cos(2\pi k_i\cdot (a-b)))^2$.
At the same time, the number of qubits can be up to $n$ for obtaining a good success probability due to the Chinese Reminder Theorem.

Let us present the quantum fingerprinting algorithm.
We use two registers $|\psi\rangle$ of $\log t$ qubits and $|\phi\rangle$ of $1$ qubit.
\begin{itemize}
\item \textbf{Step 1.} We apply Hadamard transformation $H^{\otimes\log_2 t}$ to $|\psi\rangle$. The state after the step is 
\[|\psi\rangle|\phi\rangle=\frac{1}{\sqrt{t}}\sum_{i=0}^{t-1}|i\rangle|0\rangle.\]
\item \textbf{Step 2.} On reading $u_j$, if $u_j=1$, then we apply $Q_j$ such that for each $i\in\{0,\dots,t-1\}$ we have

\[|i\rangle(\cos \beta|0\rangle + \sin \beta|1\rangle) \to |i\rangle(\cos (\beta+\gamma_{i,j})|0\rangle + \sin (\beta+\gamma_{i,j})|1\rangle)\mbox{, where }\gamma_{i,j}=2\pi k_i \cdot 2^j.\]

For $a=value(u)$, the state after the step is  

\[|\psi\rangle|\phi\rangle=\frac{1}{\sqrt{t}}\sum_{i=0}^{t-1}|i\rangle(cos(2\pi k_i\cdot a)|0\rangle+sin(2\pi k_i\cdot a)|1\rangle)\]

\item \textbf{Step 3.} On reading $v_j$, if $v_j=1$, then we apply $Q_j'$ such that for each $i\in\{0,\dots,t-1\}$ we have

\[|i\rangle(\cos \beta|0\rangle - \sin \beta|1\rangle) \to |i\rangle(\cos (\beta-\gamma_{i,j})|0\rangle + \sin (\beta+\gamma_{i,j})|1\rangle)\mbox{, where }\gamma_{i,j}=2\pi k_i \cdot 2^j.\]

For $a=value(u),b=value(v)$, the state after the step is  

\[|\psi\rangle|\phi\rangle=\frac{1}{\sqrt{t}}\sum_{i=0}^{t-1}|i\rangle(cos(2\pi k_i\cdot (a-b))|0\rangle+sin(2\pi k_i\cdot (a-b))|1\rangle)\]
\item \textbf{Step 4.} We apply Hadamard transformation $H^{\otimes\log_2 t}$ to $|\psi\rangle$.
The state after the step is  
\[|\psi\rangle|\phi\rangle=\frac{1}{t}\left(\sum_{i=0}^{t-1}cos(2\pi\cdot k_i\cdot (a-b))\right)|0\rangle|0\rangle+\lambda_i\sum_{i=1}^{2t-1}|i\rangle\]

Here we use $|i\rangle$ for $|i'\rangle$ state of $|\psi\rangle$ and $|i''\rangle$ of $|\phi\rangle$, where $i'=\lfloor i/2 \rfloor$ and $i''=i$ mod $2$.
\item \textbf{Step 5.} We measure all the qubits.
\end{itemize}

If $u=v$, then we obtain $\ket{0}\ket{0}$ state with probability $1$. If $u\neq v$, then we obtain $\ket{0}\ket{0}$ state with probability $\frac{1}{t^2}\left(\sum_{i=0}^{t-1}cos(2\pi\cdot k_i\cdot (a-b))\right)^2$.

We associate $\ket{0}\ket{0}$ state with equality-result and all other states with inequality-result. If $u=v$, then we have a correct result with probability $1$. If $u\neq v$, then 
The error probability is $p_{error}=Pr\{\ket{0}\ket{0}\}=\frac{1}{t^2}\left(\sum_{i=0}^{t-1}cos(2\pi\cdot k_i\cdot (a-b))\right)^2$.
Due to \cite{an2008,an2009}, we can choose $t=\frac{1}{\varepsilon}\log_2 n$ coefficients $k_0,\dots, k_{t-1}$ such that $p_{error}<\varepsilon$. The claim is based on Azuma's theorem \cite{mr96}. An alternative procedure for choosing coefficients is presented in \cite{zk2019tucs}.

Note, that the scheme of the algorithm is similar to the Quantum Fourier Transform algorithm \cite{av2020}.

The quantum fingerprinting method was developed as computing method in \cite{af98}, and is used in many papers like \cite{an2008,an2009,l2009,av2013,av2009,av2011,kk2017,aakv2018,kk2019,kk2022,aaksv2022,kkk2022}. At the same time, it was used as a cryptography hashing algorithm in \cite{bcwd2001}, and then was used and extended in many papers \cite{av2013hash,aa2015,aa2015e,z2016,z2016group,aav2016,aavz2016,v2016binary,v2016,vlz2017,z2018,
aav2018,vvl2019,aav2020,av2022} including experimental results \cite{tavak2021}.
\subsection{SWAP-test}\label{sec:swaptest}
Firstly, let us discuss the SWAP gate. 

Assume we have two variables $x$ and $y$, and we want to swap their values. How to do it?

The first way uses a third variable:
   \begin{itemize}
   \item $x\gets a$, $y\gets b$
   \item $z\gets x$ $\quad\quad\quad x=a$, $y= b$ , $z= a$;
   \item $x\gets y$ $\quad\quad\quad x=b$, $y= b$ , $z= a$;
   \item $y\gets z$ $\quad\quad\quad x=b$, $y= a$ , $z= a$.
\end{itemize}    

What if we cannot use the third variable? Then, we can use $+$ and $-$ operations:
      \begin{itemize}
   \item $x\gets a$, $y\gets b$      
   \item $x\gets x+y$ $\quad\quad\quad x=a+b$, $y= b$;
   \item $y\gets x-y$ $\quad\quad\quad x=a+b$, $y= (a+b)-b=a$;
   \item $x\gets x-y$ $\quad\quad\quad x=(a+b)-a=b$, $y= a$.
\end{itemize}

Another way uses excluding or ($\oplus$) operation. It is based on the following properties: $a \oplus a = 0$ and $a \oplus b = b \oplus a$. The way is presented above:
      \begin{itemize}
         \item $x\gets a$, $y\gets b$  
   \item $x\gets x\oplus y$ $\quad\quad\quad x=a\oplus b$, $y= b$;
   \item $y\gets x\oplus y$ $\quad\quad\quad x=a\oplus b$, $y= (a\oplus b)\oplus b=a$;
   \item $x\gets x\oplus y$ $\quad\quad\quad x=(a\oplus b)\oplus a=b$, $y=a$.
\end{itemize}

Let us discuss the quantum version of the problem and solution that uses the idea with $\oplus$ operation.

 In the case of two qubits we have $\ket{\psi}\ket{\phi}=\ket{a}\ket{b}$ and we want to swap their values.
 For implementation  $x\gets x\oplus y$ that is  $\ket{\phi}\gets \ket{\psi\oplus \phi}$ we use $CNOT$ gate with control $\ket{\psi}$ and target $\ket{\phi}$. 
 
   \begin{itemize}
   \item If $\ket{\psi}$ is control and $\ket{\phi}$ is target for $CNOT$-gate, then modification is
   \[\ket{a}\ket{b}\to \ket{a}\ket{b\oplus a}\] 
      \item  If $\ket{\phi}$ is control and $\ket{\psi}$ is target for $CNOT$-gate, then modification is
      \[ \ket{a}\ket{b\oplus a}\to \ket{a\oplus b\oplus a}\ket{b\oplus a}=\ket{b}\ket{b\oplus a}\]
      
            \item If $\ket{\psi}$ is control and $\ket{\phi}$ is target for $CNOT$-gate, then modification is \[ \ket{b}\ket{b\oplus a}\to \ket{b}\ket{b\oplus a \oplus b}=\ket{b}\ket{a}\]

\end{itemize}

The circuit implementation of the gate and gate's notation is presented in Figure \ref{fig:swap}.
\begin{figure}[h]
\begin{center}
\includegraphics[height=3cm]{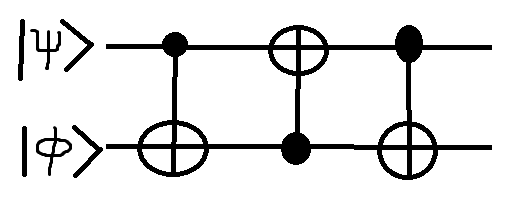} $\quad$ \includegraphics[height=3cm]{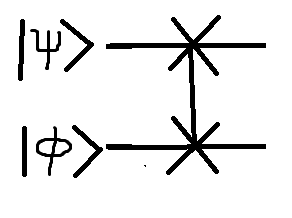}
\caption{The circuit  implementation of the SWAP-gate and gate's notation}\label{fig:swap}
\end{center}
\end{figure}

The matrix representation of the SWAP-gate is following
\[SWAP=\begin{pmatrix} 
1&0&0&0\\
0&0&1&0\\
0&1&0&0\\
0&0&0&1\\
\end{pmatrix}\]

Additionally, we can implement control SWAP-gate. If a qubit $\ket{\xi}$ is a control bit for a SWAP-gate, then we use $CNOT$-gates with the additional control qubit $\ket{\xi}$.

Let us discuss the SWAP-test itself. Assume that we have two qubits $\ket{\psi}$ and $\ket{\phi}$. We want to compare their amplitudes for equality. The first way is to measure them several times and compare probability distributions. The disadvantage of this approach is the requirement of having many copies of these qubits. Let us consider another way that uses the SWAP-gate. We use one additional qubit $\ket{\xi}$. Assume that $\ket{\psi}=\ket{a}$, $\ket{\phi}=\ket{b}$, and $\ket{\xi}=\ket{0}$.      
    
    \begin{itemize}
    
      \item \textbf{Step 1.} We apply  Hadamard transformation to $\ket{\xi}$
      \[\ket{0,a,b}\to\frac{1}{\sqrt{2}}(\ket{0,a,b}+\ket{1,a,b})\]
      \item \textbf{Step 2.} We apply control SWAP-gate to $\ket{\psi}$ and $\ket{\phi}$ as target and $\ket{\xi}$ as control:
      \[\frac{1}{\sqrt{2}}(\ket{0,a,b}+\ket{1,a,b})\to\frac{1}{\sqrt{2}}(\ket{0,a,b}+\ket{1,b,a})\]
      \item \textbf{Step 3.}We apply  Hadamard transformation to $\ket{\xi}$
      \[\frac{1}{\sqrt{2}}(\ket{0,a,b}+\ket{1,b,a})\to\frac{1}{2}(\ket{0,a,b}+\ket{1,a,b}+\ket{0,b,a}-\ket{1,b,a})=\frac{1}{2}\ket{0}(\ket{a,b}+\ket{b,a}) + \frac{1}{2}\ket{1}(\ket{a,b}-\ket{b,a})\]
      \item \textbf{Step 4.} We measure $\ket{\xi}$.    
   \end{itemize}
The circuit implementation of the algorithm is presented in Figure \ref{fig:swaptest}.
\begin{figure}[h]
\begin{center}
\includegraphics[height=3cm]{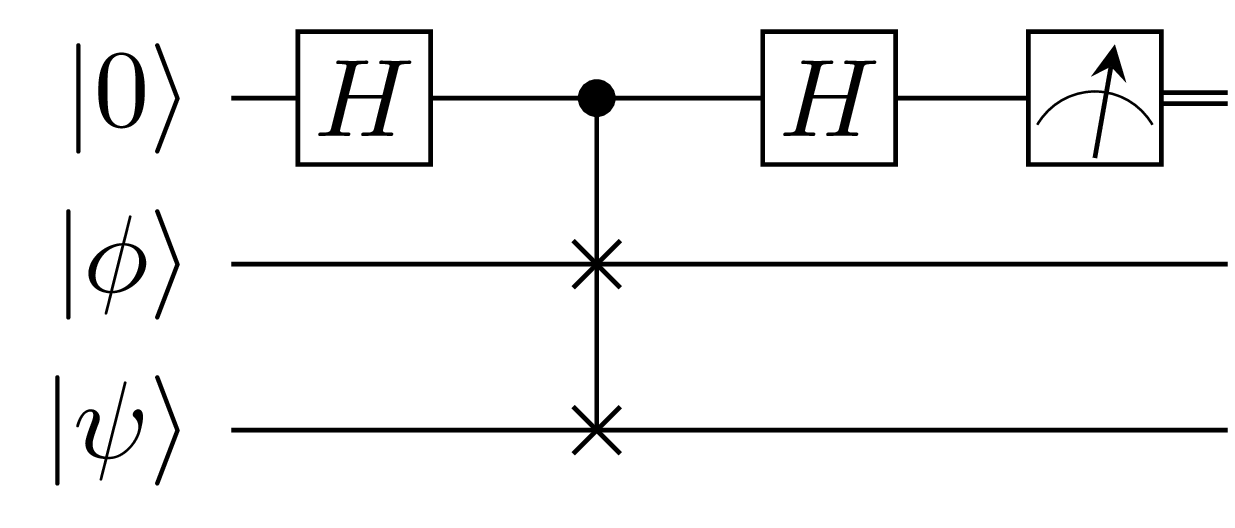} 
\caption{The circuit  implementation of the SWAP-test algorithm}\label{fig:swaptest}
\end{center}
\end{figure}

 The $0$-result probability is $Pr(\ket{0})=\bra{\Psi_0}\ket{\Psi_0}$, where $\ket{\Psi_0}=\frac{1}{2}(\ket{a}\ket{b}+\ket{b}\ket{a})$. 
Let us compute $Pr(\ket{0})$. So, $\bra{\Psi_0}=\frac{1}{2}(\bra{a}\bra{b}+\bra{b}\bra{a})$.
      
      \[ Pr(\ket{0})=\frac{1}{4}(\bra{a}\bra{b}\ket{a}\ket{b} + \bra{a}\bra{b}\ket{b}\ket{a} +\bra{b}\bra{a}\ket{a}\ket{b} + \bra{b}\bra{a}\ket{b}\ket{a})=\]
      \[=\frac{1}{4}(\bra{a}\bra{b}\ket{a}\ket{b} + \bra{a}\ket{a} +\bra{b}\ket{b} + \bra{b}\bra{a}\ket{b}\ket{a})\]
      \[=\frac{1}{4}(\bra{a}\bra{b}\ket{a}\ket{b} + 1 +1 + \bra{b}\bra{a}\ket{b}\ket{a})\]
     \[=\frac{1}{4}(\bra{a}\bra{b}\ket{a}\ket{b} + \bra{b}\bra{a}\ket{b}\ket{a}) + \frac{1}{2}\]     
     \[=\frac{1}{4}(|\bra{b}\ket{a}|\cdot \bra{a}\ket{b} + |\bra{a}\ket{b}|\cdot \bra{b}\ket{a}) + \frac{1}{2}\]
      
    \[=\frac{1}{4}(2|\bra{b}\ket{a}|^2) + \frac{1}{2} = \frac{1}{2} + \frac{1}{2}|\bra{b}\ket{a}|^2\]
      if $\ket{b}=\ket{a}$, then $Pr(\ket{0})=1$. If $\ket{b}$ and $\ket{a}$ are orthogonal, then $Pr(\ket{0})=0.5$. So, if we have several copies, then we can do the SWAP-test several times. If all invocation return $0$-result, then we can say that $\ket{b}=\ket{a}$. If there is at least one $1$-result, then $\ket{b}\neq\ket{a}$.
\section{Quantum Walks. Coined Quantum Walks}   
\subsection{Discussion. Random Walk, One-dimensional Walk.}
Firstly, let us discuss the classical base algorithm which is the Random walk. The general algorithm is the following. Assume we have a graph $G=(V,E)$ where $n=|V|$ is the number of vertices, and $m$ is the number of edges. Let us have an agent, that can be in one of the vertices.
Assume that the agent in a vertex $i$, then it chooses one of the neighbor vertices randomly. Let us define the probability of moving from a vertex $i$ to a vertex $j$ by $M[i,j]$. If there are no edges between $i$ and $j$, then $M[i,j]=0$. So, $M$ is a transition matrix for our random walk. We can say that $p=(p_1,\dots,p_n)$ is a probability destruction for vertexes. We can compute $p$ on $t$-th step by $p'$ from $(t-1)$-th step as $p=Mp'$. We assume that the agent cannot just ``jump'' from a vertex to another vertex, but can only move to neighbor vertices. We call this property ``locality''. We can use the random walk as a traversal algorithm (like Depth-first search or Breadth-first search, Section \ref{sec:dfs}). As an example, it can be a search algorithm that searches a vertex with a specific property. There are other applications of the algorithm.

 Why the algorithm is interesting? One reasonable advantage is the small size of memory. Because of the locality property, in any step, we should store only the current vertex in memory that requires $O(\log n)$ bits of memory. At the same time, if we implement one of standard traversal algorithms (Depth-first search or Breadth-first search), then you should store $O(n)$ vertices in a stack or a queue data structures for DFS or BFS respectively. That requires $O(n\log n)$ bits of memory. So, we have an exponentially smaller size of memory for a random walk. We can say \cite{b2000} that we use a more restricted computational model that is close to Branching programs (Section \ref{sec:qbp}) and 2-way probabilistic automata in which an input head can be in different positions of an input tape with some probabilities. Here the agent is an analog of the input head, moving to neighbor vertices is an analog of moving to neighbor cells of the input tape for automata and the algorithm can be in different vertices with some probabilities.
The difference is in the non-linearity of input data. Automata like computational models assume that input has a linear form (like a list or string on a tape), but random walk moves by a more general data structure that is a graph. You can read more about automata in \cite{k2019,sy2014} and about branching programs in \cite{Weg00} and in Section \ref{sec:qbp}.
 
There are many non-trivial applications of the algorithm that allow us to obtain an advantage in time or memory complexity, for example, \cite{frpu94,ksz2022,ke2022}.

As an example, we explore a simple version of a graph with a regular structure. 
Let us consider a number line with vertexes in integer points. Neighbor points are connected. We have an agent that is situated in the $0$-vertex. The agent moves with probability $q_{right}$ to the right direction and with probability $q_{left}=1-q_{right}$ to the left direction. As an example, we can take $q_{left}=q_{right}=0.5$. What happens in $t$ steps of the process? The trace of the first $3$ steps is following. Here $p_i$ is a probability of being in a vertex $i$.

\begin{itemize}
\item Step 0. $p_0=1$
\item Step 1. $p_{-1}=\frac{1}{2}$ $p_0=0$ $p_{-1}=\frac{1}{2}$
\item Step 2. $p_{-2}=\frac{1}{4}$ $p_{-1}=0$ $p_{0}=\frac{1}{2}$ $p_{1}=0$ $p_{2}=\frac{1}{4}$
\item Step 3. $p_{-3}=\frac{1}{8}$ $p_{-2}=0$ $p_{-1}=\frac{3}{8}$ $p_{0}=0$ $p_{1}=\frac{3}{8}$ $p_{2}=0$ $p_{3}=\frac{1}{8}$ 
\end{itemize}      

You can check that we obtain a normal distribution by vertices.

We have another interesting situation in the case of the number circle. 
Let us have a circle of size $s = 5$ with vertices in integer points with indexes from $0$ to $4$. Neighbor points are connected. The probability distribution of the first $99$ steps is presented in Figure \ref{fig:rw}.

\begin{figure}[ht]
\begin{center}
  \begin{tabular}{ |c | c |c|c|c|c| }
    \hline
     Step & 3  &  4&0 &1 & 2   \\ \hline
   0 &0.0  &  0.0&1.0 &0.0 & 0.0   \\ \hline 
  1 &0.0  &  0.5&0.0 &0.5 & 0.0   \\
\hline 
2 &0.25  &  0.0&0.5 &0.0 & 0.25   \\
\hline 
3 &0.125  &  0.375&0.0 &0.375 & 0.125   \\
\hline 
4 &0.25  &  0.0625&0.375 &0.0625 & 0.25   \\
\hline 
$\dots$ & &  &&&   \\
\hline 
98 &0.2  &  0.2&0.2 &0.2 & 0.2   \\
\hline 
99 &0.2  &  0.2&0.2 &0.2 & 0.2   \\
\hline 
  \end{tabular}
\end{center}
     \caption{Probability distribution on a circle of size $5$ for 0-99 steps.}  \label{fig:rw}
       \end{figure}
       
You can see that after some steps the distribution does not change. We call it a ``stationary state''. Let us have a transition matrix $M$ such that an element $M[i,j]$ is the probability of moving from the point $i$ to the point $j$. Then, the stationary state is the probability distribution $p=(p_0,\dots,p_4)$ such that $ Mp=p$. Therefore, $p$ is an eigenvector of $M$ with eigenvalue $1$. We can see that any eigenvector of $M$ with eigenvalue $1$ is a ``stationary state''.
 
\subsubsection{One-dimensional Quantum Walk}
Let us construct Quantum walk as a direct ``translation'' of Random walk to the quantum world. One quantum state is for one point. We start from the $\ket{0}$ point. Like in the Random walk algorithm, in the quantum version we want to keep the ``locality'' property. Therefore, the transformation $U$ such that
       $\ket{i}\to a\ket{i-1} + b\ket{i}+c\ket{i+1}$ for each $i$.
       
    The only ways \cite{m1996} to have a unitary transition are
      \begin{itemize}
      \item $|a|=1,b=0,c=0$
      \item $a=0,|b|=1,c=0$
      \item $a=0,b=0,|c|=1$
      \end{itemize}
      All of them are trivial. That is why we cannot directly ``translate'' the algorithm to the quantum world and should change something in the idea.
      
     Let us consider each edge of the graph as a directed edge. If an edge $(i,j)$ is not directed, then we consider it as a pair of two directed edges $i\to j$ and $i \leftarrow j$. Let us use the following notation for an edge in the case of a number line. An edge can be resented as a pair $(vertex, direction)$, where $vertex$ is an integer point, and $direction\in\{0,1\}$ is a direction of the edge, $0$ for the left, and $1$ for the right directions.
    
     We associate a quantum state with a directed edge. They are $\ket{i,0}$ and $\ket{i,1}$ for an integer point $i$.  Let us start from the $\ket{0,0}$ edge. One step consists of two transformations:
           \begin{itemize}
      \item ``Coin flip'', $C$ such that
      
        $\ket{i,0}\to a\ket{i,0} + b\ket{i,1}$,
        
        $\ket{i,1}\to c\ket{i,0} + d\ket{i,1}$.
       
       The transformation means redistribution of amplitudes (analog of probability redistribution for Random walks). 
      \item``Shift'', $S$ such that
      
      $\ket{i,0}\to \ket{i-1,0}$,
      
      $\ket{i,1}\to \ket{i+1,1}$.
      
       The transformation means moving one step.
      \end{itemize}

      Due to the ``locality'' property, both transformations affect only close states (states of the same vertex or neighbor vertices).
      After some number of steps, we measure the quantum state of the system. As a probability of a point $i$ we mean the sum of probabilities of obtaining $\ket{i,0}$ and $\ket{i,1}$. In other words, if we assume that a quantum register $\ket{\phi}$ holds an integer point and a quantum register $\ket{\psi}$ holds direction, then we take probability for obtaining $\ket{i}$ for $\ket{\phi}$. 
      
            Researchers explore different ``Coin flip'' transformations. Let, for example,  
            $C=H=\begin{pmatrix} 
\frac{1}{\sqrt{2}}&\frac{1}{\sqrt{2}}\\
\frac{1}{\sqrt{2}}&-\frac{1}{\sqrt{2}}\\
\end{pmatrix}$ 
that is a quantum analog of uniformly random choice. Another example can be 
$C=\begin{pmatrix} 
\frac{1}{\sqrt{2}}&\frac{i}{\sqrt{2}}\\
\frac{i}{\sqrt{2}}&\frac{1}{\sqrt{2}}\\
\end{pmatrix}$.
  The probability distribution for the result of measurement after $20$ steps for these coins are presented in Figure \ref{fig:1dqw} (the left side for the first coin and the right side for the second coin).

\begin{figure}[ht]
\begin{center}
\includegraphics[scale=0.6]{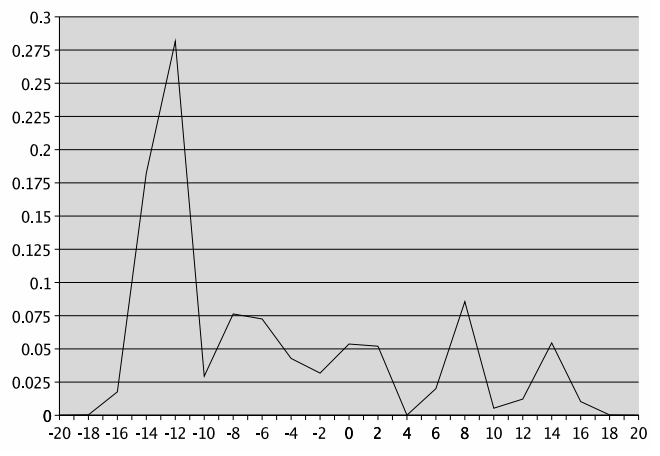}$\quad\quad$\includegraphics[scale=0.6]{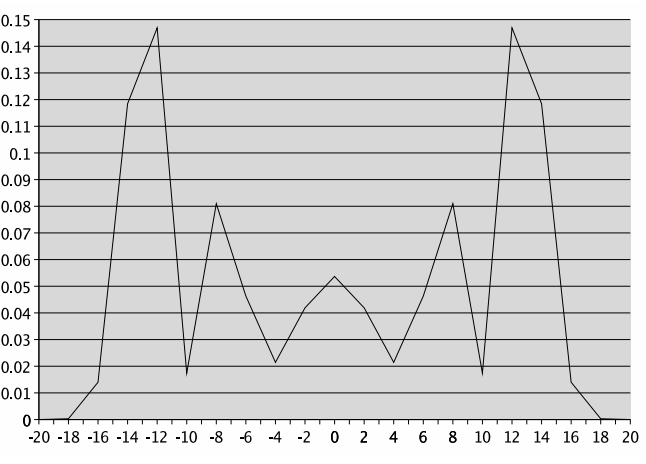}
\caption{Probability distribution for 1D quantum walk. Left side for the first coin and the right side for the second coin}\label{fig:1dqw}
\end{center}
\end{figure}
       
We can see that the probability distribution is not a normal distribution. So, we can think about some algorithmic applications that can use such a behavior.

At the same time, the two-dimensional case is more interesting than the  1D case. Let us define it in the next section.

\subsection{Two-dimensional Quantum Walk on a Grid}
Let us have an $n\times n$-grid as a graph. Assume that the grid is a torus. It means 
\begin{itemize}
\item vertices of the graph are nodes of the grid;
\item neighbors of a vertex are connected vertices in the grid;
\item (torus) the most left nodes and the most right nodes are neighbors;
\item (torus) the top nodes and the bottom nodes are neighbors.
\end{itemize}

As in the one-dimensional case, a quantum state corresponds to a directed edge. A state is $\ket{x,y,d}$, where $(x,y)$ is a coordinate of a point $(1\leq x,y\leq n)$ and $d$ is a direction of an outgoing edge. A direction $d\in\{0,\dots,3\}$ such that  $0$ is $\uparrow$, $1$ is $\downarrow$, $2$ is $\rightarrow$, and $3$ is $\leftarrow$.

As in the one-dimensional case, one step is ``coin flip'' $C$ and ``shift'' $S$.
 \begin{itemize}
 \item A coin flip $C=D$ is a Grover's diffusion (Section \ref{sec:grover}) for  $ \ket{x,y,\uparrow}, \ket{x,y,\downarrow}, \ket{x,y,\rightarrow}$ and $\ket{x,y,\leftarrow}$ states.
 
 The transformation means a redistribution of amplitudes.
 \item Shift $S$ is such that the following states are swapped
  \begin{itemize}
\item $ \ket{x,y,\uparrow}$ with $\ket{x,y+1,\downarrow},\quad\quad
\quad \ket{x,y,\downarrow}$ with $\ket{x,y-1,\uparrow}$,
\item $ \ket{x,y,\rightarrow}$with $\ket{x+1,y,\leftarrow}, \quad\quad
\quad \ket{x,y,\leftarrow}$ with $\ket{x-1,y,\rightarrow}$, 
 \end{itemize}
 The transformation means moving one step.
 \end{itemize}

 If we start from equal distribution $\ket{\psi}=\frac{1}{\sqrt{4n^2}}\sum_{x=1}^n\sum_{y=1}^n\sum_{d=1}^4\ket{x,y,d}$, then $\ket{\psi}$ is an eigenvector of $U=SU$ with eigenvalue $1$.
 
\subsection{Search Problem}

Let us consider a search problem on an $n\times n$-grid. Assume that there is a single node that is ``marked'' or ``target''. Our goal is to find the node. We can say that it is a generalization of the search problem (the problem that was solved by Grover's search algorithm, Section \ref{sec:grover}) to a 2D grid.

Classically we can solve the problem using the Random walks algorithm. Firstly, it randomly chooses one of the nodes. After that, it moves to one of the neighbor nodes with equal probability. If the agent reaches the ``marked'' vertex, then it stops. The complexity of the algorithm is $\Theta(n^2)$.

Let us discuss the quantum version of the algorithm. 
 \begin{itemize}
 \item We start from $\ket{\psi}=\frac{1}{\sqrt{4n^2}}\sum_{x=1}^n\sum_{y=1}^n\sum_{d=1}^4\ket{x,y,d}$
 \item One step is $Q$, $C$ and then $S$.
 \begin{itemize}
 \item A query $Q$ is such that $\ket{x,y,d}\to-\ket{x,y,d}$ iff $(x,y)$ is marked.
 \item A coin flip $C=D$ is a Grover's diffusion for  $ \ket{x,y,\uparrow}, \ket{x,y,\downarrow}, \ket{x,y,\rightarrow}$ and $\ket{x,y,\leftarrow}$ states.
 \item Shift $S$ is such that the following states are swapped
  \begin{itemize}
\item $ \ket{x,y,\uparrow}$ with $\ket{x,y+1,\downarrow}$,$\quad \ket{x,y,\downarrow}$ with $\ket{x,y-1,\uparrow}$,
\item $ \ket{x,y,\rightarrow}$with $\ket{x+1,y,\leftarrow}$, $\quad \ket{x,y,\leftarrow}$ with $\ket{x-1,y,\rightarrow}$, 
 \end{itemize}
 \end{itemize}
 \end{itemize}
 The complexity of searching the marked node by quantum walk is $O(n\log n)$. Detailed analysis of complexity is presented in \cite{akr2005,abnor2012,a2003}.

What if we use Grover's Search algorithm for the problem (Section \ref{sec:grover})? The search space size is $O(n^2)$. We can use $f:\{0,\dots,n-1\}\times\{0,\dots,n-1\}\to \{0,1\}$ search function, such that $f(x,y)=1$ iff $(x,y)$ is the ``marked'' vertex. So, the complexity of Grover's search algorithm is $O(\sqrt{n^2})=O(n)$. That is better than the presented complexity of the quantum walk algorithm. At the same time, Grover's search algorithm requires the quantum query model. It is an amplification of the random sampling algorithm (Section \ref{sec:grover}). It means we can sample one vertex $v$ and then another vertex $v'$ that is not a neighbor of $v$. The vertex $v'$ can be very far from $v$. If we restrict our computational model with the moving of an agent that is a generalization of 2-way quantum automata(\cite{sy2014}) or branching programs (Section \ref{sec:qbp}) on graph-like input as it was discussed in the begin of this section, then we are not allowed to ``jump'' from one vertex to another as we want. Moving to a vertex on a distance $M$ requires $M$ steps. If the diameter of the graph (the distance between the two most far vertices) is $Diam$, then the complexity of Grover's search algorithm in our restricted computational model with ``locality'' property is $O(\sqrt{n}\cdot Diam)$. It can be up to $O(n^{1.5})$. So, the Quantum walk algorithm has an advantage in the new computational model with the ``locality'' property.

Let us remind you that we solve the search problem for a single marked vertex. In the case of Grover's search algorithm, we know how to solve the search problem for several marked vertices (Section \ref{sec:grover-fixed-t}, Section \ref{sec:grover-t}). The situation with the Quantum Walks algorithm is different. Depending on different configurations of marked vertices in the grid, we can find a marked vertex or do not find it at all \cite{akr2005,ar2008,nr2018,nr2015}. This effect has algorithmic applications \cite{gnbk2021}.

The reader can find more details about the algorithm in \cite{a2003}. Discussion of different properties of the algorithm including properties according to stationary states are presented here \cite{ns2017,ns2021,pvw2016}.

The presented version of the Quantum Walk algorithm is called ``Coined quantum walk''. In fact, there are many different quantum versions of the Random Walk algorithm, including \cite{mnrs2007,s2004}.
In Section \ref{sec:qw-mnrs} we discuss one another popular version of the algorithm called MNRS-Quantum Walk.

\subsection{NAND-formula Evaluation}
In this section, we discuss one of the applications of the coined quantum walk. That is NAND-formula Evaluation.

Let us discuss the problem. It is known that any Boolean formula can be represented using several bases:
  \begin{itemize}
   \item $\&$, $\vee$, $\neg$ (In fact two operations $\&$ and $\neg$ or $\vee$ and $\neg$ are enough, but usage of all three operations is more popular and easy)
   \item $\oplus$, $\&$,$1$
   \item $\nand $(Sheffer stroke): $x\nand y = \neg (x \& y)$.
   \end{itemize}

We focus on the last one. It is interesting because the basis uses a single operation. Operations from other bases can be represented using $\nand$ operation:
\begin{itemize}
   \item $x\nand x = \neg x$ 
   \item $(x\nand x)\nand (y\nand y)= x \& y$
   \item $(x\nand y)\nand (x\nand y)= x \vee y$
    \item $x\nand (\neg x)=x\nand(x\nand x )= 1$
   \end{itemize}

The NAND formula can be naturally represented as a tree, where inner nodes correspond to NAND operation and leaves correspond to input variables. As an example, the tree representation of a formula \[((x_1 \nand x_2) \nand (x_3 \nand x_4)) \nand((x_5 \nand x_6) \nand (x_7 \nand x_8))\] is presented in Figure \ref{fig:nand-example}.

\begin{figure}[h]
\begin{center}
 \includegraphics[height=5cm]{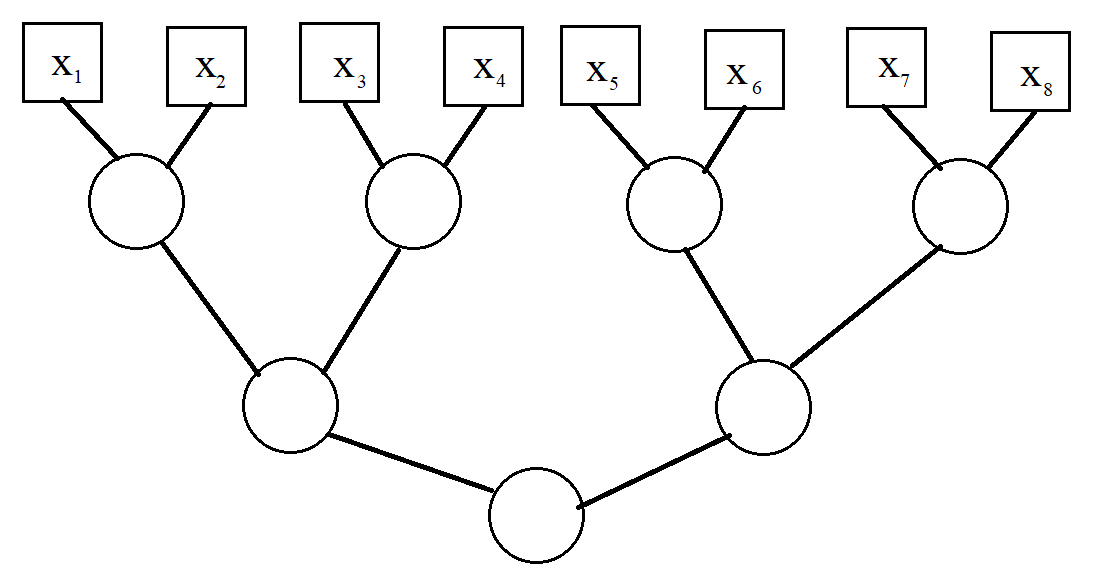}
\caption{The tree representation of a formula $((x_1 \nand x_2) \nand (x_3 \nand x_4)) \nand$ $((x_5 \nand x_6) \nand (x_7 \nand x_8))$}\label{fig:nand-example}
\end{center}
\end{figure}

The problem is the following. We have a NAND formula with $n$ non-repeated variables that is represented as a tree. We should compute the value of the formula for a given instance of input variables.

The important part of the problem is ``non-repeated variables''. If we use a variable in the formula several times, then it should be counted all these times.
Several papers discussed the case with ``shared variables'' \cite{bkt2018,ckk2012,ks2019}.

We represent the problem as a search problem. Let us consider a modification of a NAND tree. We add $L=2\lceil\sqrt{n}\rceil$ nodes to the root node. We call it tail and enumerate them $w_1,\dots,w_L$, where $w_1$ is connected with the root node and $w_2$, $w_L$ is connected with $w_{L-1}$ and $w_i$ is connected with $w_{i-1}$ and $w_{i+1}$ for $i\in\{2,\dots,L-1\}$.

We mark only leaves that correspond to $x_j=1$. Other nodes are unmarked.
the quantum state corresponds to directed edges. We define quantum states $\ket{v,d}$ of the following types:
\begin{itemize}
\item if $v=w_L$, then $d=\leftarrow$;
\item if $v\in\{w_1,\dots, w_{L-1}\}$, then $d\in\{\leftarrow,\rightarrow\}$;
\item if $v$ is an inner node of the original tree, then $d\in\{\leftarrow,\rightarrow, \downarrow\}$;
\item if $v$ is a leaf node of the original tree, then $d\in\{\downarrow\}$;
\end{itemize}

The modification of the tree is presented in Figure \ref{fig:nand-example2}.

\begin{figure}[h]
\begin{center}
 \includegraphics[height=5cm]{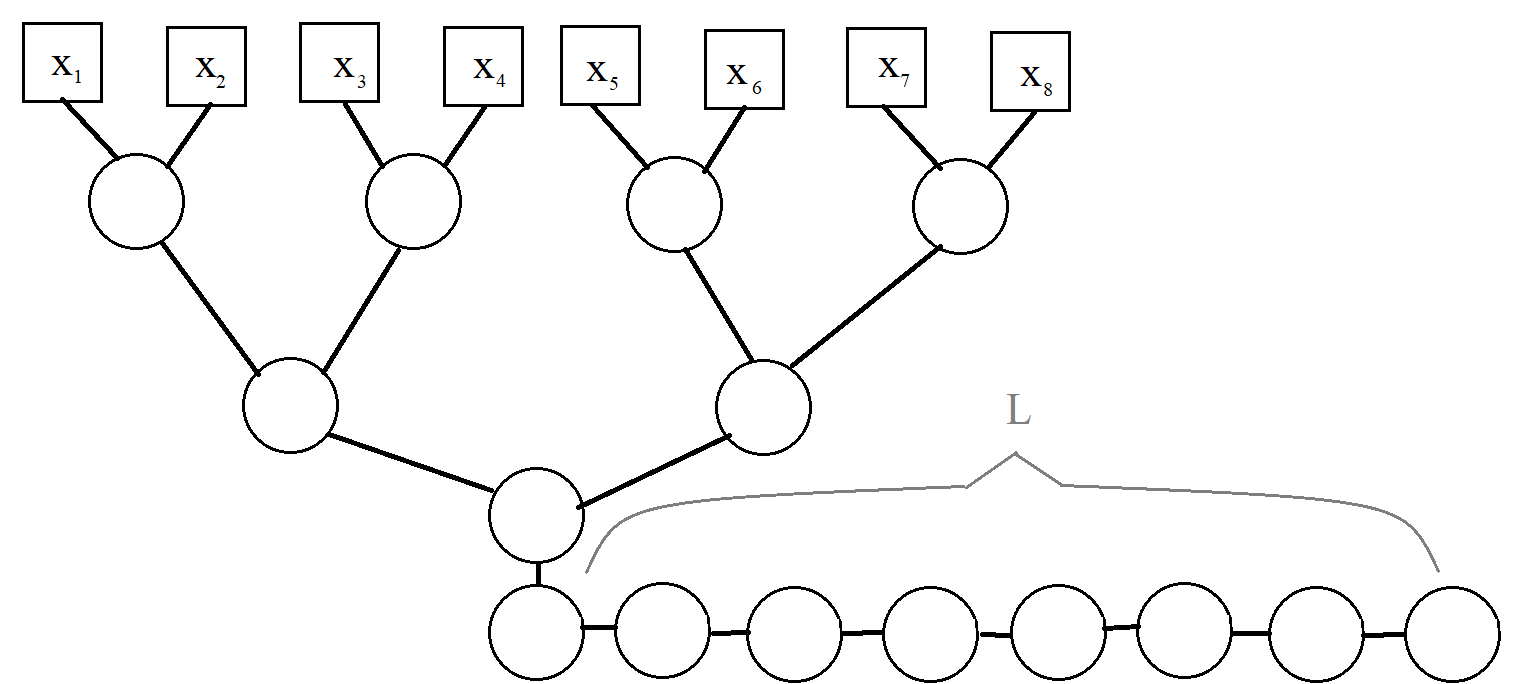}
\caption{The modified tree for a formula $((x_1 \nand x_2) \nand (x_3 \nand x_4)) \nand$ $((x_5 \nand x_6) \nand (x_7 \nand x_8))$}\label{fig:nand-example2}
\end{center}
\end{figure}

 One step of the walk is $Q$, $C$, and then $S$.
 \begin{itemize}
 \item A query $Q$ is such that $\ket{v,d}\to-\ket{v,d}$ if $v$ is marked. Note that the only marked vertices are leaf nodes that correspond to $1$-value variables. In other words, $\ket{v,d}\to(-1)^x\ket{v,d}$ if $v$ corresponds to a variable $x$.
 \item A coin flip $C=D_j$ is a Grover's diffusion for  $j$ states that correspond to $j$ outer edges of $v$. It means:
 \begin{itemize}
 \item For inner nodes, we have $C=D_3$.
 \item For a node $w_i$, $i\in\{1,\dots,L-1\}$, we have $C=D_2$.
 \item For leaf nodes and $w_L$, we have $C=D_1=I$. It is easy to see that $D_1=I$ is an identity matrix.
 \end{itemize}
 \item Shift $S$ is the redirection of edges. Formally, if $ \ket{v,d}$ and $\ket{v',d'}$ are two directed edges corresponding to one undirected edge, then $S$ swaps their amplitudes.
  \end{itemize}

After $O(T)$ steps of the Quantum walk, we measure states. If the resulting state $\ket{v,d}$ is such that $v$ is a leaf node corresponding to $x=1$ (a marked node), then the result of the formula is $1$, and $0$ otherwise.
The complexity of the algorithm is the following.
 \begin{itemize}
      \item If the tree is balanced, then complexity is $T=O(\sqrt{n})$.
      \item Otherwise, complexity is $T=O(n^{0.5+O(1/\sqrt{\log n})})$.     
   \end{itemize} 
The algorithm is presented in \cite{avrz2010,a2007,a2010}.

\section{Quantum Walk. MNRS-Quantum Walk}\label{sec:qw-mnrs}
This kind of quantum walk does not require a specific computational model but uses the quantum query model. At the same time, it can be considered as a walk by a meta graph and the walking by the meta graph requires the ``locality'' property. So, it is like a high-level algorithm with a restricted computational model for the high level, but the general query model for the main level.

The model was introduced and used in \cite{mnrs2007}. The name of this kind of Quantum walk was given by the first letters of the surnames of the authors. The interested reader can read more about the algorithm in \cite{s2008}.
\subsection{Relation of Markov Chains and Random and Quantum Walks}
This kind of quantum walk is based on the properties of Markov Chains. We can consider the Random walk algorithm as a Markov chain \cite{k2009}. States of the Markov chain are vertices of a graph. We assume that number of states is $n$ which is also the number of vertices of the graph. The probability of moving from a state $i$ to $j$ is the probability of moving from a vertex $i$ to $j$. So we use $p=(p_1,\dots,p_n)^T$ as a probability distribution vector for $n$ states of the Markov chain. An $n\times n$-matrix $M$ is a transition matrix such that  $M[i,j]$ is a probability of moving from a state $i$ to a state $j$. Let $M$ be symmetric. Then we can define $\delta$ which is a spectral gap (eigenvalue gap). The value is $\delta=1-|\lambda_2|$, where $\lambda_2$ is the eigenvalue of $M$ with the second largest magnitude.

Let us consider a search problem with a search function $f\{1,\dots,n\}\to\{0,1\}$.
We want to find any $x$ such that $f(x)=1$. Assume that there are $t$ arguments with $1$-value. Formally, $t=|\{x:f(x)=1, 1\leq x\leq n\}|$.

We can suggest three approaches to the problem. Let $\varepsilon=\frac{t}{n}$ that is the probability of finding a required $x$ on a random sampling of an argument from $\{1,\dots,n\}$.
 
 \paragraph{Solution 1. Random Sampling.} We use boosting of success probability technique for the random sampling algorithm.
      \begin{itemize}
      \item We do the following steps $O(\frac{1}{\varepsilon})$ times
      \begin{itemize}
          \item We uniformly choose $x\in_R\{1,\dots,n\}$.
          \item Then, check whether $f(x)=1$. If it is true, then we stop the process. If it is false, then continue.
      \end{itemize}
     \end{itemize}
    
     \paragraph{Solution 2. Markov Chain.} We use the Markov chain as the main search algorithm.
     \begin{itemize}
    \item We uniformly choose $x\in_R\{1,\dots,n\}$.
     \item We do the following steps $O(\frac{1}{\varepsilon})$ times:
    \begin{itemize}
     \item We check whether the walk reaches the state $x:f(x)=1$. If it is true, then we stop the process. If it is false, then continue.
     \item We do $O(\frac{1}{\delta})$ steps of the random process (random walk, Markov chain) with transition matrix $M$.
     \end{itemize}
       \end{itemize}
     
     \paragraph{Solution 2. Markov Chain. Greedy idea.} We use a greedy version of the algorithm based on the Markov chain.
      \begin{itemize}
    \item We uniformly choose $x\in_R\{1,\dots,n\}$.
     \item We do the following steps $O(\frac{1}{\varepsilon})$ times:
    \begin{itemize}
     \item We check whether the walk reaches the state $x:f(x)=1$. If it is true, then we stop the process. If it is false, then continue.
     \item We do $O(\frac{1}{\delta})$ steps of the random process (random walk, Markov chain) $M$. On each step, we check whether the walk reaches the state $x:f(x)=1$. If it is true, then we stop the process. If it is false, then continue.
     \end{itemize}
     \end{itemize}
     
     Assume that complexity of different steps is the following.
     \begin{itemize}
       \item Let $S$ be query complexity of sampling $x\in_R\{1,\dots,n\}$
     \item Let $U$ be query complexity of update (one step from a state $i$ to a neighbor state $j$)
     \item Let $C$ be query complexity of checking whether the current $x:f(x)=1$.
     \end{itemize}
     
     Then the presented solutions have the following complexity
     \begin{itemize}
     \item Solution 1. Random Sampling. We have one loop of $O(\frac{1}{\varepsilon})$ steps. Each step is sampling and checking. The complexity is \[O\left(\frac{1}{\varepsilon}(S+C)\right)\]
     
     \item Solution 2. Markov chain. Firstly, we have a sampling. Then, there are two nested loops.  The outer loop $O(\frac{1}{\varepsilon})$ steps. Each step is a checking and the nested loop of $O(\frac{1}{\varepsilon})$ steps that are updates. The complexity is \[O\left(S+\frac{1}{\varepsilon}\left(\frac{1}{\delta}U+C\right)\right)=O\left(S+\frac{1}{\varepsilon\delta}U+\frac{1}{\varepsilon}C)\right)\]
     \item Solution 3. Markov chain. Greedy idea. Firstly, we have a sampling. Then, there are two nested loops.  The outer loop of $O(\frac{1}{\varepsilon})$ steps. Each step is the nested loop of $O(\frac{1}{\varepsilon})$ steps that are updates and checking. The complexity is \[O\left(S+\frac{1}{\varepsilon\delta}(U+C)\right)=O\left(S+\frac{1}{\varepsilon\delta}U+\frac{1}{\varepsilon\delta}C)\right)\]
     
     \end{itemize}
The move from the random walk to the quantum walk is an application of the Amplitude Amplification algorithm (Section \ref{sec:amplampl}) to each repetition process. So, we have the following complexities of  the quantum versions of the algorithms:

\begin{itemize}
     \item Solution 1. Quantum version of Random Sampling. We obtain Grover's search algorithm (Section \ref{sec:grover-t}) \[O\left(\sqrt{\frac{1}{\varepsilon}}(S+C)\right)\]
     
     \item Solution 2. Quantum version of Markov chain. \[O\left(S+\sqrt{\frac{1}{\varepsilon}}\left(\sqrt{\frac{1}{\delta}}U+C\right)\right)=O\left(S+\sqrt{\frac{1}{\varepsilon\delta}}U+\sqrt{\frac{1}{\varepsilon}}C)\right)\]
     \item Solution 3. Quantum version of the greedy idea for the Markov chain.  \[O\left(S+\sqrt{\frac{1}{\varepsilon\delta}}(U+C)\right)=O\left(S+\sqrt{\frac{1}{\varepsilon\delta}}U+\sqrt{\frac{1}{\varepsilon\delta}}C)\right)\]
     
     \end{itemize}
     
     Let us discuss an application of the idea in the next section.
     \subsection{Quantum Algorithm for Element Distinctness Problem}\label{sec:elem-distinct}
     Let us define the Element Distinctness Problem
      \begin{itemize}
     \item Let us have $n$ integer variables $\{x_1,\dots,x_n\}$.
     \item We want to find $i\neq j: x_i=x_j$ or say that there is no such a pair of indexes (all elements are distinct).
      \end{itemize}
     
     The problem is connected with the Collision Problem (Section \ref{sec:collision}). The difference is the following. The statement of the Collision Problem claims that all elements are distinct ($1\!-to\!-1$ function), or we have $n/2$ pairs of equal elements ($2\!-to\!-1$ function). The statement of the Element distinctness problem claims that all elements are distinct, or we have \textbf{at least one} pair of equal elements. In the worst case, we have only one pair of equal elements, this makes the problem much harder. 
     
     It is known \cite{as2004} that if we have an algorithm for the Element Distinctness problem with query complexity $Q$, then we can construct an algorithm for the Collision problem with query complexity $O(\sqrt{Q})$. The idea is following. Let us take randomly $O(\sqrt{n})$ elements. Due to  \cite{l2005}, if the input is $2\!-to\!-1$ function, then with probability at least $0.9$ we can claim that there is at least one pair of equal elements among the taken elements. Then we can invoke an algorithm for Element Distinctness on the smaller input of size $O(\sqrt{n})$ and find the required pair. At the same time, there is no opposite connection. Therefore, we cannot use an algorithm from Section \ref{sec:collision}, and we should develop a new algorithm for Element Distinctness Problem.
     
      Firstly, let us discuss the obvious application of Grover's search algorithm. Let us define the search function $f:\{1,\dots,n\}\times\{1,\dots,n\}\to\{0,1\}$ such that $f(i,j)=1$ iff $x_i\neq x_j$. The search space size is $O(n^2)$. We can use Grover's search algorithm and obtain a quantum algorithm with $O(n)$ complexity. At the same time, classical query complexity is $\Theta(n)$ \cite{g1996} and we do not obtain any speed-up in the quantum case.
      More advanced usage of Grover's algorithm and Amplitude Amplification is \cite{b2001} that obtains $O(n^{3/4})$ query complexity.
      
      Secondly, we present the algorithm based on the quantum walks with $O(n^{2/3})$ query complexity (See \cite{a2007elementDist}) that satisfy the quantum lower bound $\Omega(n^{2/3})$ \cite{as2004}.
      
      First of all, let us present the graph that defines transitions of a Markov chain, or we can say that we invoke a random walk for this graph. Let us fix some parameter $r$ that is $1\leq r\leq n$. 
      
      \begin{itemize}
      \item a vertex $v$ corresponds to a subset $S_v\subset \{1,\dots, n\}$ of size $r$, i.e. $|S_v|=r$.
      \item two vertices $v$ and $w$ are connected by an edge iff there are two indexes $i$ and $j$ such that the difference between $S_v$ and $S_w$ only in $i$ and $j$. Formally,
      \begin{itemize}
      \item $i\in S_v$, and $j\in S_w$;
      \item $i\not\in S_w$, and $j\not\in S_v$;
      \item $S_v\backslash\{i\}=S_w\backslash\{j\}$.
\end{itemize}       
\item We assume that the edge from $v$ to $w$ is labeled by a directed pair $(i,j)$; and the edge from $w$ to $v$ is labeled by a directed pair $(j,i)$.
\end{itemize}       
The presented graph is called the Johnson graph and has $J(n,r)$ notation. 

We assume that a vertex $v$ is marked iff the corresponding set $S_v$ contains a duplicate. Formally, $v$ is marked iff $i,j\in S_v$ such that $i\neq j$ and $x_i=x_j$.

We can define the following Random walk algorithm.
\begin{itemize}
\item \textbf{Step 0}. We uniformly randomly choose a vertex of the graph. Query all variables from the set corresponding to the chosen vertex. If the vertex is marked, then we stop. Otherwise, we continue.
\item \textbf{Step t}. Assume, we are in a vertex $v$. We uniformly randomly choose one of the neighbor vertices and move to it. Let the new vertex is $w$, and the edge is labeled by $(i,j)$. For moving, it is enough to query $x_j$ and remove (forgot) $x_i$. After that, without any query, we can understand whether the vertex is marked or not because all variables of $S_w$ are already queried.
\end{itemize} 
Now we can use a quantum technique for obtaining a quantum walk.

Let us discuss each parameter of the algorithm's complexity for computing the complexity of the whole quantum algorithm. 

Let us compute the probability  $\varepsilon$ of sampling a marked vertex and the spectral gap (eigenvalue gap) $\delta$.
\begin{itemize}
\item Let us discuss the probability  $\varepsilon$. In the worst case, there is only one duplicate $x_i=x_j$, for $i\neq j$. The probability of obtaining $i$ among $r$ elements of the subset is $\frac{r}{n}$.  The probability of obtaining $j$ among other $r-1$ elements of the subset is $\frac{r-1}{n-1}$. The total probability is $\frac{r(r-1)}{n(n-1)}\approx \frac{r^2}{n^2}$. Because we are interested in asymptotic complexity, such an approximation is allowed. 
\item The spectral gap (eigenvalue gap) for Johnson graph $J(n,r)$ is already computed \cite{bh2011}[Section 12.3.2]. It is $\delta=\frac{n}{r(n-r)}$. 
\end{itemize}
Let us compute the complexity of sampling, updating, and checking ($S,U,C$).
\begin{itemize}
\item The complexity of  sampling of the first vertex $v$ is $S =O(r)$ because we should query all elements of $S_v$.
\item The complexity of update is $U=O(1)$ because we query only one new variable. 
\item The complexity of checking a vertex $v$ is $C=0$ because we already queried all variables of the set $S_v$ and can check the marking of the vertex without query.    
\end{itemize}
 
So, let us discuss the complexity of the whole algorithm:
\[\bigo{S+\sqrt{\frac{1}{\varepsilon}}\left(\sqrt{\frac{1}{\delta}}U+C\right)}=
\bigo{r+\sqrt{\frac{n^2}{r^2}}\left(\sqrt{\frac{r(n-r)}{n}}\cdot 1+0\right)}=\bigo{r+\sqrt{\frac{n(n-r)}{r}}}\]

For minimization the complexity, we can choose $r=n^{2/3}$, so the complexity is
\[\bigo{r+\sqrt{\frac{n(n-r)}{r}}}
=\bigo{n^{2/3}+\sqrt{\frac{n(n-n^{2/3})}{n^{2/3}}}}
=\bigo{n^{2/3}+n^{2/3}}=\bigo{n^{2/3}}
\]
%=\bigo{n^{2/3}+n^{2/3}}=\bigo{n^{2/3}}

In fact, if we care about time complexity, not only query complexity, then we can store all elements of a set $S_v$ for a current vertex in the set data structure. The set data structure that allows us to add, to remove elements, and to check duplicates can be implemented using a Self-balanced search tree (for example, Red-Black Tree), array, Hash Table and other data structures \cite{cormen2001}. Depending on the choice of the data structure, $U=O(1)\dots O(\log n)$, and $C=O(1)\dots O(\log n)$. So, the total complexity can be from $O(n^{2/3})$ to $O(n^{2/3}\log n)$.   

\section{Quantum Walks and Electronic Networks. Learning Graph}
Let us consider a technique of constructing a Random walk that is called Random Walks and Electronic Networks \cite{b1998,ds1984}.

Let us have a simple undirected graph $G=(V,E)$ with $n$ vertices and $m$ edges. We associate a weight $w_{(u,v)}$ with an edge $(u,v)\in E$. Let $W=\sum\limits_{(u,v)\in E}w_{(u,v)}$ be the total weight. We consider a random walk algorithm for this graph such that the probability of walking by an edge $(u,v)$ is
 $\frac{w_{(u,v)}}{\sum\limits_{(u,x)\in E}w_{(u,x)}}$.
 
 Assume that  $\xi=(\xi_1,\dots,\xi_{n})$ is a stationary state of the Random walk, where $\xi_u=\sum\limits_{(u,v)\in E}\frac{w_{(u,v)}}{2W}$. Let $\sigma=(\sigma_1,\dots,\sigma_{n})$ be an initial probability distribution on the vertices of the graph. Let $M\subseteq V$ be a set of market vertices.
We denote $H_{\sigma,M}$ as the \textbf{ hitting time} or an expected number of steps for reaching a vertex from $M$ when the initial vertex is sampled according to $\sigma$.

We define a \textbf{flow} $p_{(u,v)}$ on directed edges such that
 \begin{itemize}
 \item $p_{(u,v)}=-p_{(u,v)}$
 \item $\sigma_u = \sum\limits_{(u,v)\in E}p_{(u,v)}$
 \item $\sigma_u$ units of flow are injected into $u$, traverse through the graph, and
are removed from marked vertices.
 \end{itemize}
Energy of the flow is $\sum\limits_{(u,v)\in E}\frac{p_{(u,v)}^2}{w_{(u,v)}}$. 
\textbf{Effective resistance} $R_{\sigma,M}$ is the minimal energy of a flow from $\sigma$ to $M$.
We can discuss the electric network analog:
\begin{itemize}
\item  $w_{(u,v)}$ is Resistance.
\item  $p_{(u,v)}$ is Voltage.
\item  Effective resistance is the amount of energy that is dissipated by the electric flow.
\end{itemize}
The representation of the Random walk as an electronic network was presented in \cite{b1998,ds1984}. According to these papers
 \[H_{\xi, M}=2WR_{\xi,M}.\]
 
Let us discuss an application of the framework to quantum walks. Let us have a simple undirected graph $G=(V,E)$ that is \textbf{bipartite} with parts $A$ and $B$.

We consider the following problem. We should detect the existence of any market vertex, i.e. we should distinguish two cases: the set of marked vertices $M=\emptyset$ or $M\neq\emptyset$.

Let the initial set $A_{\sigma}=\{u:\sigma_u>0\}\subseteq A$ for a distribution $\sigma$.
We define quantum states $\{\ket{u}: u\in A_{\sigma}\}\cup\{\ket{(u,v)}: (u,v)\in E\}$.  The initial state is $\ket{\zeta}=\sum\limits_{u\in A_{\sigma}}\sqrt{\sigma_u}\ket{u}$.

The step of a quantum walk is $R_BR_A$, where $R_A=\bigoplus_{u\in A}D_u$ and $R_B=\bigoplus_{u\in B}D_u$, where  $\bigoplus$ is direct sum. The matrix $D_u$ is such that
 \begin{itemize}
 \item If $u$ is marked, then $D_u=I$
 \item If $u$ is not marked and $u\in A_{\sigma}$, then $D_u$ is the Grover's diffusion that is reflection near \[\ket{\psi}=\sqrt{\frac{\sigma_u}{C_1 R}}\ket{u} + \sum_{(u,v)\in E}\sqrt{w_{(u,v)}\ket{(u,v)}}\] for enough large constant $C_1$, and an upper bound $R$ known on the effective resistance $R_{\sigma,M}$ from $\sigma$ to $M$ for all possible sets $M$ of marked states that might appear.
 \item If $u$ is not marked and $u\not \in A_{\sigma}$, then $D_u$ is the Grover's diffusion that is reflection near \[\ket{\psi}=\sum_{(u,v)\in E}\sqrt{w_{(u,v)}\ket{(u,v)}}.\]
 \end{itemize}

The algorithm is following.
 \begin{itemize}
 \item We use Phase estimation Algorithm \cite{k1995, nc2010} for $U=R_BR_A$ transformation with preposition $\frac{1}{\sqrt{C\cdot RW}}$ for some constant $C$. The scheme is presented on Figure \ref{fig:phase-est-qw}
\begin{figure}[h]
\begin{center}
\includegraphics[width=0.5\textwidth]{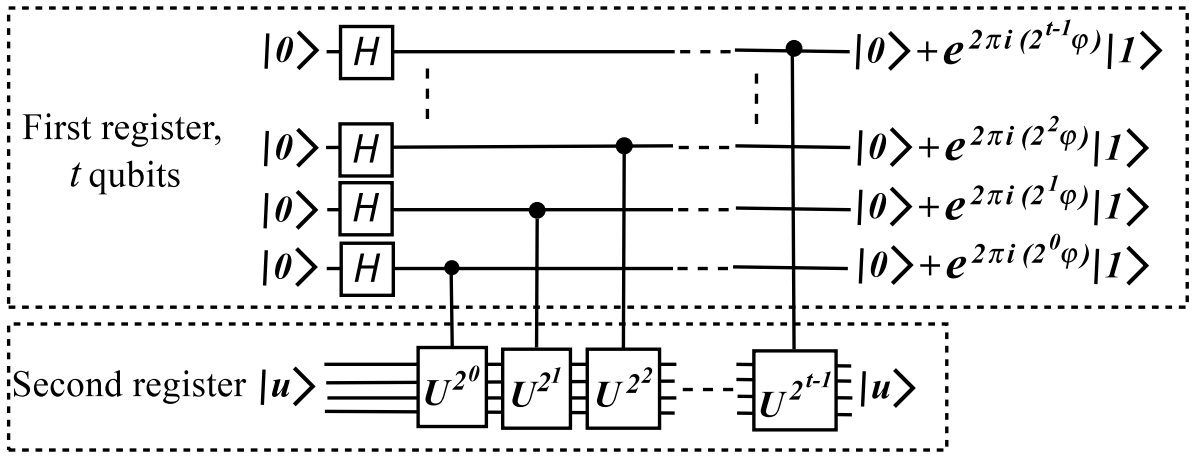}$\quad$ \includegraphics[width=0.35\textwidth]{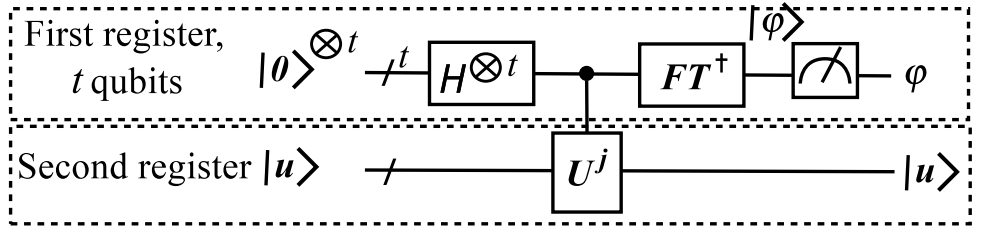}
\caption{Phase estimation for $U=R_BR_A$}\label{fig:phase-est-qw}
\end{center}
\end{figure}

\item The algorithm finds the angle $\varphi$ in eigenvalue $e^{2\pi i \varphi}$.
\item Such preposition enough to distinguish two cases:
\begin{itemize}
\item $\lambda=1$ eigenvalue that corresponds to $M=\emptyset$ and stationary state.
\item $\lambda\neq 1$ eigenvalue that corresponds to $M\neq \emptyset$.
 \end{itemize}
\end{itemize} 
 The complexity of the algorithm is $O(\sqrt{RW})$ because of complexity of the Phase estimation Algorithm.

Let us discuss two applications of this idea in Section \ref{sec:3-dist} and Section \ref{sec:qbacktracking}.
\subsection{Quantum Algorithm for 3-Distinctness Problem}\label{sec:3-dist}
We already discussed the Element distinctness problem (Section \ref{sec:elem-distinct}). Here we discuss a generalization of this problem which is the 3-Distinctness Problem:

Given $x_1,\dots,x_n\in\{0,\dots,M-1\}$ for some integer $M>n$. We want to find three different elements $i\neq j\neq z$ such that $x_i=x_j=x_z$.

Ambainis suggested \cite{a2007elementDist} an algorithm based on a technique similar to the algorithm for Element distinctness (Section \ref{sec:elem-distinct}). The complexity of the algorithm is $O(n^{3/4})$.
%todo write the correct complexity
 Here we present an algorithm based on the Electronic Networks technique \cite{bcjkm2013,b2013}. The complexity is $O^*(n^{5/7})$. Here $O^*$ hinds not only constants but $\log$ factor too.

Let us discuss the algorithm. Let a set $J\subseteq\{1,\dots,n\}$ be $\ell$-collision if $x_i=x_j$ for any $i\neq j, i,j\in J$, $|J|=\ell$, i.e. the set is indexes of equal elements. 
Let a set $J_j=\{i\in\{1,\dots,n\}: x_i=x_j\}$ be a set of elements that equal to $x_j$, including $j$ itself. The problem is detecting the existence of a $3$-collision.

Let us define some additional numerical characteristics of a sequence of numbers $x_1,\dots,x_n$.
For any $S\subseteq\{1,\dots,n\}$ and $i\in\{1,\dots,\ell\}$, let $S_i=\{j\in S:|J_j|=i\}$ be a set of indexes $j$ such that there are exactly $i$ elements equal to $x_j$. Then $r_i=|S_i|/i$ is a number of different $x_j$ such that $|J_j|=i$, i.e. there are exactly $i$ elements equal to $x_j$.
Let $\tau=(r_1,r_2,r_3)$ be the type of $S$. It is the numbers of elements that occur exactly once, twice, and three times.  
 
Assume that we have $\Omega(n)$ 2-Collisions. Let us fix some elements of type $r_1=r_2=n^{4/7}$, $r_3=0$.
Let sets with this type be
 \[V_0=\{S\subseteq\{1,\dots,n\}\mbox{ having type }(r_1,r_2,r_3)\}.\]

Let types with difference one in one of elements be $\tau_1=(r_1+1,r_2,r_3)$, $\tau_2=(r_1,r_2+1,r_3)$, $\tau_3=(r_1,r_2,r_3+1)$.
Let sets of corresponding sets and their union be
 \[V_i=\{S\subseteq\{1,\dots,n\}\mbox{ having type }\tau_i\},\mbox{ and }V=\bigcup_{i=0}^3 V_i.\]

For $i \in \{1,\dots,3\}$, let
 \[Z_i=\{(S,j):j\in\{1,\dots,n\},S\in V_{i-1},\mbox{ and }S\Delta \{j\}\not\in V\}\]
  be a set of sets from $V_{i-1}$ and elements $j$ that removes $S$ from $V$. Recall that $\Delta$ stands for the symmetric difference.  Let
\[Z=\bigcup_{i=1}^3 Z_i\] be the union of these sets.

Let us define a graph $G$ such that
\begin{itemize}
\item each vertex of the graph corresponds to one of the elements from $V\cup Z$. For simplicity of explanation, we say about vertices and corresponding elements from $V\cup Z$ as synonyms if the nature of discussing object is clear from the context.   
\item Edges of the graph are following.
\begin{itemize}
\item  A vertex $S\in V\backslash V_3$ is connected with all vertices $S\Delta\{j\}$ such that $S\Delta\{j\}\in V$, $j\in\{1,\dots,n\}$. For $S\Delta\{j\}\not\in V$ it connected with $(S,j)\in Z$.
\item A vertex $S\in V_3$ is connected to $3$ vertices from $V_{2}$ differing from $S$ in one element.
\item Each vertex $(S, j)\in Z$ is only connected to a vertex $S\in V$. 
\end{itemize}  
\item The weight of each edge is $1$.
\item The set of marked vertices is $V_3$.	
\end{itemize} 

Let us remind you that we define quantum states for vertices and edges of the graph.
The initial state is \[\ket{\zeta}=\frac{1}{\sqrt{|V_0|}}\sum_{S\in V_0}\ket{S}.\]

One step of the quantum walk is the following one.

\begin{itemize}
 \item If $u\in V_3$, then the transformation $D_u=I$, where $I$ is identity matrix.
 \item If $u\in Z$, then $D_u$ negates the amplitude of the only edge incident to $u$.
  \item Otherwise $D_u$ is Grover's diffusion.
 \end{itemize}

The complexity of the presented algorithm is $O^*(n^{5/7})$. You can find more details in \cite{bcjkm2013,b2013}.
\subsection{Quantum Backtracking}\label{sec:qbacktracking}
There is a class of problems that can be solved using a brute force algorithm of the following form.
We construct all possible sequences $a = (a_1,\dots,a_n)$, for $a_i\in\{1,\dots,d\}$ for some integer parameter $d$. Additionally, we  have a function $f':\{1,\dots,d\}^n\to \{0,1\}$ that checks whether an argument $a$ solution or not. Our goal is to find any or all arguments $a$ such that $f'(a)=1$. Let us assume that the problem is searching for any target argument. 

The complexity of checking all solutions is $O(d^n\cdot Q(f'))$, where $Q(f')$ is the complexity of computing $f'$. For simplicity, we assume that $Q(f')=O(n)$, at the same time it is not always true. If the complexity $Q(f')$ is different, then we should change this multiple in all statements of this section.  So, the complexity is $O(d^n\cdot n)$.
As an example, we can consider the Hamiltonian path problem (Section \ref{sec:hamilt}). In this problem, we generate sequences of vertices. The function $f$ checks that all vertices are presented exactly once and the sequence is a path. 

We can associate this solution with searching on $d$-nary tree of depth $n$. We call it a tree of variants.
Each vertex of the tree has $d$ outgoing edges for children. Each edge is labeled by the number from $1$ to $d$.
 A leaf vertex of the tree on level $n$ with labels on edges of the path from root $a = (a_1,\dots,a_n)$ is corresponding to the sequence $a$, and it is marked if $f(a)=1$. The goal is to search for any marked vertex in the tree. We can say that a node on level $i$ is checking all cases of assignment to a variable $a_i$.

The backtracking algorithm \cite{cormen2001} suggests us a modification of this approach. We consider partially sequences $a=(a_1,\dots,a_i,*,\dots,*)\in\{1,\dots,d,*\}^n$. Here $*$ means undefined value. In that case we define a new function $f:\{1,\dots,d,*\}^n\to\{0,1,2\}$ such that 
\begin{itemize}
\item $f(a)=0$ if $a$ is not a target sequence;
\item $f(a)=1$ if $a$ is a target sequence;
\item $f(a)=2$ if all possible complements of $a$ are not a target sequence and we should not check them. By a complement of $a$ we mean any replacing $*$ by a value from $\{1,\dots,d\}$.
\end{itemize}
 Note that $f(a)=2$ means that there are no target vertices in the subtree corresponding to $a$. At the same time, $f(a)=0$, does not guarantee that there is a target vertex in the subtree.  

The new idea for a solution is considering the ``cut'' tree of variants. A vertex corresponded to a sequence $(a_1,\dots,a_{i-1},*,\dots,*)$ on a level $i$ has only edges with value $x$ such that $f(a_1,\dots,a_{i-1},x,*,\dots,*)\neq 2$. In other words, we cut all branches with $2$-result of $f$. The algorithm is exploring the tree and is searching a marked vertex. This modification can be significantly faster. Let us consider the Hamiltonian path problem as an example again. We can understand that a sequence of vertices is wrong by a part of the sequence. If $(a_1,\dots,a_i,*,\dots,*)$ already has duplicates or it is not a path, then we can be sure that all compliments of the sequence are not valid. Such a solution has complexity $O(n!\cdot n)$ that is faster than a brute force solution with $O(n^n\cdot n)$ complexity. The complexity can be smaller if we have few edges in the considering graph.

In the general case of the backtracking algorithm the assignment of variables is not sequentially and not all the paths have the same order of assignment. We can define the technique in the following way.
Let us consider a tree such that each node of the tree corresponds to a sequence $a\in\{1,\dots,d,*\}^n$. Note, that stars and numbers can be shuffled in the sequence. Edges from the node corresponds to assignments $a_{j}\leftarrow \alpha_{j}$, where $\alpha_{j}\in\{1,\dots,d\}$ and there is $*$ on $j$-th position of $a$. The edge leads to a new node with a sequence $a'$ after the assignment $a_{j}\leftarrow \alpha_{j}$ such that $f(a')\neq 2$. We assume that we have an ``algorithm'' $h(a)$ that suggests the next assignments.
Note, that here we should not know the tree in advance, but it can be constructible using $h$. 
Assume that we know the upper bound for the size of the tree that is $T$.
The classical complexity of searching the solution or marked node in the tree is $O(T\cdot n)$, that is searching a node using the DFS algorithm (Section \ref{sec:dfs}, \cite{cormen2001}). Remind, we assume that the complexity of $f$ is $O(n)$, the same assumption we have for $h$. We assume that the total complexity of computing all outcomes of $h(a)$ is $O(n)$.

Let us discuss the quantum version of this idea. We can use Grover's search for brute force algorithm because we can enumerate all elements of search space and compute the elements by their number. Similar ideas were discussed in Section \ref{sec:hamilt} for Hamiltonian Path Problem. At the same time, usage of Grover's search for a general backtracking technique is a hard issue because we cannot enumerate all vertices and compute a vertex by its index without constructing whole the tree. If we do not know the tree in advance and the only way is constructing using the $h$ function, then we do not know the number of nodes in each brunch and we cannot compute an index of a node from the left, for example, using standard technique. 
In this section, we present an algorithm based on Quantum Walks, that achieves almost quadratic speed-up \cite{m2018}. 

We consider the tree as a bipartite graph (as the technique requires), where part $A$ is nodes on even levels (the distance from the root is even); part $B$ is nodes on odd levels. We start with full amplitude in the state corresponding to the root. The coin $C_v$ for a node $v$ is the following.
 \begin{itemize}
 \item If $v$ is marked, then $C_v=I$.
 \item If $v$ is not marked and it is not the root, then $C_v$ is Grover's diffusion for children and itself.
 \item If $v$ is the root, then $C_v$ is Grover's diffusion for children. 
 \end{itemize}
 We consider the detection of marked vertex problem. We want to check whether a marked vertex exists in the tree. 
Apply $O(\sqrt{T})$ and use phase estimation for the estimation of whether eigenvalue $1$ (stationary state) or not.
 We can find the solution itself using Binary Search on each level.

The query complexity of the algorithm is $O(\sqrt{T}n^{1.5}\log n\log(1/\varepsilon))$ for $0\leq \varepsilon\leq 1$ error probability.
The technique and algorithm were developed in \cite{m2018,
m2020}. A similar problem in Game theory was discussed in \cite{ak2017}.

\paragraph*{Acknowledgements.}
A part of the study was supported by Kazan Federal University for the state assignment in the sphere of
scientific activities, project No. 0671-2020-0065. 
\bibliographystyle{alpha}
\bibliography{tcs}

\end{document}